\def\figurename{Figure} % Replace the colon that normally appears after the Figure number by a period.
\renewcommand{\fnum@figure}[1]{\figurename~\thefigure.}
\def\tablename{Table} % Replace the colon that normally appears after the Figure number by a period.
\renewcommand{\fnum@table}[1]{\tablename~\thetable.}
\newtheorem{theorem}{Theorem}[section]
\newtheorem{lemma}[theorem]{Lemma}
\newtheorem{corollary}[theorem]{Corollary}
\newtheorem{proposition}[theorem]{Proposition}
\theoremstyle{definition}
\newtheorem{definition}[theorem]{Definition}
\newtheorem{example}[theorem]{Example}
\theoremstyle{remark}
\numberwithin{equation}{section}
\begin{document}
%\vskip 0.4in
\title{\bfseries{Extremum conditions for functionals  involving higher derivatives of several variable vector valued functions}}
\author{\bfseries\scshape Mahouton Norbert Hounkonnou\thanks{E-mail address: norbert.hounkonnou@cipma.uac.bj}\\
International Chair in Mathematical Physics
and Applications\\ (ICMPA-UNESCO Chair)\\University of Abomey-Calavi\\072 B.P. 50  Cotonou, Republic of Benin\\ \\
\bfseries\scshape Pascal Dkengne Sielenou\thanks{E-mail address: sielenou$\_$alain@yahoo.fr}\\
International Chair in Mathematical Physics
and Applications\\ (ICMPA-UNESCO Chair)\\University of Abomey-Calavi\\072 B.P. 50  Cotonou, Republic of Benin\\ \\
%\bfseries\scshape Author 3\thanks{E-mail address: author3@math.univ.yzedu}\\
%Department of Mathematics\\YZ University\\Washington, DC 20059,
%USA
 \\ \\}

\date{}
\maketitle \thispagestyle{empty} \setcounter{page}{1}

% ------- [First Page Running Head] - place it immediately after title! ------
\thispagestyle{fancy} \fancyhead{}
%\fancyhead[L]{{\bf {\Huge C}ommunications in {\Huge M}athematical {\Huge A}nalysis}\\
%{\it Special Volume in Honor of Prof. Peter Lax}\\
%Volume .., pp. {\thepage--\pageref{lastpage-01}} (2011)\\
%ISSN \ 1938-9787
%} % put \label{lastpage-xx} on the last page!
%\fancyhead[R]{\small \sf www.commun-math-anal.org\\}\fancyfoot{}
\renewcommand{\headrulewidth}{0pt}

\begin{abstract} \noindent
This paper addresses both  necessary and relevant sufficient extremum conditions for  a variational problem
defined by a smooth Lagrangian, involving higher derivatives of several variable vector valued functions.  A general formulation of
first order necessary extremum conditions for variational problems  with (or without) constraints is given.
Global  Legendre second order  necessary  extremum conditions are provided as well as new general explicit formula for second
order sufficient extremum condition which does not require the notion of conjugate points as in the Jacobi sufficient condition.
\end{abstract}

\noindent {\bf AMS Subject Classification:} 49-01, 49J10, 49J40, 35A40, 35A15.

\vspace{.08in} \noindent \textbf{Keywords}: Variational problems, critical points, extremum conditions.

\section{Introduction}

The calculus of variations encompasses a very broad range of mathematical applications.
The methods of variational analysis can be applied to an enormous variety of physical systems,
whose equilibrium configurations inevitably minimize or maximize a suitable functional
which typically represents the potential energy of the system.
The critical functions are characterized as solutions to a system of partial differential equations,
known as the Euler-Lagrange equations associated with the variational principle.
Each solution to the  problem specified by the Euler-Lagrange
equations subject to appropriate boundary conditions is thus a candidate for extrema of the functional defining the variational
problem.  In many applications, the Euler-Lagrange boundary value problem suffices to
single out the physically relevant solutions, and one does not need to press onto the considerably
more difficult second variation.

In general, the solutions to the Euler-Lagrange boundary value problem are critical
functions for the functional defining the variational problem, and hence include all (smooth) local and global extrema.
The determination of which solutions are genuine minima or maxima requires  further analysis
of the positivity properties of the second variation.
Indeed, as stated  in \cite{r00}, a complete analysis of the positive definiteness of the second
variation of multi-dimensional variational problems is quite complicated, and still awaits
a completely satisfactory resolution!
This  is thus a reason for which  second order conditions of extrema are  customary  established only for
functional whose Lagrangian involves dependent variables together with at most their first order derivatives \cite{r00,r01,r02,r03,r04}.
The aim  of this paper is to give some satisfactory expressions of the second order extremum conditions for a
functional whose Lagrangian also depends on the higher order derivatives of the dependent variables.

%%%%%%%%%%%%%%%%%%%%%%%%%%%%%%%%%%%%%%%%%%%%%%%%%%%%%%%%%%%%%%%%%%%%%%%%%%%%%%%%%%%%%%%%%%%%%%%
\section{Brief review of known results}
%----------------------------------------------
\subsection{Holonomic constraints}
We  consider functional of the form
\begin{equation}\label{eqiii1}
 \mathcal{F}(u)=\int_{a}^{b} F\left(x,u(x),u'(x)\right) dx,
 \end{equation}
where $u\in \mathcal{C}^2\left(\overline{I},\mathbb{R}^N\right),$ and $I=]a,b[.$
We  demand that $u$ satisfies a holonomic constraint
\begin{equation}\label{eqiii2}
 g(x,u(x))=0, \quad a\leq x\leq b.
 \end{equation}
\begin{theorem}[\cite{r04}]
Suppose that $F\in \mathcal{C}^2\left(\overline{I}\times\Omega\right),$ where $\Omega$ is an open
set in $\mathbb{R}^{2N}.$ Suppose that $g\in \mathcal{C}^2\left(\overline{I}\times W\right),$ where
$W\subset \mathbb{R}^N$ and that $\nabla_u g(x,u)\neq 0$ on the set where $g(x,u(x))=0.$ Suppose that
$u\in \mathcal{C}^2\left(\overline{I}, W\right)$    is a local extremum  for $\mathcal{F},$ subject to
the holonomic constraint in (\ref{eqiii2}).  Then there is a function $\lambda\in \mathcal{C}\left(\overline{I}\right)$
such that $u$ is an extremum of the functional
\begin{equation}\label{eqiii3}
 \mathcal{G}(u)=\int_{a}^{b}\left[ F\left(x,u(x),u'(x)\right) +\lambda(x) g(x,u(x))\right]dx.
 \end{equation}
\end{theorem}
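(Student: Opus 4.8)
The plan is to reduce the constrained problem to an Euler--Lagrange equation by testing only against variations tangent to the constraint surface, and then to read off the multiplier $\lambda$ from a pointwise linear‑algebra argument (the fundamental lemma of the calculus of variations). Throughout, write $m(x):=\nabla_u g(x,u(x))$, which by hypothesis is nonzero on $\overline{I}$.

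\emph{Step 1: admissible variations.} Fix $\varphi\in\mathcal{C}^1_c(I,\mathbb{R}^N)$ with $m(x)\cdot\varphi(x)=0$ for all $x$. I would promote $\varphi$ to a genuine one‑parameter family of admissible competitors by solving, for each $x$, the scalar equation $g\big(x,\;u(x)+\epsilon\varphi(x)+\mu\,m(x)/|m(x)|^2\big)=0$ for $\mu=\mu(x,\epsilon)$; since the $\mu$‑derivative of the left side equals $1$ at $\epsilon=\mu=0$, the implicit function theorem (applied uniformly on the compact set $\overline{I}$, using $g\in\mathcal{C}^2$) yields a $\mathcal{C}^1$ solution with $\mu(x,0)=0$ and, crucially, $\partial_\epsilon\mu(x,0)=-m(x)\cdot\varphi(x)=0$. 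Setting $u(x,\epsilon):=u(x)+\epsilon\varphi(x)+\mu(x,\epsilon)\,m(x)/|m(x)|^2$ gives a $\mathcal{C}^2$ family lying on the constraint, equal to $u$ wherever $\varphi$ vanishes (hence near $a$ and $b$, and with values remaining in $W$ and the relevant domain for $|\epsilon|$ small), and satisfying $\partial_\epsilon u(x,0)=\varphi(x)$.

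\emph{Step 2: first variation.} Because $u$ is a local extremum on the constraint set, $\frac{d}{d\epsilon}\mathcal{F}(u(\cdot,\epsilon))\big|_{\epsilon=0}=0$; since the correction term contributes nothing to this derivative (as $\partial_\epsilon\mu(\cdot,0)=0$), only $\varphi$ survives, and integrating by parts (legitimate because $F\in\mathcal{C}^2$, $u\in\mathcal{C}^2$ and $\varphi$ is compactly supported) gives
\[
\int_a^b \Big[\nabla_u F(x,u,u')-\frac{d}{dx}\nabla_{u'}F(x,u,u')\Big]\cdot\varphi(x)\,dx=0
\]
for every such $\varphi$. Denote the bracketed $\mathbb{R}^N$‑valued continuous function by $L(x)$.

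\emph{Step 3: extracting $\lambda$, and conclusion.} I claim $L(x)$ is parallel to $m(x)$ at each point. If not, then at some $x_0\in I$ the component of $L(x_0)$ orthogonal to $m(x_0)$ is a nonzero vector $w$; by continuity of $L$ and $m$ one constructs a bump variation $\varphi$ supported in a small neighborhood of $x_0$, pointwise orthogonal to $m(\cdot)$ and essentially aligned with $w$, for which $\int_a^b L\cdot\varphi>0$, contradicting the identity above. Hence $L(x)=-\lambda(x)m(x)$ for a uniquely determined function $\lambda$ on $I$; dividing by a component of $m$ that is locally nonvanishing exhibits $\lambda$ as a ratio of continuous functions, and since $L$ and $m$ extend continuously to $\overline{I}$ with $m\neq0$ there, $\lambda\in\mathcal{C}(\overline{I})$. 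The resulting identity $\nabla_u F+\lambda\nabla_u g-\frac{d}{dx}\nabla_{u'}F=0$ on $\overline{I}$ is precisely the Euler--Lagrange equation of the Lagrangian $F+\lambda g$ (the added term drops out of the $\nabla_{u'}$ contribution because $g$ does not depend on $u'$), so $u$ is an extremal of $\mathcal{G}$, which is the assertion. I expect the main obstacle to be Step 1: checking that the constraint‑correcting family is genuinely $\mathcal{C}^2$, supported where $\varphi$ is, and — most importantly — that it does not perturb the first variation, the vanishing $\partial_\epsilon\mu(x,0)=0$ resting exactly on the linearized‑constraint condition imposed on $\varphi$. The localized construction in Step 3 of test functions that remain pointwise orthogonal to $m$ while probing a prescribed direction $w$ is a second, milder technical point.
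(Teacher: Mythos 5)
This theorem is quoted in the paper's ``Brief review of known results'' section from reference \cite{r04} and is not proved there, so there is no in-paper argument to compare against; your proposal must stand on its own. It does: what you give is the standard Lagrange-multiplier argument (restrict to variations tangent to the constraint, obtain the weak Euler--Lagrange identity on the tangent distribution, then use the fundamental lemma pointwise to conclude that $\nabla_u F-\tfrac{d}{dx}\nabla_{u'}F$ is proportional to $\nabla_u g$ and read off $\lambda$ as the continuous ratio $-\bigl(\nabla_uF-\tfrac{d}{dx}\nabla_{u'}F\bigr)\cdot m/|m|^2$), and each step is sound, including the key observation that $\partial_\epsilon\mu(x,0)=0$ so the constraint-restoring correction is invisible to the first variation, and the bump construction $\varphi=\eta\,P(\cdot)w$ with $P$ the projection onto $m^\perp$ in Step 3.

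The one point you should tighten is the regularity issue you yourself flag in Step 1. As built, $u(\cdot,\epsilon)=u+\epsilon\varphi+\mu(\cdot,\epsilon)\,m/|m|^2$ is only $\mathcal{C}^1$ in $x$, because $m=\nabla_ug(\cdot,u(\cdot))$ is only $\mathcal{C}^1$ when $g\in\mathcal{C}^2$; so your competitors do not lie in the admissible class $\mathcal{C}^2(\overline{I},W)$ of the theorem. This is repairable rather than fatal: locally replace the correction direction $m(x)/|m(x)|^2$ by a fixed coordinate vector $e_j$ with $g_{u^j}\neq 0$ on a neighbourhood (patching with a partition of unity); then $(x,\epsilon,\mu)\mapsto g(x,u+\epsilon\varphi+\mu e_j)$ is jointly $\mathcal{C}^2$, the implicit function $\mu$ is $\mathcal{C}^2$, and the family is genuinely admissible, while the tangency condition on $\varphi$ and the identity $\partial_\epsilon\mu(\cdot,0)=0$ are unchanged. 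With that substitution (or with the weaker but common convention that local extremality is tested against $\mathcal{C}^1$ competitors) your proof is complete.
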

\textsc{Remark.}
 The Lagrangian of the functional $\mathcal{G}$ in (\ref{eqiii3}) is
 $$ G(x,u,u')=F(x,u,u')+\lambda(x) g(x,u)  $$
 and the Euler-Lagrange equations are
 $$ F_{u^j}+\lambda g_{u^j}-\frac{d}{dx}F_{u'^j}=0,\quad j=1,2,  \cdots  ,N.  $$
%-------------------------------------------------------------------------------
\subsection{Nonholonomic constraints} $ $\\
\begin{theorem}[\cite{r04}]
Suppose that $F,$ and $g^j$ for $j=1,2,  \cdots  ,m$ belong to $\mathcal{C}^3\left(\overline{I}\times\Omega,\mathbb{R}     \right),$
where $\Omega\in \mathbb{R}^{2N}$ and that $u\in \mathcal{C}^2([a,b],\mathbb{R}^N)$ is a local extremum of the functional
\begin{equation}\label{eqiii4}
 \mathcal{F}(u)=\int_{a}^{b} F\left(x,u(x),u'(x)\right) dx,
 \end{equation}
subject to the nonholonomic constraints
$$  g^j(x,u(x),u'(x))=0,\quad j=1,2,  \cdots  ,m. $$
Suppose that the constraints together with $u$ satisfy the following properties.
\begin{itemize}
\item[(1)] The matrix
$$ D_ug(x,u,u')=\left(  \frac{\partial g^j (x,u,u')}{\partial u'^k }  \right)   $$
has rank $m$ for $a \leq x \leq b;$
\item[(2)] The only solutions to the system of differential equations
$$  \sum_{j=1}^m  \left[ \left( g^j_{u^k}-\frac{d }{dx}g^j_{u'^k}   \right)\mu_j-g^j_{u'^k}\frac{d \mu_j}{dx}      \right]=0,\quad k=1,2, \cdots  ,N   $$
are $\mu_1(x)=\mu_2(x)=  \cdots  =\mu_m(x)=0.$
\end{itemize}
Then there exist functions $\lambda_1,\lambda_2,   \cdots, \lambda_m$ defined on $[a,b]$ such that $u$ is an extremum for the
functional with Lagrangian
$$ G(x,u,u')=F(x,u,u')+\sum_{j=1}^m\lambda_j(x)g^j(x,u,u'). $$
\end{theorem}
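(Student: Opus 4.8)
\noindent\emph{Sketch of the intended proof.} The plan is to reduce the constrained problem to a free one by exhibiting the multipliers through the Lagrange multiplier principle, with hypothesis~(1) playing the role of the constraint qualification and hypothesis~(2) that of the normality condition ruling out degenerate (\emph{abnormal}) multipliers. Throughout, after relabelling the components of $u$, hypothesis~(1) lets me assume that the $m\times m$ matrix $B(x)=\big(g^j_{u'^k}(x,u(x),u'(x))\big)_{1\le j,k\le m}$ is invertible for every $x\in[a,b]$.

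First I would linearize. For $\eta\in\mathcal{C}^2([a,b],\mathbb{R}^N)$ with $\eta(a)=\eta(b)=0$, the curve $u_\varepsilon=u+\varepsilon\eta$ respects the constraints to first order precisely when
\[
  \ell^j[\eta]:=\sum_{k=1}^N\Big(g^j_{u^k}\eta^k+g^j_{u'^k}(\eta^k)'\Big)=0,\qquad j=1,\dots,m .
\]
Using the invertibility of $B$, one checks that such an $\eta$ is the initial velocity of a genuine $\mathcal{C}^2$ one-parameter family lying in the constraint set and keeping the endpoints fixed: writing $u_\varepsilon=u+\varepsilon\eta+w_\varepsilon$ and imposing $g^j(x,u_\varepsilon,u_\varepsilon')\equiv0$, the $m$ constraints determine $m$ of the components of $w_\varepsilon=O(\varepsilon^2)$ from a first-order linear ODE system governed by $B$, while the remaining $N-m$ components and the free endpoint data absorb the requirement $w_\varepsilon(a)=w_\varepsilon(b)=0$. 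Since $u$ is a constrained local extremum, $\frac{d}{d\varepsilon}\mathcal{F}(u_\varepsilon)\big|_{0}=0$, that is
\[
  \int_a^b\sum_{k=1}^N\Big(F_{u^k}\eta^k+F_{u'^k}(\eta^k)'\Big)\,dx=0
\]
for every admissible $\eta$; since $F,g^j\in\mathcal{C}^3$ and $u\in\mathcal{C}^2$ make $\frac{d}{dx}F_{u'^k}$ and $\frac{d}{dx}g^j_{u'^k}$ continuous, an integration by parts recasts this as $\int_a^b\sum_k\big(F_{u^k}-\frac{d}{dx}F_{u'^k}\big)\eta^k\,dx=0$.

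Next I would produce the multipliers and close the argument. Set $G=F+\sum_{j=1}^m\lambda_j g^j$ and $E_k[G]=G_{u^k}-\frac{d}{dx}G_{u'^k}$. Imposing $E_1[G]=\dots=E_m[G]=0$ is a first-order linear ODE system for $(\lambda_1,\dots,\lambda_m)$ with an invertible leading matrix (essentially $B$), so it admits a $\mathcal{C}^1$ solution $(\lambda_1,\dots,\lambda_m)$. For admissible $\eta$ with $\eta(a)=\eta(b)=0$ we have $\delta\mathcal{G}(u;\eta)=\delta\mathcal{F}(u;\eta)+\sum_j\int_a^b\lambda_j\,\ell^j[\eta]\,dx=0+0$, while an integration by parts gives $\delta\mathcal{G}(u;\eta)=\int_a^b\sum_{k=m+1}^N E_k[G]\,\eta^k\,dx$; it remains to deduce that $E_k[G]=0$ for $k=m+1,\dots,N$ as well. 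This is where hypothesis~(2) is used: were the reduced first-order identity not to force these to vanish, one could manufacture a nonzero $(\mu_1,\dots,\mu_m)$ satisfying $\sum_j\big[(g^j_{u^k}-\frac{d}{dx}g^j_{u'^k})\mu_j-g^j_{u'^k}\mu_j'\big]=0$ for all $k$, i.e., an \emph{abnormal} multiplier making the Euler--Lagrange expression of $\sum_j\mu_j g^j$ vanish identically, contradicting~(2). (Equivalently, one may invoke the general multiplier rule for the Lagrange problem, whose hypothesis is~(1), to obtain $\lambda_0 F+\sum_j\lambda_j g^j$ with $(\lambda_0,\lambda_1,\dots,\lambda_m)\neq0$; hypothesis~(2) rules out $\lambda_0=0$, and after rescaling one takes $\lambda_0=1$.) Hence $u$ is an extremum for the functional with Lagrangian $G=F+\sum_{j=1}^m\lambda_j g^j$.

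The step I expect to be the real obstacle is exactly the velocity dependence of the constraints: contrary to the holonomic case, a variation $\eta$ that respects the constraints only to first order need not be tangent to any exactly constrained curve with the prescribed endpoints, so both the construction of admissible variations above and the passage from the reduced identity to the full system of Euler--Lagrange equations rely on the nondegeneracy encoded in hypothesis~(2); one must also keep track of the (mild) regularity of the multipliers, the $\mathcal{C}^1$ bound following from~(1) because the equations $E_k[G]=0$, $k\le m$, express $(\lambda_1',\dots,\lambda_m')$ continuously in terms of $(\lambda_1,\dots,\lambda_m)$.
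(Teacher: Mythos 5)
A preliminary remark: the paper does not prove this theorem. It appears in the ``Brief review of known results'' section and is quoted, with attribution, from the reference of Polking; there is no in-paper argument to compare yours against. Judged on its own terms, your sketch follows the classical route for the Lagrange problem, and the skeleton is sound: hypothesis (1) is indeed the constraint qualification, hypothesis (2) the normality condition, and defining $\lambda_1,\dots,\lambda_m$ by solving the first $m$ Euler--Lagrange equations $E_k[G]=0$, $k\le m$, as a linear first-order ODE system with leading matrix $-{}^{T}B$ is the standard and correct move, as is the identity $\delta\mathcal{G}(u;\eta)=\delta\mathcal{F}(u;\eta)+\sum_j\int_a^b\lambda_j\,\ell^j[\eta]\,dx$.

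The difficulty is that the two steps you yourself flag as ``the real obstacle'' are exactly the ones the sketch does not carry out, and the mechanisms you propose for them would fail as stated. (a) For the lifting step you claim that ``the remaining $N-m$ components and the free endpoint data absorb the requirement $w_\varepsilon(a)=w_\varepsilon(b)=0$.'' This cannot work: once $\eta^{m+1},\dots,\eta^N$ (resp.\ the corresponding components of $w_\varepsilon$) are prescribed, the first $m$ components are determined by an initial-value problem from $x=a$, so their values at $x=b$ are forced and generically nonzero; arranging that they vanish at $b$ is a controllability statement about the linearized constraint system, and it is precisely here that hypothesis (2) must enter, via the duality between the reachable set of that system and the solutions of the adjoint system. (b) The closing step ``one could manufacture a nonzero $(\mu_1,\dots,\mu_m)$'' is asserted without a construction, yet it is the heart of the proof. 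The standard argument is: the functional $\zeta\mapsto\int_a^b\sum_{k>m}E_k[G]\,\zeta^k\,dx$ vanishes on the kernel of the finite-rank endpoint map $\zeta\mapsto\bigl(\eta^1(b),\dots,\eta^m(b)\bigr)$ and hence factors through it; expressing the endpoint map by variation of parameters converts the factoring vector $c\in\mathbb{R}^m$ into a function $\mu$ solving the first $m$ adjoint equations and satisfying $E_k[G]=\sum_j\bigl[(g^j_{u^k}-\frac{d}{dx}g^j_{u'^k})\mu_j-g^j_{u'^k}\mu_j'\bigr]$ for $k>m$; absorbing $\mu$ into $\lambda$ (or invoking (2) to force $\mu=0$) then gives $E_k[G]=0$ for all $k$. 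Everything that distinguishes the nonholonomic case from the holonomic one is concentrated in (a) and (b), so until these are supplied the proposal is a correct plan rather than a proof.
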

%-------------------------------------------------------------
\subsection{The Legendre condition} $ $\\
\begin{theorem}[\cite{r04}]
Suppose that $u$ is a local, weak minimum for the functional
$$
 \mathcal{F}(u)=\int_{a}^{b} F\left(x,u(x),u'(x)\right) dx.
 $$
Then
\begin{equation}\label{eqiii5}
 \sum_{j,k=1}^N F_{u'^ju'^k}\left(x,u(x),u'(x)\right)\xi^j\xi^k \geq 0 ,\quad \forall\, a\leq x \leq b,\,\, \forall\,\xi\in \mathbb{R}^N.
 \end{equation}
\end{theorem}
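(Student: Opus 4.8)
The strategy is the classical one: since $u$ is a local weak minimum, the second variation of $\mathcal F$ at $u$ must be nonnegative for all admissible variations, and I will extract the pointwise inequality (\ref{eqiii5}) by testing against variations that concentrate near an arbitrary point $x_0\in(a,b)$ and oscillate rapidly. First I would fix $\varphi\in\mathcal C^\infty_c(I,\mathbb R^N)$ and consider $\mathcal F(u+t\varphi)$; Taylor-expanding in $t$ gives $\mathcal F(u+t\varphi)=\mathcal F(u)+t\,\delta\mathcal F(u)[\varphi]+\tfrac{t^2}{2}\,\delta^2\mathcal F(u)[\varphi]+o(t^2)$, and local weak minimality forces $\delta^2\mathcal F(u)[\varphi]\ge 0$, where

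\begin{equation}\label{eqiii6}
\delta^2\mathcal F(u)[\varphi]=\int_a^b\Big[\sum_{j,k}F_{u^ju^k}\varphi^j\varphi^k+2\sum_{j,k}F_{u^ju'^k}\varphi^j\varphi'^k+\sum_{j,k}F_{u'^ju'^k}\varphi'^j\varphi'^k\Big]\,dx,
\end{equation}

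all derivatives of $F$ being evaluated along $\big(x,u(x),u'(x)\big)$.

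Next, to isolate the top-order term, I would use a concentrating oscillatory test function of the standard Weierstrass type. Given $x_0\in(a,b)$, $\xi\in\mathbb R^N$, and small $\varepsilon>0$, set $\varphi_\varepsilon(x)=\varepsilon\,\xi\,\zeta\!\big(\tfrac{x-x_0}{\varepsilon}\big)$ for a fixed nonzero $\zeta\in\mathcal C^\infty_c(\mathbb R)$ supported in $(-1,1)$; then $\varphi_\varepsilon$ is $O(\varepsilon)$ while $\varphi_\varepsilon'$ is $O(1)$ and supported in an $\varepsilon$-neighbourhood of $x_0$. Substituting into (\ref{eqiii6}), the $F_{u^ju^k}$ term is $O(\varepsilon^2)$, the cross term $F_{u^ju'^k}$ is $O(\varepsilon)$, while the quadratic-in-$\varphi'$ term is, after the change of variables $s=(x-x_0)/\varepsilon$, equal to $\varepsilon\big(\sum_{j,k}F_{u'^ju'^k}(x_0,u(x_0),u'(x_0))\,\xi^j\xi^k\big)\int_{-1}^1\zeta'(s)^2\,ds+o(\varepsilon)$ by continuity of the second derivatives of $F$ and of $u,u'$. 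Dividing the inequality $\delta^2\mathcal F(u)[\varphi_\varepsilon]\ge0$ by $\varepsilon$ and letting $\varepsilon\to 0$ yields $\big(\sum_{j,k}F_{u'^ju'^k}(x_0,u(x_0),u'(x_0))\,\xi^j\xi^k\big)\int_{-1}^1\zeta'(s)^2\,ds\ge0$, and since $\int_{-1}^1\zeta'(s)^2\,ds>0$ this gives (\ref{eqiii5}) at the interior point $x_0$. The endpoints $x_0=a,b$ then follow by continuity of $F_{u'^ju'^k}(x,u(x),u'(x))$ in $x$.

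The routine but slightly delicate point is the uniform estimate showing that the cross term and the $\varphi\varphi$ term really are lower order after division by $\varepsilon$: one needs that $F_{u^ju'^k}$ and $F_{u^ju^k}$, evaluated along $u$, are bounded on a neighbourhood of $x_0$ (guaranteed by $F\in\mathcal C^2$ and $u\in\mathcal C^2(\overline I)$), together with the scaling $\|\varphi_\varepsilon\|_\infty=O(\varepsilon)$, $\|\varphi_\varepsilon'\|_\infty=O(1)$, $\operatorname{supp}\varphi_\varepsilon\subset(x_0-\varepsilon,x_0+\varepsilon)$. The only genuine obstacle is a soft one — verifying that $u+t\varphi_\varepsilon$ remains in the domain $\overline I\times\Omega$ so that the Taylor expansion is legitimate, which holds for $|t|$ small depending on $\varepsilon$; since for each fixed $\varepsilon$ we only need the sign of the $t^2$-coefficient, this causes no trouble. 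I would also remark that the argument is local in $x_0$ and in the choice of $\xi$, so no global structure of $u$ is used beyond its being a critical point with finite nonnegative second variation.
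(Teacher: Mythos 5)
Your argument is correct and is essentially the same localization-and-scaling proof that the paper itself uses for its generalized Legendre condition (Theorem~\ref{thmm}): a test variation $\varepsilon\,\xi\,\zeta((x-x_0)/\varepsilon)$ concentrated at an arbitrary $x_0$, followed by power counting in $\varepsilon$ to isolate the top-order block of the second variation, with the endpoints recovered by continuity. One bookkeeping remark: the orders you quote are those of the integrands; as integrals over a support of measure $2\varepsilon$ the three contributions are $O(\varepsilon^3)$, $O(\varepsilon^2)$ and $O(\varepsilon)$ respectively, and it is the $O(\varepsilon^2)$ bound on the cross term (not $O(\varepsilon)$) that makes it disappear after you divide by $\varepsilon$ and let $\varepsilon\to 0$.
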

The inequality in (\ref{eqiii5}) is called the Legendre condition. As the theorem says, it is a necessary condition for
$u$  to be a weak minimum. The Legendre condition says that the matrix
$$F_{u'u'}=\left(F_{u'^ju'^k}\right)$$
 must be positive semi-definite at every point along a minimum.
 %-----------------------------------------------------
 \subsection{The Jacobi conditions}
Consider the functional
\begin{equation}\label{eqiii6}
 \mathcal{F}(u)=\int_{a}^{b} F\left(x,u(x),u'(x)\right) dx,
 \end{equation}
where $u=\left(u^1,u^2,\cdots,u^n\right).$
Introduce the matrices
$$F_{uu}=\left( F_{u^iu^k}  \right),\quad F_{uu'}=\left( F_{u^iu'^k}  \right),\quad F_{u'u'}=\left( F_{u'^iu'^k}  \right),$$
$$P= \frac{1}{2}F_{u'u'},\quad Q=\frac{1}{2}\left( F_{uu}-\frac{d }{dx}F_{uu'}  \right).$$
\begin{definition}
Let
\begin{eqnarray}
h^1&=&\left( h_{11},h_{12}, \cdots  ,h_{1n}   \right)\nonumber\\
h^2&=&\left( h_{21},h_{22}, \cdots  ,h_{2n}   \right)\nonumber\\
\vdots & & \cdots  \label{eqiii7}\\
h^n&=&\left( h_{n1},h_{n2}, \cdots  ,h_{nn}   \right)\nonumber
\end{eqnarray}
be set of $n$ solutions of the linear equations called the Jacobi system
\begin{equation}\label{eqiii8}
-\frac{d }{dx}\left( P\,h' \right)+Q\,h=0
\end{equation}
associated with the functional (\ref{eqiii6}), where the $i$-th solution satisfies the initial conditions
$$  h_{ik}(a)=0,\,\,h'_{ii}(a)=1,\,\, h'_{ik}(a)=0,\quad k \neq i,\,\,i,k=1,2,\cdots,n.   $$
Then the point $\widetilde{a},$ $(\widetilde{a}\neq a),$ is said to be conjugate to the point $a$ if the
determinant
$$ \left| \begin{array}{cccc}
h_{11}(x)& h_{12}(x)&\cdots&h_{1n}(x)\\
h_{21}(x)& h_{22}(x)&\cdots&h_{2n}(x)\\
\vdots &       &     \cdots &\\
h_{n1}(x)& h_{n2}(x)&\cdots&h_{nn}(x)
   \end{array} \right|  $$
vanishes for $x=\widetilde{a}.$
\end{definition}
\begin{theorem}[Jacobi necessary condition \cite{r01}]
 If the extremum $u$ corresponds to a minimum of the functional
(\ref{eqiii6}), and if the matrix $P(x,u(x),u'(x))$ is positive definite along this extremum, then the open
interval $]a,b[$ contains no points conjugate to $a.$
\end{theorem}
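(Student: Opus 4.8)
The plan is to argue by contradiction, using the fact that at a minimum the second variation is a non‑negative quadratic functional whose Euler–Lagrange operator is precisely the Jacobi operator. First I would recall that if $u$ is a (weak) minimum of $\mathcal{F}$ then the second variation satisfies $\delta^{2}\mathcal{F}(u;h)\ge 0$ for every admissible variation $h$ — continuous and piecewise $\mathcal{C}^{1}$ on $[a,b]$ with $h(a)=h(b)=0$ — and that, after integrating the mixed term by parts (the endpoint terms vanishing because $h$ vanishes at $a$ and $b$) and symmetrising $F_{uu'}$, it takes the form
$$ \delta^{2}\mathcal{F}(u;h)=2\int_{a}^{b}\big( h'^{\,T}P\,h'+h^{\,T}Q\,h\big)\,dx , $$
with $P$ and $Q$ the symmetric matrices introduced before Definition, evaluated along $u$. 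A direct computation then shows that the Euler–Lagrange system of the quadratic functional $\mathcal{Q}(h):=\int_{a}^{b}\big(h'^{\,T}Ph'+h^{\,T}Qh\big)\,dx$ is exactly the Jacobi system $-\tfrac{d}{dx}(Ph')+Qh=0$ of (\ref{eqiii8}).

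Next, suppose for contradiction that some $\widetilde a\in\,]a,b[\,$ is conjugate to $a$. Since $\det\big(h_{ik}(\widetilde a)\big)=0$, there is a nonzero vector $(c_{1},\dots,c_{n})$ with $\sum_{i}c_{i}h^{i}(\widetilde a)=0$; set $h^{*}:=\sum_{i}c_{i}h^{i}$. Then $h^{*}$ solves the Jacobi system, $h^{*}(a)=h^{*}(\widetilde a)=0$ (each $h^{i}$ vanishes at $a$), and $h^{*}\not\equiv 0$, because the prescribed initial data give $h^{*\prime}(a)=(c_{1},\dots,c_{n})\neq 0$. Define the admissible variation $\bar h$ to equal $h^{*}$ on $[a,\widetilde a]$ and $0$ on $[\widetilde a,b]$. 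A short integration by parts of $\mathcal{Q}$ restricted to $[a,\widetilde a]$, using that $h^{*}$ satisfies the Jacobi system and vanishes at both ends of that subinterval, makes every term cancel, so $\delta^{2}\mathcal{F}(u;\bar h)=2\,\mathcal{Q}(\bar h)=0$.

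Since $\delta^{2}\mathcal{F}(u;\cdot)\ge 0$ and $\bar h$ attains the value $0$, the non‑trivial function $\bar h$ is a minimiser of $\mathcal{Q}$; hence the first variation of $\mathcal{Q}$ at $\bar h$ must vanish against every admissible $\phi$. Computing that first variation by splitting the integral at $\widetilde a$ and integrating by parts — the interior integrals disappearing by the Jacobi equation and by $h^{*}(a)=h^{*}(\widetilde a)=0$ — leaves only the boundary contribution $\phi(\widetilde a)^{T}P(\widetilde a)\,h^{*\prime}(\widetilde a)$, which must therefore vanish for all $\phi$; as $\phi(\widetilde a)$ is arbitrary, $P(\widetilde a)\,h^{*\prime}(\widetilde a)=0$. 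Because $P(\widetilde a)$ is positive definite, hence invertible, this gives $h^{*\prime}(\widetilde a)=0$; together with $h^{*}(\widetilde a)=0$ and uniqueness for the linear Jacobi system, it forces $h^{*}\equiv 0$, contradicting $h^{*}\not\equiv 0$. Therefore $]a,b[$ contains no point conjugate to $a$.

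The step I expect to demand the most care is the transition ``$\bar h$ minimises $\mathcal{Q}$ $\Rightarrow$ its first variation vanishes over piecewise‑$\mathcal{C}^{1}$ competitors'', equivalently the Weierstrass–Erdmann corner condition that $P\bar h'$ be continuous at $\widetilde a$: this is exactly where the strict positivity of $P$ (the Legendre hypothesis) is indispensable, for without it a genuine corner at $\widetilde a$ could persist and no contradiction would arise. One must also check that the second‑variation identity — including the symmetrisation of the mixed term and the legitimacy of piecewise‑smooth variations in the necessary condition $\delta^{2}\mathcal{F}\ge 0$ — is valid under the stated regularity of $F$ and $u$.
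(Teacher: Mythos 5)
Your argument is correct, but note that the paper itself offers no proof of this theorem: it is quoted verbatim from the cited reference (Gelfand--Fomin) as background in Section 2. What you have written is precisely the classical proof from that source --- build the broken variation $\bar h$ from a nontrivial Jacobi solution vanishing at $a$ and at the conjugate point, observe $\mathcal{Q}(\bar h)=0$ so that $\bar h$ minimises the accessory quadratic functional, extract the corner condition $P(\widetilde a)\,h^{*\prime}(\widetilde a)=0$ from the vanishing first variation, and use positive definiteness of $P$ plus ODE uniqueness to force $h^{*}\equiv 0$. The two caveats you flag (extending $\delta^{2}\mathcal{F}\ge 0$ to piecewise-$\mathcal{C}^{1}$ variations, and the legitimacy of the Weierstrass--Erdmann step) are exactly the points that need care, and both are handled in the standard treatment; your proof is sound as outlined.
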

\begin{theorem}[Jacobi sufficient condition \cite{r01}]
Suppose that for some curve $\gamma$ with equation $u=u(x),$ the functional (\ref{eqiii6}) satisfies
the following conditions:
\begin{itemize}
\item[(1)] The curve $\gamma$  is an extremum, i.e., satisfies the system of Euler equations
$$F_{u^i}-\frac{d }{dx}F_{u'^i}=0,\quad i=1,2,   ,n;  $$
\item[(2)] Along $\gamma$ the matrix
$$ P(x)=\frac{1}{2}F_{u'u'}(x,u(x),u'(x))  $$
is positive definite;
\item[(3)] The interval $[a,b]$ contains no points conjugate to the point $a.$
\end{itemize}
Then the functional (\ref{eqiii6}) has a weak minimum for the curve $\gamma.$
\end{theorem}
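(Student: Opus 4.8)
The plan is to show that hypotheses (1)--(3) force the second variation of $\mathcal{F}$ along $\gamma$ to be positive definite in a quantitative ($L^2$-coercive) sense on the admissible variations, and then to deduce weak minimality by absorbing the Taylor remainder. Throughout, an admissible variation is a function $h\in\mathcal{C}^2([a,b],\mathbb{R}^n)$ vanishing at the endpoints, $h(a)=h(b)=0$, and one compares $\mathcal{F}(u)$ with $\mathcal{F}(u+h)$ for $h$ small in the $\mathcal{C}^1$ norm.

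First I would record the expansion
$$\mathcal{F}(u+h)-\mathcal{F}(u)=\delta\mathcal{F}(u;h)+\tfrac12\,\delta^2\mathcal{F}(u;h)+R(u;h),$$
where, by hypothesis (1) that $\gamma$ solves the Euler equations, $\delta\mathcal{F}(u;h)=0$ for every admissible $h$, and where, after an integration by parts using $h(a)=h(b)=0$, the second variation takes the form
$$\delta^2\mathcal{F}(u;h)=2\int_a^b\big[\,(h')^{\top}P\,h'+h^{\top}Q\,h\,\big]\,dx,$$
with $P$ and $Q$ the matrices introduced above, evaluated along $\gamma$. By hypothesis (2), $P(x)$ is symmetric positive definite on the compact interval $[a,b]$, hence $P(x)\ge\alpha I$ for some $\alpha>0$ and $P^{-1}$ is continuous there.

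The heart of the argument is a completion-of-squares (Riccati) device powered by hypothesis (3). Let $H$ be the matrix solution of the Jacobi system $-\tfrac{d}{dx}(PH')+QH=0$ with $H(a)=0$, $H'(a)=I$; by (3) we have $\det H(x)\ne0$ for all $x\in(a,b]$. Perturbing the initial data to $H_\delta(a)=\delta I$, $H_\delta'(a)=I$ and using continuous dependence of solutions on initial conditions together with compactness of $[a,b]$, I would fix $\delta>0$ so small that $H_\delta(x)$ is nonsingular on the whole closed interval $[a,b]$. Since $P$ and $Q$ are symmetric, the quantity $H_\delta^{\top}(PH_\delta')-(PH_\delta')^{\top}H_\delta$ has zero derivative, hence is constant, and the symmetric choice of initial data makes it vanish at $x=a$; therefore $(PH_\delta')H_\delta^{-1}$, and hence $W:=-(PH_\delta')H_\delta^{-1}$, is symmetric, and one checks that $W$ is a symmetric $\mathcal{C}^1$ solution on $[a,b]$ of the matrix Riccati equation $W'=WP^{-1}W-Q$. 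Then, because $h^{\top}Wh$ vanishes at both endpoints, the elementary identity
$$(h')^{\top}Ph'+h^{\top}Qh=\big(h'+P^{-1}Wh\big)^{\top}P\big(h'+P^{-1}Wh\big)-\frac{d}{dx}\big(h^{\top}Wh\big)$$
integrates to
$$\int_a^b\big[(h')^{\top}Ph'+h^{\top}Qh\big]\,dx=\int_a^b\big(h'+P^{-1}Wh\big)^{\top}P\big(h'+P^{-1}Wh\big)\,dx\ \ge\ 0,$$
with equality only if $h'+P^{-1}Wh\equiv0$, which together with $h(a)=0$ and uniqueness for linear ODEs forces $h\equiv0$.

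To upgrade positivity to coercivity, set $v:=h'+P^{-1}Wh$; the relation $h'=-P^{-1}Wh+v$ with $h(a)=0$ realizes $h$, and hence $h'$, as a bounded linear map of $v$ in $L^2(a,b)$ (the bound depending only on $\gamma$, via the fundamental matrix of $h'=-P^{-1}Wh$ and $\|P^{-1}W\|_\infty<\infty$ on the closed interval), so $\|h'\|_{L^2}\le C\|v\|_{L^2}$; combined with $P\ge\alpha I$ this gives $\delta^2\mathcal{F}(u;h)\ge c\,\|h'\|_{L^2}^2$ with $c=2\alpha/C^2>0$. Finally, by the smoothness of $F$ the remainder can be written in integral form with integrand bounded by $\varepsilon\big(\|h\|_{\mathcal{C}^1}\big)\big(|h|^2+|h'|^2\big)$, where $\varepsilon(t)\to0$ as $t\to0$; the Poincar\'e inequality $\|h\|_{L^2}\le C'\|h'\|_{L^2}$ for $h$ vanishing at the endpoints then yields $|R(u;h)|\le\varepsilon\big(\|h\|_{\mathcal{C}^1}\big)\big(1+(C')^2\big)\|h'\|_{L^2}^2$. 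Choosing $\|h\|_{\mathcal{C}^1}$ small enough that $\varepsilon(\|h\|_{\mathcal{C}^1})(1+(C')^2)<c/2$, we obtain $\mathcal{F}(u+h)-\mathcal{F}(u)\ge(c/2)\|h'\|_{L^2}^2>0$ for every admissible $h\not\equiv0$ in that $\mathcal{C}^1$-neighborhood, i.e.\ $\gamma$ is a weak minimum. The main obstacle is the construction of the symmetric Riccati matrix $W$ on the \emph{closed} interval: this is precisely where the no-conjugate-point hypothesis is used, and the delicate point is to guarantee, through the perturbation $H_\delta$, nonsingularity uniformly up to and including the endpoint $x=a$ while preserving the symmetry needed for the completion of squares.
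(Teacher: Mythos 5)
Your proof is correct, but there is nothing in the paper to compare it against: the Jacobi sufficient condition appears in Section 2 only as a quoted classical result, attributed to Gelfand and Fomin \cite{r01}, and the authors give no proof of it (their own contribution is a different, second-variation/matrix criterion that deliberately avoids conjugate points). What you have written is essentially the standard Riccati (completion-of-squares) proof from the classical literature: the symmetry of $W=-(PH_\delta')\,H_\delta^{-1}$ via the constancy of $H_\delta^{\top}(PH_\delta')-(PH_\delta')^{\top}H_\delta$, the pointwise identity reducing the accessory integrand to $\bigl(h'+P^{-1}Wh\bigr)^{\top}P\bigl(h'+P^{-1}Wh\bigr)$ plus an exact derivative, the coercivity upgrade through the substitution $v=h'+P^{-1}Wh$, and the absorption of the Taylor remainder are all sound. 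The one place you should tighten is the nonsingularity of $H_\delta$ on the closed interval: plain uniform convergence $H_\delta\to H$ does not by itself give invertibility near $x=a$, where $\det H\to 0$; you need the two-region argument (near $a$, $H_\delta(x)=\bigl(\delta+(x-a)\bigr)I+O\bigl((\delta+(x-a))^2\bigr)$ is invertible by inspection, while on $[a+\eta,b]$ the determinant of $H$ is bounded away from zero and uniform convergence applies). With that detail spelled out, the argument is complete and delivers slightly more than the theorem asks for, namely a strict weak minimum with an $L^2$-coercive second variation.
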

$ $\\

In this work, we give an answer to the following question:
What do  the results of the four above theorems become when the vector-valued function
$u=\left(u^1,  \cdots   ,u^m\right)$ depends on several variables $x=\left(x^1, \cdots ,x^n\right)$
and/or the Lagrangian of the used functional includes higher order derivatives of $u?$
To our best knowledge of the literature, in this general situation, there is not  explicit method available to determine
if a known extremum is a minimum, a maximum, or a saddle point.
To fill this gap  and provide a suitable answer to our main question, we   establish a
regular connection between   the second variation of a functional and an operational  square matrix.
Therefore, by the well known result of the matrix theory, explicit formula for the
necessary and sufficient extremum conditions can be derived without making use of the notion
of conjugate points as in the Jacobi theorems. Furthermore, the matrices $F_{uu},$ $F_{uu'}$ and $F_{u'u'}$ used in
the above Legendre and Jacobi conditions are deduced as submatrices of a
general matrix  associated with the second variation.

%%%%%%%%%%%%%%%%%%%%%%%%%%%%%%%%%%%%%%%%%%%%%%%%%%%%%%%%%%%%%%%%%%%%%%%%%%%%%%%%%%%%%%%%%%%%%%%
\section{Notations for partial derivatives of functions}
$ $\\
 Consider $X,$ an $n$-dimensional independent variable space, and $U,$
 an $m$-dimensional dependent variable space.
 Let $x=\left(x^1,\cdots,x^n\right)\in X$ and $u=\left(u^1,\cdots,u^m\right)\in U.$
  We define  the space $U^{(s)},$ $s\in \mathbb{N}$ as:
  \begin{equation}
   U^{(s)}:=\left\{u^{(s)}\,:\,\,u^{(s)}=\bigotimes_{j=1}^{m}\left(\bigotimes_{k=0}^{s}u^j_{(k)}  \right)\right\},
  \end{equation}
 where $u^j_{(k)}$ is the
 \begin{equation}\label{sa1}
 p_k= \left(\begin{array}{c}{n+k-1}\\{k}  \end{array}\right)\emph{\emph{-tuple}}
 \end{equation}
  of all distinct $k$-order partial derivatives of $u^j.$
The   $u^j_{(k)}$ vector components are recursively obtained  as follows:
 \begin{itemize}
 \item[i)] $u^j_{(0)}=u^j$ and $u^j_{(1)}=\left(u^j_{x^1},u^j_{x^2},\cdots,u^j_{x^n}\right).$
 \item[ii)] Assume that  $u^j_{(k)}$ is known.
 \begin{itemize}
\item Form the tuples $\widehat{u}^j_{(k+1)}(l):$
 $$\widehat{u}^j_{(k+1)}(l)=\left(\frac{\partial}{\partial x^1}u^j_{(k)}[l], \frac{\partial}{\partial x^2}u^j_{(k)}[l],\cdots,\frac{\partial}{\partial x^n}u^j_{(k)}[l]   \right),\quad l=1,2,\cdots,p_k;   $$
 where $u^j_{(k)}[l]$ is the $l$-th component of the vector $u^j_{(k)}.$
 \item Construct, by iteration, the tuples $\widetilde{u}^j_{(k+1)}(l):$
%following the scheme
    $\widetilde{u}^j_{(k+1)}(1)= \widehat{u}^j_{(k+1)}(1)$ and for $l=2,3,\cdots,p_k,$
    the vector $\widetilde{u}^j_{(k+1)}(l)$ is nothing but the tuple $\widehat{u}^j_{(k+1)}(l)$ in which all
    components already present in $\widetilde{u}^j_{(k+1)}(i),$ $i=1,2, \cdots, l-1,$ are excluded.
\item Finally, form the vector
$$u^j_{(k+1)}=\left( \widetilde{u}^j_{(k+1)}(1),\widetilde{u}^j_{(k+1)}(2),\cdots,\widetilde{u}^j_{(k+1)}(p_k)   \right).  $$
\end{itemize}
 \end{itemize}
As a matter of clarity, let us immediately illustrate this construction by the following.
\begin{example}
\begin{itemize}
 \item For
   $n=2$,  $x=\left(x^1,x^2\right)$ and we have:
   $$u^j_{(1)}=\left(u^j_{x^1},u^j_{x^2}\right),$$
\begin{eqnarray}
 \widehat{u}^j_{(2)}(1)&=&\left(\frac{\partial}{\partial x^1}u^j_{(1)}[1], \frac{\partial}{\partial x^2}u^j_{(1)}[1]   \right)\,=\,\left(u^j_{2x^1},u^j_{x^1x^2}\right),\nonumber
\end{eqnarray}
\begin{eqnarray}
 \widehat{u}^j_{(2)}(2)&=&\left(\frac{\partial}{\partial x^1}u^j_{(1)}[2], \frac{\partial}{\partial x^2}u^j_{(1)}[2]   \right)\,=\,\left(u^j_{x^2x^1},u^j_{2x^2}\right),\nonumber
\end{eqnarray}
$$ \widetilde{u}^j_{(2)}(1)= \widehat{u}^j_{(2)}(1)=\left(u^j_{2x^1},u^j_{x^1x^2}\right),\quad \widetilde{u}^j_{(2)}(2)=\left(\check{u}^j_{x^2x^1},u^j_{2x^2}\right) =\left(u^j_{2x^2}\right),$$
$$u^j_{(2)}=\left(\widetilde{u}^j_{(2)}(1), \widetilde{u}^j_{(2)}(2)\right)
 =\left(u^j_{2x^1},u^j_{x^1x^2},u^j_{2x^2}\right).$$
  \item  For $n=3$, $x=(x^1,x^2,x^3)$ and the same scheme leads to
 $$u^j_{(2)}=\left(u^j_{2x^1},u^j_{x^1x^2},u^j_{x^1x^3},u^j_{2x^2},u^j_{x^2x^3},u^j_{2x^3}  \right),$$
 $$u^j_{(3)}=\left(u^j_{3x^1},u^j_{2x^1x^2},u^j_{2x^1x^3},u^j_{x^12x^2},u^j_{x^1x^2x^3},u^j_{x^12x^3},u^j_{3x^2},u^j_{2x^2x^3},u^j_{x^22x^3},u^j_{3x^3}  \right),$$
for $k=2$ and $k=3$, respectively.
\end{itemize}
\end{example}

  An element   $u^{(s)},$ in the space $U^{(s)},$ is the
  \begin{equation}\label{sa2}
  q_s= m(1+p_1+p_2+\cdots+p_s)=m \left(\begin{array}{c}{n+s}\\{s}  \end{array}\right)\emph{\emph{-tuple}}
  \end{equation}
  defined by
\begin{equation}\label{sa3}
u^{(s)}=\left(u^1_{(0)},u^1_{(1)},\cdots,u^1_{(s)}, u^2_{(0)},u^2_{(1)},\cdots,u^2_{(s)},\cdots,u^m_{(0)},u^m_{(1)},\cdots,u^m_{(s)}  \right).
\end{equation}
We denote by $X\times U^{(s)},$ the total space whose coordinates are
denoted by $(x,u^{(s)}),$ encompassing    the independent variables $x$ and the
dependent variables with their derivatives up to order $s,$ globally denoted by  $u^{(s)}.$ \\

In the sequel, a $q_s$-uple $u^{(s)}$ is referred to (\ref{sa3}), whereas the integers $p_k$ and $q_s$ are  defined by (\ref{sa1}) and (\ref{sa2}), respectively.

%%%%%%%%%%%%%%%%%%%%%%%%%%%%%%%%%%%%%%%%%%%%%%%%%%%%%%%%%%%%%%%%%%%%%%%%%%%%%%%%%%%%%%%%%%%%%%%%%%
%%%%%%%%%%%%%%%%%%%%%%%%%%%%%%%%%%%%%%%%%%%%%%%%%%%%%%%%%%%%%%%%%%%%%%%%%%%%%%%%%%%%%%%%%%%%%%%%%%
\section{First variation and  necessary conditions for local extrema}
This section contains two parts. First, we briefly recall useful definitions and properties used in the sequel. Then, we analyze
the variational problem with constraints, and  give a general formulation of the first order necessary extremum condition
 which is rigorously proved.

%%%%%%%%%%%%%%%%%%%%%%%%%%%%%%%%%%%%%%%%%%%%%%%%%%%%%%%%%%%%%%%%%%%%%%%%%%%%%%%%%%%%%%%%%%%%%%%%%%%
\subsection{Variational problem without constraints: definitions and main results}

Consider a functional of the form
\begin{equation}\label{se1}
\mathcal{F}(u)=\int_{\Lambda}L\left(x,u^{(s)}(x)\right)dx
\end{equation}
where $\Lambda$ is a  connected subset of $X.$ Let $\Omega$ be an open subset of $U^{(s)}.$ We assume that
the function $L,$ usually called the Lagrangian of the functional $\mathcal{F},$ is defined on the open subset
$\Lambda\times \Omega$ of $X\times U^{(s)}$ and is continuous in all its $n+q_s$ variables so that the
variational integral (\ref{se1}) exists. The problem consists in  finding conditions that the function $u$ must satisfy
in order to be a minimum or maximum of the functional $\mathcal{F},$  requiring  that
   $L\in \mathcal{C}^{s+1}(\Lambda\times \Omega,\mathbb{R}).$
For the integral in (\ref{se1})  be defined, it is necessary that the function $u\in \mathcal{C}_b^s(\Lambda,U),$
where
$$  \mathcal{C}_b^s(\Lambda,U)=\left\{ \psi\in   \mathcal{C}^s(\Lambda,U) \,:\, \sum_{j=1}^m \sum_{k=0}^s \sum_{l=1}^{p_k}\sup_{x\in\Lambda}\left| \psi^j_{(k)}[l](x) \right|<+\infty \right\}   .$$
In addition, $L(x,u^{(s)}(x))$ must be defined for all  $x\in \Lambda.$ This means that $u^{(s)}(x)\in \Omega$ for all $x\in \Lambda.$
Such a function $u$ is said to be admissible for the functional $\mathcal{F}.$
\begin{definition}
A function $u$ which is admissible for the functional $\mathcal{F}$ is a global minimum for $\mathcal{F},$ if
$  \mathcal{F}(u)\leq \mathcal{F}(v)$     for every  admissible function $v.$
\end{definition}
\begin{definition}
A function $u$ which is admissible for the functional $\mathcal{F}$ is a global maximum for $\mathcal{F},$ if
$  \mathcal{F}(v)\leq \mathcal{F}(u)$     for every  admissible function $v.$
\end{definition}
A function which is either a global minimum or a global maximum is called a global extremum.
To come up with the definition of local extrema for a functional, we need to have  a measure
of distance between two functions.
\begin{definition}
 Let $\phi\in \mathcal{C}_b^s(\Lambda,U).$ We define the $0$-norm of $\phi$ by
 $$ \|  \phi \|_0 =\sum_{j=1}^m \sup_{x\in \Lambda}\left| \phi^j(x) \right|$$
 and the $s$-norm of $\phi$ by
 $$\|  \phi \|_s=\sum_{j=1}^m \sum_{k=0}^s \sum_{l=1}^{p_k}\sup_{x\in\Lambda}\left| \phi^j_{(k)}[l](x) \right|   .$$
\end{definition}
Clearly, for $s>0$ the numbers $\|   \phi-\psi \|_0$ and $\|   \phi-\psi \|_s$  provide quite different measures of
the distance between $\phi$ and $\psi.$ These measures lead to two different definitions of local minima.
\begin{definition}
A function $u$ which is admissible for the functional $\mathcal{F}$ is a weak local minimum for $\mathcal{F}$
if there is an $\epsilon >0$ such that $\mathcal{F}(u)\leq \mathcal{F}(v)$ for all admissible functions $v$
satisfying $\| v-u  \|_s< \epsilon.$
$u$ is a strict weak local minimum if $\mathcal{F}(u)< \mathcal{F}(v)$ for all such $v$ with $v\neq u.$
\end{definition}
\begin{definition}
A function $u$ which is admissible for the functional $\mathcal{F}$ is a strong local minimum for $\mathcal{F}$
if there is an $\epsilon >0$ such that $\mathcal{F}(u)\leq \mathcal{F}(v)$ for all admissible functions $v$
satisfying $\| v-u  \|_0< \epsilon.$
$u$ is a strict strong local minimum if $\mathcal{F}(u)< \mathcal{F}(v)$ for all such $v$ with $v\neq u.$
\end{definition}
\begin{definition}
A function $u$ which is admissible for the functional $\mathcal{F}$ is a weak local maximum for $\mathcal{F}$
if there is an $\epsilon >0$ such that $\mathcal{F}(u)\geq \mathcal{F}(v)$ for all admissible functions $v$
satisfying $\| v-u  \|_s< \epsilon.$
$u$ is a strict weak local maximum if $\mathcal{F}(u)> \mathcal{F}(v)$ for all such $v$ with $v\neq u.$
\end{definition}
\begin{definition}
A function $u$ which is admissible for the functional $\mathcal{F}$ is a strong local maximum for $\mathcal{F}$
if there is an $\epsilon >0$ such that $\mathcal{F}(u)\geq \mathcal{F}(v)$ for all admissible functions $v$
satisfying $\| v-u  \|_0< \epsilon.$
$u$ is a strict strong local maximum if $\mathcal{F}(u)> \mathcal{F}(v)$ for all such $v$ with $v\neq u.$
\end{definition}
A function which is either a weak local minimum or a weak local maximum is called a weak local extremum.
A function which is either a strong local minimum or a strong local maximum is called a strong local extremum.\\

Without loss of generality, we can assume that $\Lambda=\prod_{i=1}^n \left]a^i,b^i  \right[$ with  $a^i\leq b^i.$
\begin{definition}
 A function $\psi\in \mathcal{C}(\Lambda,U)$ is said to have  compact support in $\Lambda$ if there is $\epsilon>0$
 such that  $\psi(x)=0$ for all $x=\left(x^1,  \cdots ,x^n\right)$ with $x^i\in \left]a^i, a^i+\epsilon  \right[$ or $ x^i\in \left]b^i-\epsilon,b^i   \right[ $ for some
 $i\in \{ 1,2, \cdots  ,n  \}.$
 The set of all functions which are infinitely differentiable and have compact support in $\Lambda$ is denoted by $\mathcal{C}^{\infty}_0(\Lambda,U).$
\end{definition}
\begin{lemma}\label{lem1}
Let $f\in \mathcal{C}(\Lambda,\mathbb{R}).$  If $\int_{\Lambda}f(x)\psi(x)dx=0$ for all $\psi\in \mathcal{C}^{\infty}_0(\Lambda,\mathbb{R}),$
then $f(x)=0$ for all $x\in \Lambda.$
\end{lemma}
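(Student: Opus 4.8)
The plan is to argue by contradiction using a standard bump-function construction adapted to the product box $\Lambda=\prod_{i=1}^{n}\,]a^i,b^i[$. Suppose $f$ is continuous on $\Lambda$ but $f(x_0)\neq 0$ for some $x_0\in\Lambda$; without loss of generality assume $f(x_0)>0$ (otherwise replace $f$ by $-f$). By continuity of $f$ there is a radius $\delta>0$ such that the closed cube $Q=\prod_{i=1}^{n}[x_0^i-\delta,\,x_0^i+\delta]$ is contained in $\Lambda$ and $f(x)\geq \tfrac12 f(x_0)>0$ for all $x\in Q$. The goal is then to produce a test function $\psi\in\mathcal{C}^\infty_0(\Lambda,\mathbb{R})$ that is nonnegative, strictly positive on the interior of $Q$, and vanishes outside $Q$; for such a $\psi$ we get $\int_\Lambda f\psi\,dx \geq \tfrac12 f(x_0)\int_{\mathrm{int}\,Q}\psi\,dx > 0$, contradicting the hypothesis.

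The main technical step is the explicit construction of $\psi$. First I would recall the classical one-variable mollifier $\varphi(t)=\exp\!\bigl(-1/(1-t^2)\bigr)$ for $|t|<1$ and $\varphi(t)=0$ for $|t|\geq 1$, which is $\mathcal{C}^\infty$ on $\mathbb{R}$, nonnegative, and supported exactly on $[-1,1]$. Then I would set
$$
\psi(x)=\prod_{i=1}^{n}\varphi\!\left(\frac{x^i-x_0^i}{\delta}\right).
$$
This product is $\mathcal{C}^\infty$ on $\mathbb{R}^n$ as a finite product of $\mathcal{C}^\infty$ functions of the individual coordinates, it is nonnegative, it is strictly positive precisely when $|x^i-x_0^i|<\delta$ for every $i$, i.e. on $\mathrm{int}\,Q$, and it vanishes outside $Q$. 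Since $Q\subset\Lambda$ and $Q$ stays a fixed positive distance away from each face of the box $\Lambda$, the function $\psi$ satisfies the compact-support condition of the preceding definition: there is $\epsilon>0$ with $\psi(x)=0$ whenever some coordinate $x^i$ lies in $]a^i,a^i+\epsilon[$ or in $]b^i-\epsilon,b^i[$. Hence $\psi\in\mathcal{C}^\infty_0(\Lambda,\mathbb{R})$, and it is a legitimate test function.

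To finish, I would combine the two ingredients: because $f\geq\tfrac12 f(x_0)$ on $Q$, $\psi\geq 0$ everywhere, $\psi>0$ on the nonempty open set $\mathrm{int}\,Q$, and $\psi\equiv 0$ off $Q$, the integral $\int_\Lambda f(x)\psi(x)\,dx = \int_Q f(x)\psi(x)\,dx \geq \tfrac12 f(x_0)\int_Q \psi(x)\,dx$ is strictly positive, since a nonnegative continuous function with a strictly positive value has strictly positive integral over a set containing a neighbourhood of that point. This contradicts the assumption $\int_\Lambda f\psi\,dx=0$ for all such $\psi$. Therefore no such $x_0$ exists and $f\equiv 0$ on $\Lambda$. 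I do not anticipate any serious obstacle here; the only point requiring a little care is verifying that the support of $\psi$ genuinely meets the "compact support in $\Lambda$" definition given above (staying away from all $2n$ faces of the box), which follows immediately from $Q$ being a compact subset of the open box $\Lambda$.
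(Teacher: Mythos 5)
Your proof is correct. Note that the paper itself states Lemma \ref{lem1} without any proof (it is the classical fundamental lemma of the calculus of variations, quoted as a known result), so there is nothing to compare against; your argument by contradiction with the product bump function $\psi(x)=\prod_{i=1}^{n}\varphi\bigl((x^i-x_0^i)/\delta\bigr)$ is the standard one, and you correctly take care of the one point specific to this paper, namely that the support of $\psi$ satisfies the paper's box-based definition of compact support in $\Lambda=\prod_{i=1}^n\,]a^i,b^i[$ (staying a fixed distance $\epsilon$ from every face). The only cosmetic remark is that your final display writes $\int_\Lambda f\psi\,dx\geq\tfrac12 f(x_0)\int_{Q}\psi\,dx$, which is fine since $\psi$ vanishes on $Q\setminus\mathrm{int}\,Q$, but strictly speaking the positivity comes from $\psi>0$ on the open cube, as you say in the surrounding text.
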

Given an admissible function $u\in \mathcal{C}^{s}(\Lambda,U)$ and any $\phi\in \mathcal{C}^{\infty}_0(\Lambda,U),$ there is an $\epsilon_0>0$
such that the function $v=u+t\,\phi$ is admissible for all $|t|<\epsilon_0.$
Therefore, the function
\begin{equation}\label{se2}
\Phi(t)=\mathcal{F}(u+t\phi)=\int_{\Lambda}L\left(x,u^{(s)}(x)+t\,\phi^{(s)}(x)\right)dx
\end{equation}
is a well defined function of $t$ for $|t|<\epsilon_0.$
Throughout this paper, $\epsilon_0$ stands for  such a number.

Assume now that $u\in \mathcal{C}^{s}(\Lambda,U)$ is a local extremum of $\mathcal{F}.$ We may as well assume that $u$
is a local minimum. We have $\Phi(t)=\mathcal{F}(u+t\phi)\geq \mathcal{F}(u)=\Phi(0)$ for  $|t|<\epsilon_0,$ i.e. $0$ is a local
minimum for $\Phi.$ Suppose that  $L\in \mathcal{C}^1(\Lambda\times \Omega,\mathbb{R})$ implying that $\Phi$ is also continuously differentiable
 and we must have
\begin{equation}\label{se3}
 \Phi'(0)=0.
 \end{equation}
We can calculate $\Phi'$ by differentiating (\ref{se2}) with respect to $t$ under the integral sign. Doing so
and using the chain rule we get
\begin{eqnarray}
 \Phi'(t)&=&\frac{d}{d t}\mathcal{F}(u+t\, \phi)\nonumber\\
 &=& \frac{d}{d t}\int_{\Lambda}L\left(x,u^{(s)}(x)+t\,\phi^{(s)}(x)\right)dx  \nonumber\\
 &=& \int_{\Lambda}\frac{d}{d t} L\left(x,u^{(s)}(x)+t\,\phi^{(s)}(x)\right)dx  \nonumber\\
             &=& \int_{\Lambda}\sum_{j=1}^{m}\sum_{k=0}^{s}\sum_{h=1}^{p_k}\,\phi^j_{(k)}[h](x)\,\frac{\partial L\left(x,u^{(s)}(x)+t\, \phi^{(s)}(x)\right)}{\partial u^j_{(k)}[h]}dx.\label{se10}
\end{eqnarray}
In particular at $t=0$ we get
\begin{equation}\label{se4}
\Phi'(0)= \int_{\Lambda}\sum_{j=1}^{m}\sum_{k=0}^{s}\sum_{h=1}^{p_k}\,\phi^j_{(k)}[h](x)\,\frac{\partial L\left(x,u^{(s)}(x)\right)}{\partial u^j_{(k)}[h]}dx.
\end{equation}
\begin{definition}\label{df}
 The first variation of $\mathcal{F}$ in a neighborhood of $u$ in the direction $\phi$ is defined by
 \begin{equation}\label{se5}
 \delta \mathcal{F}(u+t\, \phi,\phi)=\Phi'(t) .
 \end{equation}
 In particular, the first variation of $\mathcal{F}$ at $u$ in the direction $\phi$ is expressed by
 \begin{equation}\label{se6}
 \delta \mathcal{F}(u,\phi)=\Phi'(0) .
 \end{equation}
\end{definition}
Notice that the first variation at $u$ is defined in \textrm{Definition} \ref{df} whether $u$ is a local extremum or not.
However, if $u$ is a local extremum of $\mathcal{F},$ then by (\ref{se3}) and (\ref{se6}), $\delta \mathcal{F}(u,\phi)=0.$
We have proved the following first order necessary condition on a local extremum of $\mathcal{F}.$
\begin{proposition}
 Suppose that $L\in \mathcal{C}^1(\Lambda\times \Omega,\mathbb{R}),$ and  that $u\in \mathcal{C}_b^{s}(\Lambda,U)$ is a local
 extremum for the functional $\mathcal{F}(u)=\int_{\Lambda}L\left(x,u^{(s)}(x)\right)dx.$ Then
 \begin{equation}\label{se7}
 \delta \mathcal{F}(u,\phi)=0
 \end{equation}
for all $\phi\in \mathcal{C}^{\infty}_0(\Lambda,U).$
\end{proposition}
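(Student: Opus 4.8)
The plan is to reduce the assertion to the one‑variable fact that a differentiable function has zero derivative at an interior local extremum — precisely the argument already outlined in the paragraphs preceding the statement. Fix an arbitrary $\phi\in\mathcal{C}^{\infty}_0(\Lambda,U)$. First I would record that, since $\phi$ vanishes in a neighborhood of the boundary of $\Lambda$ and $u^{(s)}(x)\in\Omega$ for every $x\in\Lambda$ with $\Omega$ open, there is an $\epsilon_0>0$ (the number fixed in the discussion above) such that $v=u+t\phi$ is admissible for all $|t|<\epsilon_0$; consequently $\Phi(t)=\mathcal{F}(u+t\phi)$ is a well‑defined real function on $(-\epsilon_0,\epsilon_0)$, as in (\ref{se2}).

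Next I would justify that $\Phi\in\mathcal{C}^1\!\big((-\epsilon_0,\epsilon_0)\big)$ with derivative given by (\ref{se10}), i.e.\ that differentiation may be carried out under the integral sign. The point is that the set $\{\,(x,u^{(s)}(x)+t\,\phi^{(s)}(x)) : |t|\le\epsilon_0/2\,\}$ lies in a compact subset $K$ of $\Lambda\times\Omega$: this uses the compact support of $\phi$ together with the hypothesis $u\in\mathcal{C}_b^s(\Lambda,U)$, which keeps $u^{(s)}$ bounded away from the boundary of $\Omega$ along the support of $\phi$. Since $L\in\mathcal{C}^1(\Lambda\times\Omega,\mathbb{R})$, each partial $\partial L/\partial u^j_{(k)}[h]$ is continuous, hence bounded on $K$, so the integrand of $\Phi'$ is dominated uniformly in $t$ by a constant times the (integrable) indicator of $\operatorname{supp}\phi$; the classical theorem on differentiation of integrals depending on a parameter then yields (\ref{se10}) and in particular (\ref{se4}) at $t=0$.

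Finally I would invoke the hypothesis that $u$ is a local extremum of $\mathcal{F}$. Without loss of generality $u$ is a local minimum (otherwise replace $\mathcal{F}$ by $-\mathcal{F}$, which changes nothing in the conclusion). Whether the minimum is weak or strong, there is $\epsilon>0$ with $\mathcal{F}(u)\le\mathcal{F}(v)$ for all admissible $v$ satisfying $\|v-u\|_s<\epsilon$ (resp.\ $\|v-u\|_0<\epsilon$); because $\|t\phi\|_s=|t|\,\|\phi\|_s$ and $\|t\phi\|_0=|t|\,\|\phi\|_0$ both tend to $0$ as $t\to 0$, there is $0<\epsilon_1\le\epsilon_0$ with $\Phi(t)=\mathcal{F}(u+t\phi)\ge\mathcal{F}(u)=\Phi(0)$ for all $|t|<\epsilon_1$. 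Hence $0$ is an interior local minimum of the differentiable function $\Phi$ on $(-\epsilon_1,\epsilon_1)$, so $\Phi'(0)=0$; by (\ref{se6}) in \textrm{Definition}~\ref{df} this says exactly $\delta\mathcal{F}(u,\phi)=0$, and $\phi$ was arbitrary. The only step requiring genuine care is the interchange of differentiation and integration in the second paragraph, and with it the verification that the perturbed points $(x,u^{(s)}(x)+t\,\phi^{(s)}(x))$ remain in a single compact subset of $\Lambda\times\Omega$ on which the first‑order partials of $L$ are controlled; the rest is a direct unwinding of the definitions of admissibility, of weak/strong local extremum, and of the first variation.
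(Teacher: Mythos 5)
Your proof is correct and follows essentially the same route as the paper: reduce to the scalar function $\Phi(t)=\mathcal{F}(u+t\phi)$, differentiate under the integral sign to get (\ref{se10})--(\ref{se4}), and use that $t=0$ is an interior local minimum of $\Phi$. The only difference is that you supply the compactness/domination justification for the interchange of derivative and integral and separate the admissibility radius from the local-minimality radius, both of which the paper takes for granted.
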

The condition in (\ref{se7}) is called the weak form of the Euler-Lagrange equations. A function $u$ which satisfies
(\ref{se7}) is called the weak extremum of $\mathcal{F}.$

Now  assume that the Lagrangian $L\in \mathcal{C}^{s+1}(\Lambda\times \Omega,\mathbb{R}),$ and $u\in \mathcal{C}_b^{2s}(\Lambda,U).$
Using the divergence theorem to  successively  integrate by parts  (\ref{se4}) until  all  derivative actions  on $\phi^j$
are now moved into
$\frac{\partial L\left(x,u^{(s)}(x)\right)}{\partial u^j_{(k)}[h]},$ and taking into account that
$\phi^j\in \mathcal{C}^{\infty}_0(\Lambda,\mathbb{R}),$ we get
\begin{equation}\label{se8}
\delta \mathcal{F}(u,\phi)=\int_{\Lambda}\sum_{j=1}^{m}  \left( \sum_{k=0}^{s}(-1)^k\sum_{h=1}^{p_k} \left( \frac{\partial L\left(x,u^{(s)}(x)\right)}{\partial u^j_{(k)}[h]}   \right)_{(k)}[h]  \right) \phi^j(x)dx.
\end{equation}
If $u$ is a weak local extremum, then (\ref{se8}) is equal to $0$ for all $\phi\in \mathcal{C}^{\infty}_0(\Lambda,U).$
In particular if we take $\phi=\psi\,e^l,$ where $\psi\in \mathcal{C}^{\infty}_0(\Lambda,\mathbb{R})$ and $e^l$ is
the $l$-th vector of the canonical basis of $\mathbb{R}^m,$ then we get
$$ 0=\delta \mathcal{F}(u,\psi\,e^l)=\int_{\Lambda}  \left( \sum_{k=0}^{s}(-1)^k\sum_{h=1}^{p_k} \left( \frac{\partial L\left(x,u^{(s)}(x)\right)}{\partial u^l_{(k)}[h]}   \right)_{(k)}[h]  \right) \psi(x)dx    $$
for all $\psi\in \mathcal{C}^{\infty}_0(\Lambda,\mathbb{R}).$
By \textrm{Lemma} \ref{lem1}, we see that
$$  \sum_{k=0}^{s}(-1)^k\sum_{h=1}^{p_k} \left( \frac{\partial L\left(x,u^{(s)}(x)\right)}{\partial u^l_{(k)}[h]}   \right)_{(k)}[h]  =0   $$
for all $x\in \Lambda$ and $l=1,2,   \cdots,m.$
Thus, we have proved the following theorem.
\begin{theorem}
Suppose that $L\in \mathcal{C}^{s+1}(\Lambda\times \Omega,\mathbb{R}),$ and $u\in \mathcal{C}_b^{2s}(\Lambda,U)$
is a local extremum for the functional $\mathcal{F}(u)=\int_{\Lambda}L(x,u^{(s)}(x))dx.$ Then
\begin{equation}\label{se9}
\sum_{k=0}^{s}(-1)^k\sum_{h=1}^{p_k} \left( \frac{\partial L\left(x,u^{(s)}(x)\right)}{\partial u^j_{(k)}[h]}   \right)_{(k)}[h]  =0
\end{equation}
for all $x\in \Lambda$ and $j=1,2,   \cdots,m.$
\end{theorem}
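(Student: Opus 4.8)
The plan is to take the explicit formula for the first variation obtained in (\ref{se4}), transfer all derivatives from the test function onto the Lagrangian by repeated integration by parts, and then apply the fundamental lemma of the calculus of variations (Lemma \ref{lem1}).

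First I would invoke the preceding Proposition: since $u$ is a local extremum, $\delta\mathcal{F}(u,\phi)=\Phi'(0)=0$ for every $\phi\in\mathcal{C}^\infty_0(\Lambda,U)$, which by (\ref{se4}) means
$$\int_{\Lambda}\sum_{j=1}^{m}\sum_{k=0}^{s}\sum_{h=1}^{p_k}\phi^j_{(k)}[h](x)\,\frac{\partial L\left(x,u^{(s)}(x)\right)}{\partial u^j_{(k)}[h]}\,dx=0.$$
Next comes the core computation. For a fixed triple $(j,k,h)$ the factor $\phi^j_{(k)}[h]$ is one definite $k$-th order partial derivative of $\phi^j$, say $\partial^\alpha\phi^j$ with $|\alpha|=k$. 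Writing $\Lambda=\prod_{i=1}^n\left]a^i,b^i\right[$ and applying the divergence theorem once for each of the $k$ elementary differentiations contained in $\partial^\alpha$, I would move the whole operator $\partial^\alpha$ onto the coefficient $\partial L/\partial u^j_{(k)}[h]$; each step produces a factor $-1$ together with a boundary integral, and every boundary integral vanishes because $\phi^j$ is identically zero in a neighborhood of $\partial\Lambda$. After $k$ such steps the $(j,k,h)$-term becomes $(-1)^k\left(\partial L/\partial u^j_{(k)}[h]\right)_{(k)}[h]\,\phi^j$, and summing over $j,k,h$ gives exactly (\ref{se8}),
$$\delta\mathcal{F}(u,\phi)=\int_{\Lambda}\sum_{j=1}^{m}\left(\sum_{k=0}^{s}(-1)^k\sum_{h=1}^{p_k}\left(\frac{\partial L\left(x,u^{(s)}(x)\right)}{\partial u^j_{(k)}[h]}\right)_{(k)}[h]\right)\phi^j(x)\,dx.$$
Here the hypotheses $L\in\mathcal{C}^{s+1}(\Lambda\times\Omega,\mathbb{R})$ and $u\in\mathcal{C}_b^{2s}(\Lambda,U)$ are used precisely to guarantee that $\partial L/\partial u^j_{(k)}[h]$ is of class $\mathcal{C}^s$ and that $u^{(s)}(x)$ still carries $s$ further classical derivatives, so that each of these differentiations of the composite function is legitimate and the resulting integrand is continuous on $\Lambda$.

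Finally I would specialize to $\phi=\psi\,e^l$ with $\psi\in\mathcal{C}^\infty_0(\Lambda,\mathbb{R})$ scalar and $e^l$ the $l$-th vector of the canonical basis of $\mathbb{R}^m$; this collapses the sum over $j$ to its $j=l$ term and yields
$$0=\int_{\Lambda}\left(\sum_{k=0}^{s}(-1)^k\sum_{h=1}^{p_k}\left(\frac{\partial L\left(x,u^{(s)}(x)\right)}{\partial u^l_{(k)}[h]}\right)_{(k)}[h]\right)\psi(x)\,dx$$
for all $\psi\in\mathcal{C}^\infty_0(\Lambda,\mathbb{R})$. Since the coefficient in parentheses is continuous on $\Lambda$, Lemma \ref{lem1} forces it to vanish at every $x\in\Lambda$; letting $l$ run through $1,\dots,m$ gives (\ref{se9}).

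The step I expect to require the most care is the iterated integration by parts: one must verify that the differentiation operator produced on the coefficient side is literally the same operator $(\cdot)_{(k)}[h]$ that labels $\phi^j_{(k)}[h]$ in (\ref{se4}) — which is exactly why the indexing conventions of Section 3 were set up — and that no intermediate stage of the computation exceeds the available smoothness. The remaining ingredients (vanishing of the boundary terms, the reduction to scalar test functions, and the concluding appeal to Lemma \ref{lem1}) are routine.
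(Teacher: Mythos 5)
Your proposal is correct and follows essentially the same route as the paper: pass from $\delta\mathcal{F}(u,\phi)=0$ via (\ref{se4}), integrate by parts with the divergence theorem until all derivatives are transferred onto $\partial L/\partial u^j_{(k)}[h]$ to obtain (\ref{se8}), then specialize to $\phi=\psi\,e^l$ and apply Lemma \ref{lem1}. No substantive differences to report.
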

The equations (\ref{se9}) are called the Euler-Lagrange equations. A solution to the Euler-Lagrange equations
is called an extremum for the functional $\mathcal{F}.$

%%%%%%%%%%%%%%%%%%%%%%%%%%%%%%%%%%%%%%%%%%%%%%%%%%%%%%%%%%%%%%%%%%%%%%%%%%%%%%%%%%%%%%%%%%%%%%%%%%
\subsection{Variational problem with constraints: main results}

We want to find extrema for the functional
\begin{equation}\label{ssse1}
\mathcal{F}(u)=\int_{\Lambda}L\left(x,u^{(s_1)}(x)\right)dx
\end{equation}
subject to constraints of the form
\begin{equation}\label{ssse2}
F_j\left(x,u^{(s_2)}(x)\right)=0\qquad j=1,2, \cdots  ,m'
\end{equation}
for all $x\in \Lambda.$ Let $\Omega_i$ be open subsets of $U^{(s_i)},$ $i=1,\,2$ such that $L$ is defined on $\Lambda\times\Omega_1$ and
$F_j$ is defined on $\Lambda\times\Omega_2.$ Constraints of type  (\ref{ssse2})
are called holonomic constraints if $s_2=0,$ and nonholonomic constraints if $s_2\geq 1.$
In this subsection, we examine these types of constrained variational problems.

For $m=m',$ i.e. the number of equations in the system formed by the constraints is equal
to the number of unknowns, we  exploit the fact that such a system appears for the Euler-Lagrange equations
 of  some variational problems  \cite{r06,r07} to  prove our next result.
\begin{theorem}
Suppose that $L\in\mathcal{C}^{s_1+1}(\Lambda\times \Omega_1,\mathbb{R}),$   $F_j\in\mathcal{C}^{s_2+1}(\Lambda\times \Omega_2,\mathbb{R})$ and that the function $u\in \mathcal{C}_b^{2s}(\Lambda,U),$ $s=\max(s_1,s_2),$
verifies the constraints (\ref{ssse2}) and is a local extremum for the functional $\mathcal{F}$ defined by (\ref{ssse1}).
If a function $\lambda(x)= \left( \lambda^1(x),   \cdots ,\lambda^m(x)  \right)$ defined on $\Lambda$ is solution of the system
\begin{equation}\label{ssse3}
\sum_{k=0}^{s_2}(-1)^k\sum_{h=1}^{p_k} \left( \frac{\partial \left[\sum_{l=1}^m\,\lambda^l(x)\,F_l\left(x,u^{(s_2)}(x)\right)\right]}{\partial u^j_{(k)}[h]}   \right)_{(k)}[h]  =0\quad j=1,2,   \cdots,m
\end{equation}
 then $u$ is a local extremum for the functional whose Lagrangian is
 \begin{equation}\label{ssse4}
 G\left(x,u^{(s)}(x)\right)=L\left(x,u^{(s_1)}(x)\right)+\sum_{l=1}^m\,\lambda^l(x)\,F_l\left(x,u^{(s_2)}(x)\right).
\end{equation}
\end{theorem}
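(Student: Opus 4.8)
The plan is to reduce the constrained problem to the unconstrained Euler--Lagrange theory already established in this section (the theorem producing equation~(\ref{se9})). The key observation is that since $u$ satisfies the constraints~(\ref{ssse2}) identically on $\Lambda$, the term $\sum_{l=1}^m \lambda^l(x)\,F_l(x,u^{(s_2)}(x))$ vanishes along $u$ for \emph{any} choice of multiplier functions $\lambda^l$; hence $\mathcal{G}(u)=\mathcal{F}(u)$, and more importantly the \emph{first variation} of the added term must be computed carefully, because although the term itself vanishes at $u$, its variation in a direction $\phi$ need not. First I would write $\delta\mathcal{G}(u,\phi)=\delta\mathcal{F}(u,\phi)+\delta\mathcal{R}(u,\phi)$, where $\mathcal{R}(u)=\int_\Lambda \sum_{l}\lambda^l F_l(x,u^{(s_2)})\,dx$, and apply formula~(\ref{se8}) (integration by parts, valid since $\phi$ has compact support) to $\mathcal{R}$. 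This yields
\begin{equation*}
\delta\mathcal{R}(u,\phi)=\int_\Lambda \sum_{j=1}^m\left(\sum_{k=0}^{s_2}(-1)^k\sum_{h=1}^{p_k}\left(\frac{\partial\left[\sum_{l=1}^m\lambda^l(x)F_l(x,u^{(s_2)}(x))\right]}{\partial u^j_{(k)}[h]}\right)_{(k)}[h]\right)\phi^j(x)\,dx,
\end{equation*}
and the hypothesis~(\ref{ssse3}) says precisely that the inner bracketed expression vanishes for every $j$ and every $x\in\Lambda$. Therefore $\delta\mathcal{R}(u,\phi)=0$ for all $\phi\in\mathcal{C}^\infty_0(\Lambda,U)$.

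Next I would use the fact that $u$ is a local extremum of $\mathcal{F}$ subject to the constraints, together with the Lagrange-multiplier mechanism. Here the twist is the hypothesis $m=m'$: the remark preceding the theorem (referring to \cite{r06,r07}) indicates that the constraint system~(\ref{ssse2}), being a determined system of $m$ equations in the $m$ unknown functions $u^1,\dots,u^m$, can itself be viewed as the Euler--Lagrange system of an auxiliary variational problem, so that the admissible variations are not genuinely restricted in the usual codimension sense. Concretely, I would argue that for $\phi\in\mathcal{C}^\infty_0(\Lambda,U)$ the curve $t\mapsto u+t\phi$ need not preserve the constraints, but since $\delta\mathcal{R}(u,\phi)=0$ unconditionally and $u$ is stationary for $\mathcal{F}$ among constraint-preserving variations, one shows $\delta\mathcal{F}(u,\phi)=0$ for \emph{all} $\phi$: indeed by~(\ref{se8}) the vanishing of $\delta\mathcal{F}(u,\phi)$ on the (possibly smaller) class of constraint-preserving directions, combined with the structure of the determined constraint system, forces the Euler--Lagrange expression~(\ref{se9}) of $L$ to vanish identically along $u$, which in turn gives $\delta\mathcal{F}(u,\phi)=0$ for every test function.

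Combining the two pieces, $\delta\mathcal{G}(u,\phi)=\delta\mathcal{F}(u,\phi)+\delta\mathcal{R}(u,\phi)=0+0=0$ for all $\phi\in\mathcal{C}^\infty_0(\Lambda,U)$, so by the Proposition (weak form of Euler--Lagrange) $u$ is a weak extremum of the functional with Lagrangian $G$ in~(\ref{ssse4}); and since $L\in\mathcal{C}^{s_1+1}$, $F_j\in\mathcal{C}^{s_2+1}$ and $\lambda$ is as given, $G\in\mathcal{C}^{s+1}(\Lambda\times\Omega,\mathbb{R})$ with $u\in\mathcal{C}_b^{2s}$, so the Euler--Lagrange Theorem upgrades this to the full statement that $u$ is an extremum of $\mathcal{G}$.

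I expect the main obstacle to be the middle step: rigorously justifying that stationarity of $\mathcal{F}$ under constraint-preserving variations, together with $\delta\mathcal{R}(u,\phi)\equiv0$, yields stationarity of $\mathcal{G}$ under \emph{all} variations. In the classical codimension-$m'$ setting one needs $m'<m$ and a rank/transversality condition to construct enough admissible variations; here the paper instead leans on the special case $m=m'$ and the cited interpretation of~(\ref{ssse2}) as itself an Euler--Lagrange system, so the delicate point is to make that reduction precise — essentially showing that the hypothesis~(\ref{ssse3}) is exactly the compatibility condition that lets the multiplier term absorb the constraint without shrinking the effective space of variations. The integration-by-parts computations and the appeal to Lemma~\ref{lem1} are routine by comparison.
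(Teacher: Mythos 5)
Your decomposition $\delta\mathcal{G}(u,\phi)=\delta\mathcal{F}(u,\phi)+\delta\mathcal{R}(u,\phi)$, the integration by parts on $\mathcal{R}$, and the use of hypothesis (\ref{ssse3}) to make $\delta\mathcal{R}(u,\phi)$ vanish are exactly the paper's computation (there the two pieces are called $P_{j,1}$ and $P_{j,2}$). The difference lies in how the term $\delta\mathcal{F}(u,\phi)$ is disposed of, and this is where your argument has a genuine gap. You try to derive $\delta\mathcal{F}(u,\phi)=0$ for \emph{all} $\phi\in\mathcal{C}^{\infty}_0(\Lambda,U)$ from stationarity of $\mathcal{F}$ under constraint-preserving variations together with ``the structure of the determined constraint system.'' No such implication holds: the whole point of the multiplier method is that a constrained extremum satisfies the Euler--Lagrange equations of $L+\sum_l\lambda^l F_l$ and not those of $L$, and the fact that $\delta\mathcal{R}(u,\phi)=0$ for every $\phi$ gives you no leverage here --- it only says the two functionals have identical first variations at $u$, so it cannot convert stationarity on a restricted cone of directions into stationarity on all of $\mathcal{C}^{\infty}_0(\Lambda,U)$. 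The appeal to $m=m'$ and to \cite{r06,r07} does not supply the missing argument either; in the paper those references only motivate why determined constraint systems arise, they are not used as a substitute for a rank or transversality condition.

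The paper does not attempt this step at all. Its proof reads the hypothesis ``$u$ is a local extremum for the functional $\mathcal{F}$'' as asserting that $u$ satisfies the unconstrained Euler--Lagrange equations (\ref{se9}) of $L$, and simply writes $P_{j,1}=0$ on that basis. Under that reading your middle step is unnecessary and the remainder of your argument closes the proof. (The paper additionally promotes the multipliers to dependent variables $v^l$ in an augmented Lagrangian $G'$ and checks the extra Euler--Lagrange equations $Q_j=0$, which merely reproduce the constraints $F_j=0$; your direct treatment of $\mathcal{G}$ with $\lambda$ fixed suffices for the stated conclusion and avoids that detour.) So either adopt the paper's reading of the hypothesis, in which case you should delete the middle step entirely, or accept that under the genuinely constrained reading the theorem requires an additional argument that neither your proposal nor the paper supplies.
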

\begin{proof}
Consider the variational problem whose Lagrangian is defined by
\begin{equation}\label{ssse5}
 G'\left(x,u^{(s)}(x),v(x)\right)=L\left(x,u^{(s_1)}(x)\right)+\sum_{l=1}^m\,v^l(x)\,F_l\left(x,u^{(s_2)}(x)\right),
\end{equation}
where $v(x)=\left(v^1(x), \cdots ,v^m(x)   \right)$ is viewed as dependent variable. The Euler-Lagrange equations of this variational
problem are
\begin{equation}\label{ssse6}
 P_j\equiv \sum_{k=0}^{s}(-1)^k\sum_{h=1}^{p_k} \left( \frac{\partial G'\left(x,u^{(s)}(x),v(x)\right)}{\partial u^j_{(k)}[h]}   \right)_{(k)}[h]  =0;
\end{equation}
\begin{equation}\label{ssse7}
 Q_j\equiv \sum_{k=0}^{s}(-1)^k\sum_{h=1}^{p_k} \left( \frac{\partial G'\left(x,u^{(s)}(x),v(x)\right)}{\partial v^j_{(k)}[h]}   \right)_{(k)}[h]  =0,
\end{equation}
$j=1,2,   \cdots,m.$ Taking into account (\ref{ssse5}), the expressions of $P_j$ and $Q_j$ give
$$ P_j=P_{j,1}+P_{j,2};\quad  Q_j=Q_{j,1}+Q_{j,2} $$
where
$$ P_{j,1}= \sum_{k=0}^{s_1}(-1)^k\sum_{h=1}^{p_k} \left( \frac{\partial L\left(x,u^{(s_1)}(x)\right)}{\partial u^j_{(k)}[h]}   \right)_{(k)}[h] ;$$
$$ P_{j,2}= \sum_{k=0}^{s_2}(-1)^k\sum_{h=1}^{p_k} \left( \frac{\partial \left[\sum_{l=1}^m\,v^l(x)\,F_l\left(x,u^{(s_2)}(x)\right)\right]}{\partial u^j_{(k)}[h]}   \right)_{(k)}[h]; $$
$$ Q_{j,1}= \sum_{k=0}^{s_1}(-1)^k\sum_{h=1}^{p_k} \left( \frac{\partial L\left(x,u^{(s_1)}(x)\right)}{\partial v^j_{(k)}[h]}   \right)_{(k)}[h] ;$$
$$ Q_{j,2}= \sum_{k=0}^{s_2}(-1)^k\sum_{h=1}^{p_k} \left( \frac{\partial \left[\sum_{l=1}^m\,v^l(x)\,F_l\left(x,u^{(s_2)}(x)\right)\right]}{\partial v^j_{(k)}[h]}   \right)_{(k)}[h].  $$

The $P_{j,1}$ are  expressions defining the Euler-Lagrange equations of the variational problem (\ref{ssse1}). Thus, $P_{j,1}=0$ since
 $u$ is a local extremum for the functional $\mathcal{F}.$
 According to the relations (\ref{ssse3}), the expressions $P_{j,2}$ vanish when $v(x)=\lambda(x).$
 The expressions $Q_{j,1}$ vanish since the Lagrangian $L$ does  depend neither on $v$ nor  on its derivatives.

 For $j=1,2,   \cdots,m,$ $Q_{j,2}=F_j\left(x,u^{(s_2)}(x)\right)$ and therefore vanish since the function $u$ satisfies
 the constraints (\ref{ssse2}).

 Finally, the Euler-Lagrange equations (\ref{ssse6})-(\ref{ssse7}) are automatically verified if and only if $v(x)=\lambda(x).$
 This proves that $u$ is also a local extremum for the functional whose Lagrangian is
 $G'\left(x,u^{(s)}(x),\lambda(x)\right)=G\left(x,u^{(s)}(x)\right).$
\end{proof}

For $m'<m,$ we redefine the problem in the following manner:
Find the extrema for the functional
\begin{equation}\label{sssd1}
\mathcal{F}(u,\widetilde{u})=\int_{\Lambda}L\left(x,u^{(s_1)}(x),\widetilde{u}^{(s_1)}(x)\right)dx
\end{equation}
subject to the constraints
\begin{equation}\label{sssd2}
F_j\left(x,u^{(s_2)}(x),\widetilde{u}^{(s_2)}(x)\right)=0\qquad j=1,2, \cdots  ,m
\end{equation}
for all $x\in \Lambda,$ where $\widetilde{u}(x)=\left(\widetilde{u}^1(x), \cdots  ,\widetilde{u}^{\widetilde{m}}(x)\right)\in \widetilde{U},$
$\widetilde{U}$ being an $\widetilde{m}$-dimensional space.
Let $\Omega_i$ be open subsets of $U^{(s_i)}$ and $\widetilde{\Omega}_i$ be open subsets of $\widetilde{U}^{(s_i)},$ $i=1,\,2$ such that $L$ is defined on $\Lambda\times\Omega_1\times \widetilde{\Omega}_1$ and
$F_j$ is defined on $\Lambda\times\Omega_2\times \widetilde{\Omega}_2.$
Here, the number of equations in the system formed by the constraints is lower than
 the number of unknowns, i.e. the constraints form an under-determined system.
 Such a system   appears for the Euler-Lagrange equations
 of some variational problems  \cite{r08}.  We then prove the following result.
\begin{theorem}
Suppose that $L\in\mathcal{C}^{s_1+1}(\Lambda \times \Omega_1 \times \widetilde{\Omega}_1,\mathbb{R}),$ $F_j\in\mathcal{C}^{s_2+1}(\Lambda \times \Omega_2 \times \widetilde{\Omega}_2,\mathbb{R})$ and that the function $(u,\widetilde{u})\in \mathcal{C}_b^{2s}(\Lambda,U) \times \mathcal{C}_b^{2s}(\Lambda,\widetilde{U}),$  $s=\max(s_1,s_2),$
verifies the constraints (\ref{sssd2}) and is a local extremum for the functional $\mathcal{F}$ defined by (\ref{sssd1}).
If a function $\left(\lambda(x),\widetilde{\lambda}(x)\right)$ defined on $\Lambda$ with\\
    $\lambda(x)= \left( \lambda^1(x),   \cdots ,\lambda^m(x)  \right)$ and $\widetilde{\lambda}(x)= \left( \widetilde{\lambda}^1(x),   \cdots ,\widetilde{\lambda}^{\widetilde{m}}(x)  \right),$
  is solution to the system
  \begin{equation}\label{sssd3}
\sum_{k=0}^{s}(-1)^k\sum_{h=1}^{p_k} \left( \frac{\partial \left[\sum_{l=1}^m\left(\lambda^l +\sum_{\widetilde{l}=1}^{\widetilde{m}}\,\widetilde{\lambda}^{\widetilde{l}} \right)F_l\left(x,u^{(s_2)},\widetilde{u}^{(s_2)}\right)\right]}{\partial u^j_{(k)}[h]}   \right)_{(k)}[h]  =0;
\end{equation}
\begin{equation}\label{sssd4}
\sum_{k=0}^{s}(-1)^k\sum_{h=1}^{p_k} \left( \frac{\partial \left[\sum_{l=1}^m\left(\lambda^l +\sum_{\widetilde{l}=1}^{\widetilde{m}}\,\widetilde{\lambda}^{\widetilde{l}} \right)F_l\left(x,u^{(s_2)},\widetilde{u}^{(s_2)}\right)\right]}{\partial \widetilde{u}^{\widetilde{j}}_{(k)}[h]}   \right)_{(k)}[h]  =0,
\end{equation}
$j=1,2,   \cdots,m,$ $\widetilde{j}=1,2,   \cdots,\widetilde{m},$
 then $(u, \widetilde{u})$ is a local extremum for the functional whose Lagrangian is
 \begin{eqnarray}
 G\left(x,u^{(s)}(x),\widetilde{u}^{(s)}(x)\right)&=& \sum_{l=1}^m\left(\lambda^l(x) +\sum_{\widetilde{l}=1}^{\widetilde{m}}\,\widetilde{\lambda}^{\widetilde{l}}(x) \right)F_l\left(x,u^{(s_2)}(x),\widetilde{u}^{(s_2)}(x)\right)\nonumber\\
 &+& L\left(x,u^{(s_1)}(x),\widetilde{u}^{(s_1)}(x)\right).\label{sssd5}
\end{eqnarray}
\end{theorem}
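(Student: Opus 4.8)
The plan is to reproduce, in this two-block setting, the device used in the proof of the preceding theorem: promote the multiplier functions to genuine dependent variables. Concretely, I would introduce new dependent variables $v=(v^1,\cdots,v^m)$ and $\widetilde{v}=(\widetilde{v}^1,\cdots,\widetilde{v}^{\widetilde{m}})$ and consider the auxiliary variational problem with Lagrangian
$$G'\left(x,u^{(s)},\widetilde{u}^{(s)},v,\widetilde{v}\right)=L\left(x,u^{(s_1)},\widetilde{u}^{(s_1)}\right)+\sum_{l=1}^m\left(v^l+\sum_{\widetilde{l}=1}^{\widetilde{m}}\widetilde{v}^{\widetilde{l}}\right)F_l\left(x,u^{(s_2)},\widetilde{u}^{(s_2)}\right),$$
the full tuple of dependent variables now being $(u,\widetilde{u},v,\widetilde{v})\in U\times\widetilde{U}\times\mathbb{R}^m\times\mathbb{R}^{\widetilde{m}}$. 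Since $G'$ is affine in $v$ and $\widetilde{v}$ and involves no derivative of them, writing down the Euler--Lagrange system of this auxiliary problem (as in Section~4.1, applied summand by summand with the smoothness order of each summand) produces four families of equations, indexed respectively by $u^j$, by $\widetilde{u}^{\widetilde{j}}$, by $v^j$, and by $\widetilde{v}^{\widetilde{j}}$; and, exactly as with the splitting $P_j=P_{j,1}+P_{j,2}$ of the preceding proof, each of the first two families splits into an $L$-contribution plus a multiplier-contribution.

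Next I would dispatch the four families in turn, evaluating the system along $(u,\widetilde{u})$ and along the tentative choice $(v,\widetilde{v})=(\lambda,\widetilde{\lambda})$. The $v^j$-equation is simply $\partial G'/\partial v^j=F_j\left(x,u^{(s_2)},\widetilde{u}^{(s_2)}\right)=0$, which holds because $(u,\widetilde{u})$ satisfies the constraints (\ref{sssd2}); the $\widetilde{v}^{\widetilde{j}}$-equation reduces to $\sum_{l=1}^m F_l\left(x,u^{(s_2)},\widetilde{u}^{(s_2)}\right)=0$, again immediate from (\ref{sssd2}). In the $u^j$-family, the $L$-contribution is precisely the $j$-th Euler--Lagrange expression of the functional $\mathcal{F}$ in (\ref{sssd1}) (viewing $(u,\widetilde{u})$ as a single dependent variable of dimension $m+\widetilde{m}$), hence vanishes because $(u,\widetilde{u})$ is a local extremum of $\mathcal{F}$; the multiplier-contribution, once $(v,\widetilde{v})$ is replaced by $(\lambda,\widetilde{\lambda})$, vanishes by hypothesis (\ref{sssd3}). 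The $\widetilde{u}^{\widetilde{j}}$-family is handled identically: its $L$-contribution vanishes by the $\widetilde{u}$-components of the Euler--Lagrange system of $\mathcal{F}$, and its multiplier-contribution by (\ref{sssd4}).

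To conclude, one observes that with $(v,\widetilde{v})=(\lambda,\widetilde{\lambda})$ all four families of Euler--Lagrange equations for $G'$ are satisfied along $(u,\widetilde{u})$; and since $\lambda(x)$ and $\widetilde{\lambda}(x)$ depend on $x$ alone, for each coordinate the partial derivative $\partial G/\partial u^j_{(k)}[h]$ (resp. $\partial G/\partial \widetilde{u}^{\widetilde{j}}_{(k)}[h]$) with $G$ as in (\ref{sssd5}) equals $\partial G'/\partial u^j_{(k)}[h]$ (resp. $\partial G'/\partial \widetilde{u}^{\widetilde{j}}_{(k)}[h]$) evaluated at $v=\lambda$, $\widetilde{v}=\widetilde{\lambda}$. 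Hence the Euler--Lagrange equations of the functional with Lagrangian $G$ hold along $(u,\widetilde{u})$, which is exactly the assertion. The proof is thus a routine adaptation of the previous one; the only point I expect to require care is the bookkeeping of orders: the $L$-contribution involves derivatives only up to order $s_1$ and the multiplier-contribution only up to order $s_2$, so although the outer sums in (\ref{sssd3})--(\ref{sssd4}) run up to $s=\max(s_1,s_2)$, the summands with $k>s_2$ there are identically zero, and one must keep these truncations straight so that the split of the $G'$-system into its $L$- and multiplier-parts matches (\ref{sssd3})--(\ref{sssd4}) term by term.
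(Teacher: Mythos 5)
Your proposal is correct and follows essentially the same route as the paper's own proof: you introduce the multipliers as new dependent variables $v,\widetilde{v}$, form the auxiliary Lagrangian $G'$, split each of the four families of Euler--Lagrange equations into an $L$-part and a multiplier-part, and dispatch them exactly as the paper does (the $L$-parts by extremality of $(u,\widetilde{u})$, the multiplier-parts by (\ref{sssd3})--(\ref{sssd4}), and the $v,\widetilde{v}$-equations by the constraints). Your closing remark on truncating the sums at $s_1$ and $s_2$ matches the paper's decomposition into $P_{j,1},P_{j,2}$, etc.
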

\begin{proof}
Consider the variational problem whose Lagrangian is defined by
\begin{eqnarray}
 G'\left(x,u^{(s)},\widetilde{u}^{(s)},v,\widetilde{v}\right)&=& \sum_{l=1}^m\left(v^l(x) +\sum_{\widetilde{l}=1}^{\widetilde{m}}\,\widetilde{v}^{\widetilde{l}}(x) \right)F_l\left(x,u^{(s_2)}(x),\widetilde{u}^{(s_2)}(x)\right)\nonumber\\
 &+& L\left(x,u^{(s_1)}(x),\widetilde{u}^{(s_1)}(x)\right),\label{sssd6}
\end{eqnarray}
where $v(x)=\left(v^1(x), \cdots ,v^m(x)   \right),$  and $\widetilde{v}(x)=\left(\widetilde{v}^1(x), \cdots ,\widetilde{v}^{\widetilde{m}}(x)   \right)$ are viewed as dependent variables. The Euler-Lagrange equations of this variational
problem are
\begin{equation}\label{sssd7}
 P_j\equiv \sum_{k=0}^{s}(-1)^k\sum_{h=1}^{p_k} \left( \frac{\partial G'\left(x,u^{(s)}(x),\widetilde{u}^{(s)}(x),v(x),\widetilde{v}(x)\right)}{\partial u^j_{(k)}[h]}   \right)_{(k)}[h]  =0;
\end{equation}
\begin{equation}\label{sssd8}
 Q_{\widetilde{j}}\equiv \sum_{k=0}^{s}(-1)^k\sum_{h=1}^{p_k} \left( \frac{\partial G'\left(x,u^{(s)}(x),\widetilde{u}^{(s)}(x),v(x),\widetilde{v}(x)\right)}{\partial \widetilde{u}^{\widetilde{j}}_{(k)}[h]}   \right)_{(k)}[h]  =0,
\end{equation}
\begin{equation}\label{sssd9}
 R_j\equiv \sum_{k=0}^{s}(-1)^k\sum_{h=1}^{p_k} \left( \frac{\partial G'\left(x,u^{(s)}(x),\widetilde{u}^{(s)}(x),v(x),\widetilde{v}(x)\right)}{\partial v^j_{(k)}[h]}   \right)_{(k)}[h]  =0;
\end{equation}
\begin{equation}\label{sssd10}
 S_{\widetilde{j}}\equiv \sum_{k=0}^{s}(-1)^k\sum_{h=1}^{p_k} \left( \frac{\partial G'\left(x,u^{(s)}(x),\widetilde{u}^{(s)}(x),v(x),\widetilde{v}(x)\right)}{\partial \widetilde{v}^{\widetilde{j}}_{(k)}[h]}   \right)_{(k)}[h]  =0,
\end{equation}
$j=1,2,   \cdots,m,$ $\widetilde{j}=1,2,   \cdots,\widetilde{m}.$ Taking into account (\ref{sssd6}), the expressions of $P_j,$  $Q_{\widetilde{j}},$
$R_j$ and $S_{\widetilde{j}}$ are given by
$$ P_j=P_{j,1}+P_{j,2};\quad  Q_{\widetilde{j}}=Q_{\widetilde{j},1}+Q_{\widetilde{j},2} ;$$
$$ R_j=R_{j,1}+R_{j,2};\quad  S_{\widetilde{j}}=S_{\widetilde{j},1}+S_{\widetilde{j},2} $$
where
$$ P_{j,1}= \sum_{k=0}^{s_1}(-1)^k\sum_{h=1}^{p_k} \left( \frac{\partial L\left(x,u^{(s_1)}(x),\widetilde{u}^{(s_1)}(x)\right)}{\partial u^j_{(k)}[h]}   \right)_{(k)}[h] ;$$

$$  P_{j,2}= \sum_{k=0}^{s_2}(-1)^k\sum_{h=1}^{p_k}
 \left( \frac{\partial \left[\sum_{l=1}^m\left(v^l +\sum_{\widetilde{l}=1}^{\widetilde{m}}\,\widetilde{v}^{\widetilde{l}} \right)F_l\left(x,u^{(s_2)},\widetilde{u}^{(s_2)}\right)\right]}{\partial u^j_{(k)}[h]}   \right)_{(k)}[h]  ;  $$
 $$ Q_{\widetilde{j},1}= \sum_{k=0}^{s_1}(-1)^k\sum_{h=1}^{p_k} \left( \frac{\partial L\left(x,u^{(s_1)}(x),\widetilde{u}^{(s_1)}(x)\right)}{\partial \widetilde{u}^{\widetilde{j}}_{(k)}[h]}   \right)_{(k)}[h] ;$$
 $$Q_{\widetilde{j},2}= \sum_{k=0}^{s_2}(-1)^k\sum_{h=1}^{p_k}
 \left( \frac{\partial \left[\sum_{l=1}^m\left(v^l +\sum_{\widetilde{l}=1}^{\widetilde{m}}\,\widetilde{v}^{\widetilde{l}} \right)F_l\left(x,u^{(s_2)},\widetilde{u}^{(s_2)}\right)\right]}{\partial \widetilde{u}^{\widetilde{j}}_{(k)}[h]}   \right)_{(k)}[h]  ;   $$
 $$ R_{j,1}= \sum_{k=0}^{s_1}(-1)^k\sum_{h=1}^{p_k} \left( \frac{\partial L\left(x,u^{(s_1)},\widetilde{u}^{(s_1)}\right)}{\partial v^j_{(k)}[h]}   \right)_{(k)}[h] ;$$
 $$  R_{j,2}= \sum_{k=0}^{s_2}(-1)^k\sum_{h=1}^{p_k}
 \left( \frac{\partial \left[\sum_{l=1}^m\left(v^l +\sum_{\widetilde{l}=1}^{\widetilde{m}}\,\widetilde{v}^{\widetilde{l}} \right)F_l\left(x,u^{(s_2)},\widetilde{u}^{(s_2)}\right)\right]}{\partial v^j_{(k)}[h]}   \right)_{(k)}[h]  ; $$
 $$ S_{\widetilde{j},1}= \sum_{k=0}^{s_1}(-1)^k\sum_{h=1}^{p_k} \left( \frac{\partial L\left(x,u^{(s_1)}(x),\widetilde{u}^{(s_1)}(x)\right)}{\partial \widetilde{v}^{\widetilde{j}}_{(k)}[h]}   \right)_{(k)}[h] ;$$
 $$  S_{\widetilde{j},2}= \sum_{k=0}^{s_2}(-1)^k\sum_{h=1}^{p_k}
 \left( \frac{\partial \left[\sum_{l=1}^m\left(v^l +\sum_{\widetilde{l}=1}^{\widetilde{m}}\,\widetilde{v}^{\widetilde{l}} \right)F_l\left(x,u^{(s_2)},\widetilde{u}^{(s_2)}\right)\right]}{\partial \widetilde{v}^{\widetilde{j}}_{(k)}[h]}   \right)_{(k)}[h]  . $$
The $P_{j,1}$ and $Q_{\widetilde{j},1}$ are  nothing but the Euler-Lagrange equations of the variational problem (\ref{sssd1}). Hence, $P_{j,1}=0$ and $Q_{\widetilde{j},1}=0$ since
 $(u,\widetilde{u})$ is a local extremum for the functional $\mathcal{F}.$

 According to the relations (\ref{sssd3}) and (\ref{sssd4}), the expressions $P_{j,2}$ and $Q_{\widetilde{j},2}$ vanish when $\left(v(x),\widetilde{v}(x)\right)=\left(\lambda(x), \widetilde{\lambda}(x)\right).$

 The expressions $R_{j,1}$ and $S_{\widetilde{j},1}$  vanish since the Lagrangian $L$ does  depend neither on $v$ and $\widetilde{v}$ nor on their derivatives.

 For $j=1,2,   \cdots,m,$ and $\widetilde{j}=1,2,   \cdots,\widetilde{m},$ we have $R_{j,2}=F_j\left(x,u^{(s_2)}(x),\widetilde{u}^{(s_2)}(x)\right)$ and $S_{\widetilde{j},2}=\sum_{l=1}^mF_l\left(x,u^{(s_2)}(x),\widetilde{u}^{(s_2)}(x)\right)$ which therefore vanish since the function $u$ satisfies
 the constraints (\ref{sssd2}).

 Finally, the Euler-Lagrange equations (\ref{sssd7})-(\ref{sssd10}) are automatically verified if and only if $\left(v(x),\widetilde{v}(x)\right)=\left(\lambda(x), \widetilde{\lambda}(x)\right).$
 This proves that $u$ is also a local extremum for the functional whose Lagrangian is
 $G'\left(x,u^{(s)}(x),\widetilde{u}^{(s)}(x),\lambda(x),\widetilde{\lambda}(x)\right)=G\left(x,u^{(s)}(x),\widetilde{u}^{(s)}(x)\right).$
\end{proof}
%%%%%%%%%%%%%%%%%%%%%%%%%%%%%%%%%%%%%%%%%%%%%%%%%%%%%%%%%%%%%%%%%%%%%%%%%%%%%%%%%%%%%%%%%%%%%%%%%%
%%%%%%%%%%%%%%%%%%%%%%%%%%%%%%%%%%%%%%%%%%%%%%%%%%%%%%%%%%%%%%%%%%%%%%%%%%%%%%%%%%%%%%%%%%%%%%%%%%
\section{Second variation and    conditions for local extrema: main results}
This section contains relevant  results which are new to our best knowledge of the literature.
We  investigate  the second variation of a functional as well as  the necessary and sufficient conditions that
a function should satisfy to be either a minimum or a maximum.\\

 Consider a variational problem of the form (\ref{se1}) with the  Lagrangian $L\in \mathcal{C}^2(\Lambda\times \Omega,\mathbb{R}).$
 Define an $m\times m$ block matrix $A$ of second order partial derivatives of $L$ by:
\begin{equation}\label{sse0}
A=\left[ A^{jj'} \right]_{1\leq j,\,j'\leq m}
\end{equation}
with $A^{jj'}$ being again an $s\times s$ block matrix defined by
$$ A^{jj'}=\left[ A^{jj'}_{kk'} \right]_{0\leq k,\,k'\leq s},$$
where $A^{jj'}_{kk'}$ is a $p_k\times p_{k'}$ matrix defined by
 \begin{equation}\label{sse00}
A^{jj'}_{kk'}=\left[ \frac{\partial^2L}{\partial u^j_{(k)}[h]\partial u^{j'}_{(k')}[h']} \right]_{_{1\leq h'\leq p_{k'}\,}^{ 1\leq h\leq p_k  }}.
\end{equation}
Note that the  matrix   $A$ is obviously symmetric by construction.

\begin{example}
 Let us construct the matrix $A^{jj'}$ for particular values of the integers $n$ and $s.$
 If $s=1$, then
 $$  A^{jj'}=\left[\begin{array}{cc} A^{jj'}_{00}& A^{jj'}_{01}\\
 A^{jj'}_{10}&A^{jj'}_{11}\end{array}\right].  $$
 In this case, we have
  for $n=1,$  $x=x^1:$
$$\begin{array}{lll}
A^{jj'}_{00}=\frac{\partial^2L}{\partial u^j   \partial u^{j'}},
&A^{jj'}_{01}=\frac{\partial^2L}{\partial u^j   \partial u^{j'}_{x}},\\
A^{jj'}_{10}=\frac{\partial^2L}{\partial u^j_{x}   \partial u^{j'}},
&A^{jj'}_{11}=\frac{\partial^2L}{\partial u^j_{x}   \partial u^{j'}_{x}}, \end{array}$$
thus
$$  A^{jj'}=\left[\begin{array}{cc} \frac{\partial^2L}{\partial u^j   \partial u^{j'}}& \frac{\partial^2L}{\partial u^j   \partial u^{j'}_{x}}\\
 \frac{\partial^2L}{\partial u^j_{x}   \partial u^{j'}}&\frac{\partial^2L}{\partial u^j_{x}   \partial u^{j'}_{x}}\end{array}\right];  $$
 For $n=2,$  $x=(x^1,x^2):$
 $$\begin{array}{ll} A^{jj'}_{00}=\frac{\partial^2L}{\partial u^j   \partial u^{j'}},&
A^{jj'}_{01}=\left(\begin{array}{ll}\frac{\partial^2L}{\partial u^j   \partial u^{j'}_{x^1}}& \frac{\partial^2L}{\partial u^j   \partial u^{j'}_{x^2}}   \end{array}\right)\end{array},$$
$$\begin{array}{ll}
A^{jj'}_{10}=\left(\begin{array}{l}\frac{\partial^2L}{\partial u^j_{x^1}   \partial u^{j'}}\\
 \frac{\partial^2L}{\partial u^j_{x^1}   \partial u^{j'}}\end{array}\right)  , &A^{jj'}_{11}=\left(\begin{array}{ll}\frac{\partial^2L}{\partial u^j_{x^1}   \partial u^{j'}_{x^1}}&
\frac{\partial^2L}{\partial u^j_{x^1}   \partial u^{j'}_{x^2}}\\
\frac{\partial^2L}{\partial u^j_{x^2}   \partial u^{j'}_{x^1}}&\frac{\partial^2L}{\partial u^j_{x^2}   \partial u^{j'}_{x^2}}         \end{array}\right)\end{array},$$
 thus
 $$  A^{jj'}=\left[\begin{array}{cc} \frac{\partial^2L}{\partial u^j   \partial u^{j'}}& \begin{array}{ll}\frac{\partial^2L}{\partial u^j   \partial u^{j'}_{x^1}}& \frac{\partial^2L}{\partial u^j   \partial u^{j'}_{x^2}}   \end{array}\\
 \begin{array}{l}\frac{\partial^2L}{\partial u^j_{x^1}   \partial u^{j'}}\\
 \frac{\partial^2L}{\partial u^j_{x^1}   \partial u^{j'}}\end{array}&\begin{array}{ll}\frac{\partial^2L}{\partial u^j_{x^1}   \partial u^{j'}_{x^1}}&
\frac{\partial^2L}{\partial u^j_{x^1}   \partial u^{j'}_{x^2}}\\
\frac{\partial^2L}{\partial u^j_{x^2}   \partial u^{j'}_{x^1}}&\frac{\partial^2L}{\partial u^j_{x^2}   \partial u^{j'}_{x^2}}         \end{array}\end{array}\right].  $$

 If $s=2$, then
 $$ \quad A^{jj'}=\left[\begin{array}{ccc} A^{jj'}_{00}& A^{jj'}_{01}&A^{jj'}_{02}\\
 A^{jj'}_{10}&A^{jj'}_{11}&A^{jj'}_{12}\\A^{jj'}_{20}&A^{jj'}_{21}&A^{jj'}_{22}\end{array}\right].  $$
 In this case, we have
  for $n=1,$  $x=x^1:$
$$\begin{array}{lll}
A^{jj'}_{00}=\frac{\partial^2L}{\partial u^j   \partial u^{j'}},
&A^{jj'}_{01}=\frac{\partial^2L}{\partial u^j   \partial u^{j'}_{x}},
&A^{jj'}_{02}=\frac{\partial^2L}{\partial u^j   \partial u^{j'}_{2x}},\\
A^{jj'}_{10}=\frac{\partial^2L}{\partial u^j_{x}   \partial u^{j'}},
&A^{jj'}_{11}=\frac{\partial^2L}{\partial u^j_{x}   \partial u^{j'}_{x}},
&A^{jj'}_{12}=\frac{\partial^2L}{\partial u^j_{x}   \partial u^{j'}_{2x}},\\
A^{jj'}_{20}=\frac{\partial^2L}{\partial u^j_{2x}   \partial u^{j'}},
&A^{jj'}_{21}=\frac{\partial^2L}{\partial u^j_{2x}   \partial u^{j'}_{x}},
&A^{jj'}_{22}=\frac{\partial^2L}{\partial u^j_{2x}   \partial u^{j'}_{2x}}, \end{array}$$
thus
$$ A^{jj'}=\left[\begin{array}{ccc} \frac{\partial^2L}{\partial u^j   \partial u^{j'}}& \frac{\partial^2L}{\partial u^j
  \partial u^{j'}_{x}}&\frac{\partial^2L}{\partial u^j   \partial u^{j'}_{2x}}\\
 \frac{\partial^2L}{\partial u^j_{x}   \partial u^{j'}}&\frac{\partial^2L}{\partial u^j_{x}   \partial u^{j'}_{x}}&\frac{\partial^2L}{\partial u^j_{x}   \partial u^{j'}_{2x}}\\
 \frac{\partial^2L}{\partial u^j_{2x}   \partial u^{j'}}&\frac{\partial^2L}{\partial u^j_{2x}   \partial u^{j'}_{x}}&\frac{\partial^2L}{\partial u^j_{2x}   \partial u^{j'}_{2x}}\end{array}\right];  $$
For $n=2,$  $x=(x^1,x^2):$
$$\begin{array}{ll} A^{jj'}_{00}=\frac{\partial^2L}{\partial u^j   \partial u^{j'}},&
A^{jj'}_{01}=\left(\begin{array}{ll}\frac{\partial^2L}{\partial u^j   \partial u^{j'}_{x^1}}& \frac{\partial^2L}{\partial u^j   \partial u^{j'}_{x^2}}   \end{array}\right)\end{array},$$
$$\begin{array}{l}A^{jj'}_{02}=\left(\begin{array}{lll}\frac{\partial^2L}{\partial u^j   \partial u^{j'}_{2x^1}}&
\frac{\partial^2L}{\partial u^j   \partial u^{j'}_{x^1x^2}}&\frac{\partial^2L}{\partial u^j   \partial u^{j'}_{2x^2}}   \end{array}\right)\end{array}    ,$$
$$\begin{array}{ll}
A^{jj'}_{10}=\left(\begin{array}{l}\frac{\partial^2L}{\partial u^j_{x^1}   \partial u^{j'}}\\
 \frac{\partial^2L}{\partial u^j_{x^1}   \partial u^{j'}}\end{array}\right)  , &A^{jj'}_{11}=\left(\begin{array}{ll}\frac{\partial^2L}{\partial u^j_{x^1}   \partial u^{j'}_{x^1}}&
\frac{\partial^2L}{\partial u^j_{x^1}   \partial u^{j'}_{x^2}}\\
\frac{\partial^2L}{\partial u^j_{x^2}   \partial u^{j'}_{x^1}}&\frac{\partial^2L}{\partial u^j_{x^2}   \partial u^{j'}_{x^2}}         \end{array}\right)\end{array},$$
$$A^{jj'}_{12}=\left(\begin{array}{lll}
\frac{\partial^2L}{\partial u^j_{x^1}   \partial u^{j'}_{2x^1}}&\frac{\partial^2L}{\partial u^j_{x^1}   \partial u^{j'}_{x^1x^2}}&\frac{\partial^2L}{\partial u^j_{x^1}   \partial u^{j'}_{2x^2}}\\
\frac{\partial^2L}{\partial u^j_{x^2}   \partial u^{j'}_{2x^1}}&\frac{\partial^2L}{\partial u^j_{x^2}   \partial u^{j'}_{x^1x^2}}&\frac{\partial^2L}{\partial u^j_{x^2}   \partial u^{j'}_{2x^2}}
       \end{array}\right),$$
$$\begin{array}{ll}
A^{jj'}_{20}=\left(\begin{array}{l}\frac{\partial^2L}{\partial u^j_{2x^1}   \partial u^{j'}}\\
 \frac{\partial^2L}{\partial u^j_{x^1x^2}   \partial u^{j'}}\\
 \frac{\partial^2L}{\partial u^j_{2x^2}   \partial u^{j'}}\end{array}\right) , &
  A^{jj'}_{21}=\left(\begin{array}{ll}\frac{\partial^2L}{\partial u^j_{2x^1}   \partial u^{j'}_{x^1}}&
\frac{\partial^2L}{\partial u^j_{2x^1}   \partial u^{j'}_{x^2}}\\
\frac{\partial^2L}{\partial u^j_{x^1x^2}   \partial u^{j'}_{x^1}}&\frac{\partial^2L}{\partial u^j_{x^1x^2}   \partial u^{j'}_{x^2}}\\
\frac{\partial^2L}{\partial u^j_{2x^2}   \partial u^{j'}_{x^1}}&\frac{\partial^2L}{\partial u^j_{2x^2}   \partial u^{j'}_{x^2}}           \end{array}\right)\end{array},$$
$$A^{jj'}_{22}=\left(\begin{array}{lll}
\frac{\partial^2L}{\partial u^j_{2x^1}   \partial u^{j'}_{2x^1}}&\frac{\partial^2L}{\partial u^j_{2x^1}   \partial u^{j'}_{x^1x^2}}&\frac{\partial^2L}{\partial u^j_{2x^1}   \partial u^{j'}_{2x^2}}\\
\frac{\partial^2L}{\partial u^j_{x^1x^2}   \partial u^{j'}_{2x^1}}&\frac{\partial^2L}{\partial u^j_{x^1x^2}   \partial u^{j'}_{x^1x^2}}&\frac{\partial^2L}{\partial u^j_{x^1x^2}   \partial u^{j'}_{2x^2}}\\
\frac{\partial^2L}{\partial u^j_{2x^2}   \partial u^{j'}_{2x^1}}&\frac{\partial^2L}{\partial u^j_{2x^2}   \partial u^{j'}_{x^1x^2}}&\frac{\partial^2L}{\partial u^j_{2x^2}   \partial u^{j'}_{2x^2}}
       \end{array}\right),$$
thus
$$ A^{jj'}=\left[\begin{array}{ccc} \frac{\partial^2L}{\partial u^j   \partial u^{j'}}& \begin{array}{ll}\frac{\partial^2L}{\partial u^j   \partial u^{j'}_{x^1}}& \frac{\partial^2L}{\partial u^j   \partial u^{j'}_{x^2}}   \end{array}&\begin{array}{lll}\frac{\partial^2L}{\partial u^j   \partial u^{j'}_{2x^1}}&
\frac{\partial^2L}{\partial u^j   \partial u^{j'}_{x^1x^2}}&\frac{\partial^2L}{\partial u^j   \partial u^{j'}_{2x^2}}   \end{array}\\
 \begin{array}{l}\frac{\partial^2L}{\partial u^j_{x^1}   \partial u^{j'}}\\
 \frac{\partial^2L}{\partial u^j_{x^1}   \partial u^{j'}}\end{array}&\begin{array}{ll}\frac{\partial^2L}{\partial u^j_{x^1}   \partial u^{j'}_{x^1}}&
\frac{\partial^2L}{\partial u^j_{x^1}   \partial u^{j'}_{x^2}}\\
\frac{\partial^2L}{\partial u^j_{x^2}   \partial u^{j'}_{x^1}}&\frac{\partial^2L}{\partial u^j_{x^2}   \partial u^{j'}_{x^2}}         \end{array}&\begin{array}{lll}
\frac{\partial^2L}{\partial u^j_{x^1}   \partial u^{j'}_{2x^1}}&\frac{\partial^2L}{\partial u^j_{x^1}   \partial u^{j'}_{x^1x^2}}&\frac{\partial^2L}{\partial u^j_{x^1}   \partial u^{j'}_{2x^2}}\\
\frac{\partial^2L}{\partial u^j_{x^2}   \partial u^{j'}_{2x^1}}&\frac{\partial^2L}{\partial u^j_{x^2}   \partial u^{j'}_{x^1x^2}}&\frac{\partial^2L}{\partial u^j_{x^2}   \partial u^{j'}_{2x^2}}
       \end{array}\\
       \begin{array}{l}\frac{\partial^2L}{\partial u^j_{2x^1}   \partial u^{j'}}\\
 \frac{\partial^2L}{\partial u^j_{x^1x^2}   \partial u^{j'}}\\
 \frac{\partial^2L}{\partial u^j_{2x^2}   \partial u^{j'}}\end{array}&\begin{array}{ll}\frac{\partial^2L}{\partial u^j_{2x^1}   \partial u^{j'}_{x^1}}&
\frac{\partial^2L}{\partial u^j_{2x^1}   \partial u^{j'}_{x^2}}\\
\frac{\partial^2L}{\partial u^j_{x^1x^2}   \partial u^{j'}_{x^1}}&\frac{\partial^2L}{\partial u^j_{x^1x^2}   \partial u^{j'}_{x^2}}\\
\frac{\partial^2L}{\partial u^j_{2x^2}   \partial u^{j'}_{x^1}}&\frac{\partial^2L}{\partial u^j_{2x^2}   \partial u^{j'}_{x^2}}           \end{array}&\begin{array}{lll}
\frac{\partial^2L}{\partial u^j_{2x^1}   \partial u^{j'}_{2x^1}}&\frac{\partial^2L}{\partial u^j_{2x^1}   \partial u^{j'}_{x^1x^2}}&\frac{\partial^2L}{\partial u^j_{2x^1}   \partial u^{j'}_{2x^2}}\\
\frac{\partial^2L}{\partial u^j_{x^1x^2}   \partial u^{j'}_{2x^1}}&\frac{\partial^2L}{\partial u^j_{x^1x^2}   \partial u^{j'}_{x^1x^2}}&\frac{\partial^2L}{\partial u^j_{x^1x^2}   \partial u^{j'}_{2x^2}}\\
\frac{\partial^2L}{\partial u^j_{2x^2}   \partial u^{j'}_{2x^1}}&\frac{\partial^2L}{\partial u^j_{2x^2}   \partial u^{j'}_{x^1x^2}}&\frac{\partial^2L}{\partial u^j_{2x^2}   \partial u^{j'}_{2x^2}}
       \end{array}\end{array}\right].  $$
\end{example}

Let us recall the following formulation of the Taylor's theorem with the remainder, useful in the sequel.
\begin{theorem}[\cite{r04}]
Suppose that $f\in \mathcal{C}^2(I,\mathbb{R}),$  $a\in I,$ where $I$ is an open interval.
Then
$$f(a+t)=f(a)+f'(a)t+f''(a)\frac{t^2}{2}+e(a,t)t^2, $$
where
$$ e(a,t)=\int_{0}^{1}\left[f''(a+\sigma t)-f''(a)\right](1-\sigma)d\sigma   .$$
\end{theorem}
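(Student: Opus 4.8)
The plan is to obtain the formula directly from the fundamental theorem of calculus followed by a single integration by parts, and then to isolate the stated remainder by adding and subtracting $f''(a)$ under the integral sign. Throughout, fix $t$ small enough that the whole segment $\{a+\sigma t\,:\,0\leq\sigma\leq 1\}$ lies in $I$; this is possible because $I$ is open and $a\in I$, and it is precisely the implicit hypothesis under which $f(a+t)$ is meant to make sense. Since $f\in\mathcal{C}^2(I,\mathbb{R})$, the map $\sigma\mapsto f(a+\sigma t)$ is $\mathcal{C}^1$ on $[0,1]$, so the fundamental theorem of calculus gives
$$f(a+t)-f(a)=\int_0^1 t\,f'(a+\sigma t)\,d\sigma.$$

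Next I would integrate by parts on the right-hand side, choosing as antiderivative of the constant $1$ the function $\sigma\mapsto-(1-\sigma)$; the freedom in the constant of integration is exactly what produces the Taylor structure. Because $f''$ is continuous, this is legitimate and yields
$$\int_0^1 t\,f'(a+\sigma t)\,d\sigma=\Big[-(1-\sigma)\,t\,f'(a+\sigma t)\Big]_{\sigma=0}^{\sigma=1}+\int_0^1(1-\sigma)\,t^2\,f''(a+\sigma t)\,d\sigma=t\,f'(a)+t^2\int_0^1(1-\sigma)\,f''(a+\sigma t)\,d\sigma.$$
Then I would split the last integrand by writing $f''(a+\sigma t)=f''(a)+\big[f''(a+\sigma t)-f''(a)\big]$ and use $\int_0^1(1-\sigma)\,d\sigma=\tfrac12$ to get
$$t^2\int_0^1(1-\sigma)\,f''(a+\sigma t)\,d\sigma=f''(a)\,\frac{t^2}{2}+t^2\int_0^1\big[f''(a+\sigma t)-f''(a)\big](1-\sigma)\,d\sigma=f''(a)\,\frac{t^2}{2}+e(a,t)\,t^2.$$
Collecting the three contributions $f(a)$, $f'(a)\,t$, and the two terms just displayed gives the asserted identity.

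There is no genuine obstacle here: the only points deserving a word of care are that $a+\sigma t\in I$ for every $\sigma\in[0,1]$, which is handled by the openness of $I$, and that $f\in\mathcal{C}^2$, so that $f''$ is continuous and both the differentiation of $\sigma\mapsto f(a+\sigma t)$ and the integration by parts are justified and the remainder integral converges. As a consistency check one may observe that continuity of $f''$ forces $e(a,t)\to 0$ as $t\to 0$, recovering the familiar little-$o$ form of Taylor's theorem, although this is not needed for the statement itself.
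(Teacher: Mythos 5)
Your proof is correct. Note that the paper itself does not prove this statement: it is quoted verbatim from the reference \cite{r04} as a recalled form of Taylor's theorem with integral remainder, so there is no in-paper argument to compare against. Your derivation --- the fundamental theorem of calculus applied to $\sigma\mapsto f(a+\sigma t)$, one integration by parts with the antiderivative $-(1-\sigma)$ of the constant $1$, and the splitting $f''(a+\sigma t)=f''(a)+[f''(a+\sigma t)-f''(a)]$ --- is the standard route to this identity, and every step is justified under the stated $\mathcal{C}^2$ hypothesis. The only caveat, which you already flag, is that the identity holds for those $t$ with the segment from $a$ to $a+t$ contained in $I$; this is implicit in the statement and harmless.
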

We can apply this theorem to rewrite (\ref{se2}) as
\begin{equation}\label{sse1}
\mathcal{F}(u+t\,\phi)=\Phi(t)=\Phi(0)+\Phi'(0)t+\frac{1}{2}t^2\Phi''(0)+e(0,t)t^2,
\end{equation}
where
\begin{equation}\label{sse2}
e(0,t)=\int_{0}^{1}\left[\Phi''(\sigma t)-\Phi''(0)\right](1-\sigma)d\sigma   .
\end{equation}

As  already shown  $\Phi(0)=\mathcal{F}(u)$ and by definition $\Phi'(0)=\delta \mathcal{F}(u,\phi).$
The quantity $\Phi''(t)$ can be found by differentiating (\ref{se10}) under the integral sign and using  the chain rule:
\begin{eqnarray}
\Phi''(t)&=&\frac{d}{d t}\int_{\Lambda}\sum_{j=1}^{m}\sum_{k=0}^{s}\sum_{h=1}^{p_k}\,\phi^j_{(k)}[h](x)\,\frac{\partial L\left(x,u^{(s)}(x)+t\, \phi^{(s)}(x)\right)}{\partial u^j_{(k)}[h]}dx\nonumber\\
&=&\int_{\Lambda}\sum_{j=1}^{m}\sum_{k=0}^{s}\sum_{h=1}^{p_k}\,\phi^j_{(k)}[h](x)\,\frac{d}{d t}\left(\frac{\partial L\left(x,u^{(s)}(x)+t\, \phi^{(s)}(x)\right)}{\partial u^j_{(k)}[h]}\right)dx\nonumber\\
&=&\int_{\Lambda}\sum_{j,j'=1}^m \sum_{k,k'=0}^s \sum_{h=1}^{p_k}\sum_{h'=1}^{p_{k'}}\phi^j_{(k)}[h]\phi^{j'}_{(k')}[h'] \frac{\partial^2 L\left(x,u^{(s)}+t\, \phi^{(s)}\right)}{\partial u^j_{(k)}[h]\,\partial u^{j'}_{(k')}[h']}dx\nonumber\\
          %  & = & \int_{\Lambda}\sum_{j,j'=1}^m \sum_{k,k'=0}^s \,\phi^j_{(k)}(x)\,  A^{jj'}_{kk'}\left(x,u^{(s)}(x)+t\, \phi^{(s)}(x)\right)  \,^{T}\phi^{j'}_{(k')}(x)\,dx\nonumber\\
          % & = & \int_{\Lambda}\sum_{j,j'=1}^m\,\phi^{j(s)}(x)\, A^{jj'}\left(x,u^{(s)}(x)+t\, \phi^{(s)}(x)\right)  \,^{T}\phi^{j'(s)}\,dx\nonumber\\
            & = & \int_{\Lambda}\phi^{(s)}(x)\, A\left(x,u^{(s)}(x)+t\, \phi^{(s)}(x)\right) \, ^{T}\phi^{(s)}(x)\,dx,\label{sse3}
\end{eqnarray}
where the notation $^{T}(\cdot)$ denotes the transpose of $(\cdot)$. In particular at $t=0,$ we get
\begin{eqnarray}
\Phi''(0)&=&\int_{\Lambda}\sum_{j,j'=1}^m \sum_{k,k'=0}^s \sum_{h=1}^{p_k}\sum_{h'=1}^{p_{k'}}\phi^j_{(k)}[h](x)\phi^{j'}_{(k')}[h'](x) \frac{\partial^2 L\left(x,u^{(s)}(x)\right)}{\partial u^j_{(k)}[h]\,\partial u^{j'}_{(k')}[h']}dx\nonumber\\
& = & \int_{\Lambda}\phi^{(s)}(x)\, A\left(x,u^{(s)}(x)\right) \, ^{T}\phi^{(s)}(x)\,dx.\label{sse4}
\end{eqnarray}
We then arrive at the following formulation.
\begin{definition}\label{df2}
 The second variation of the functional $\mathcal{F}$ in the neighborhood of  $u$ in the direction $\phi$ is defined by
 \begin{equation}\label{sse5}
 \delta^2 \mathcal{F}(u+t\, \phi,\phi)=\Phi''(t) .
 \end{equation}
 In particular, the second variation of $\mathcal{F}$ at $u$ in the direction $\phi$ is given by
 \begin{equation}\label{sse6}
 \delta^2 \mathcal{F}(u,\phi)=\Phi''(0) .
 \end{equation}
\end{definition}
The Taylor expansion  (\ref{sse1}) can be now re-expressed as
\begin{eqnarray}
\Phi(t)&=&\mathcal{F}(u+t\,\phi)\,\,=\,\, \mathcal{F}(u)+t\delta \mathcal{F}(u,\phi)+\frac{1}{2}t^2\delta^2 \mathcal{F}(u,\phi)\nonumber\\
&+& t^2  \int_{0}^{1}\left[\delta^2 \mathcal{F}(u+\sigma t\,\phi,\phi)-\delta^2 \mathcal{F}(u,\phi)\right](1-\sigma)d\sigma   .\label{sse7}
\end{eqnarray}
Let us also recall the following two  results which are important to prove the main results of this work.
\begin{lemma}[\cite{r04}]\label{lem2}
 Suppose that $A=\left(a_{ij}\right)$ is an $N\times N$ matrix, and set $|||  A    |||=\sqrt{\sum_{i,j=1}^N  a_{ij}^2 }.$ Then
 ( $\| \cdot \|$ denotes the Euclidean norm $\mathbb{R}^N$)
 \begin{enumerate}
 \item $v\cdot w\leq \| v \|  \| w  \|$ for all $v$ and $w$ in $\mathbb{R}^N.$
 \item $|A v |\leq |||  A    |||\, \|v \|$ for all $v$ in $\mathbb{R}^N.$
 \item $| v\cdot A w  | \leq |||  A    |||\, \| v \|  \| w  \|   $ for all $v$ and $w$ in $\mathbb{R}^N.$
 \end{enumerate}
\end{lemma}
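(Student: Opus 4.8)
The plan is to establish the three inequalities in the stated order, since each one feeds into the next, and the whole argument reduces to the Cauchy--Schwarz inequality plus one coordinatewise estimate.

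First, for item (1) I would prove Cauchy--Schwarz directly. If $w=0$ the inequality is trivial, so assume $w\neq 0$. For every real $t$ we have $0\le \|v-tw\|^2=\|v\|^2-2t\,(v\cdot w)+t^2\|w\|^2$. The right-hand side is a quadratic in $t$ that is nowhere negative, so its discriminant is $\le 0$, i.e. $(v\cdot w)^2\le \|v\|^2\|w\|^2$; taking square roots and using $v\cdot w\le |v\cdot w|$ gives $v\cdot w\le \|v\|\,\|w\|$.

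Next, for item (2), write the $i$-th coordinate of $Av$ as $(Av)_i=\sum_{j=1}^N a_{ij}v_j$, which is the Euclidean inner product of the row vector $(a_{i1},\dots,a_{iN})$ with $v$. Applying item (1) to that pair yields $|(Av)_i|^2\le \bigl(\sum_{j=1}^N a_{ij}^2\bigr)\|v\|^2$. Summing over $i=1,\dots,N$ gives $\|Av\|^2=\sum_{i=1}^N |(Av)_i|^2\le \bigl(\sum_{i,j=1}^N a_{ij}^2\bigr)\|v\|^2=|||A|||^2\,\|v\|^2$, and taking square roots produces $|Av|\le |||A|||\,\|v\|$. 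Finally, item (3) is immediate by composing the first two: apply item (1) to the pair $v$, $Aw$ to get $|v\cdot Aw|\le \|v\|\,\|Aw\|$, then bound $\|Aw\|\le |||A|||\,\|w\|$ by item (2), and multiply.

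The argument is entirely routine and I do not expect any genuine obstacle; the only point worth care is to invoke item (1) in the per-row coordinate form when proving item (2), rather than attempting to bound the matrix action on $v$ in one stroke.
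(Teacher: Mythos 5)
Your proof is correct. Note that the paper itself gives no proof of this lemma: it is stated as a recalled result quoted from the reference \cite{r04}, so there is no in-paper argument to compare against. Your three-step argument --- Cauchy--Schwarz via the nonnegative quadratic in $t$, the row-wise application of Cauchy--Schwarz to bound $\|Av\|$ by the Frobenius norm, and the composition of the two for item (3) --- is the standard derivation and is complete as written.
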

\begin{definition}[Positive semi-definite]
A symmetric matrix $A\in \mathbb{R}^{N^2}$ is called positive semi-definite if $v\cdot A\,v\geq 0$ for all $v\in\mathbb{R}^N.$
\end{definition}
\begin{definition}[Positive definite]
A symmetric matrix $A\in \mathbb{R}^{N^2}$ is called positive definite if $v\cdot A\,v> 0$ for all $v\in\mathbb{R}^N\setminus \{ 0 \}.$
\end{definition}
\begin{lemma}[\cite{r04}]\label{lem3}
 Suppose that $A$ is a positive definite $N\times N$ matrix. Then there is a constant $k>0$ such that
 $v\cdot A \,v\geq k \| v\|^2$ for all $v$ in $\mathbb{R}^N.$
\end{lemma}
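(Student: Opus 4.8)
The plan is to reduce the claimed bound to the unit sphere and invoke compactness. First I would introduce the function $f:\mathbb{R}^N\to\mathbb{R}$ defined by $f(v)=v\cdot Av$; since $f$ is a quadratic polynomial in the components of $v$, it is continuous. Restricting $f$ to the unit sphere $S=\{v\in\mathbb{R}^N:\|v\|=1\}$, which is closed and bounded, hence compact, the extreme value theorem guarantees that $f$ attains a minimum value $k=\min_{v\in S}f(v)$ at some point $v_0\in S$.

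Next I would verify that $k>0$. Because $v_0\in S$ we have $v_0\neq 0$, so the hypothesis that $A$ is positive definite gives $k=f(v_0)=v_0\cdot Av_0>0$. Then I would extend the inequality to all of $\mathbb{R}^N$ by homogeneity: for $v=0$ the assertion reads $0\geq 0$, which is trivially true, and for $v\neq 0$ the vector $\hat v=v/\|v\|$ lies on $S$, so $f(\hat v)\geq k$; using $f(\alpha w)=\alpha^2 f(w)$ for any scalar $\alpha$ and vector $w$, multiplying through by $\|v\|^2$ yields $v\cdot Av=f(v)=\|v\|^2 f(\hat v)\geq k\|v\|^2$, as required.

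There is essentially no serious obstacle here; the one point demanding care is that the minimum over the sphere is \emph{strictly} positive, and this is precisely where positive definiteness (rather than mere positive semi-definiteness) enters. An alternative route I would mention is to diagonalize $A$ via the spectral theorem: writing $A=Q^{T}DQ$ with $Q$ orthogonal and $D=\mathrm{diag}(\lambda_1,\dots,\lambda_N)$, all $\lambda_i>0$, and setting $k=\min_{1\leq i\leq N}\lambda_i>0$, one obtains $v\cdot Av=\sum_{i=1}^N\lambda_i (Qv)_i^2\geq k\sum_{i=1}^N (Qv)_i^2=k\|Qv\|^2=k\|v\|^2$, the last equality because orthogonal maps preserve the Euclidean norm. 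Either argument produces the constant $k$, and I would present the compactness proof as the main one since it avoids invoking the spectral decomposition.
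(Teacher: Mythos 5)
Your proof is correct and complete: the compactness argument on the unit sphere, the observation that positive definiteness is exactly what makes the minimum strictly positive, and the extension by homogeneity of the quadratic form are all sound, as is the alternative via the spectral theorem. The paper itself states this lemma without proof, merely citing it from the reference \cite{r04}, so there is no in-paper argument to compare against; your argument is the standard one and fills that gap adequately.
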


There results the following.
\begin{lemma}\label{lem4}
 Suppose that $L\in \mathcal{C}^2(\Lambda\times \Omega,\mathbb{R}),$ and $u\in \mathcal{C}_b^s(\Lambda,U)$ is admissible for $\mathcal{F}.$
 Then for any $\epsilon > 0 ,$ there is $\delta > 0$ such that
 $$ | e(0,t)  |\leq \frac{\epsilon}{2} \int_{\Lambda}\sum_{j=1}^{m}\sum_{k=0}^{s}\sum_{h=1}^{p_k}\,\left|\phi^j_{(k)}[h](x) \right|^2 dx
 = \frac{\epsilon}{2} \int_{\Lambda} \left\|\phi^{(s)}(x) \right\|^2 dx $$
for all $\phi\in \mathcal{C}^{\infty}_{0}(\Lambda,U)$ and $| t|\leq \epsilon_0$ such that
\begin{itemize}
\item[(i)] $u+t\,\phi$ is  admissible for $\mathcal{F},$
 \item[(ii)] $| t|| \sigma| \left\|\phi^{(s)}(x) \right\|<\delta$
for all $x\in \Lambda$  and $0\leq \sigma \leq 1.$
\end{itemize}
Here, $\| \cdot \|$ denotes the Euclidean norm in $\mathbb{R}^{q_s}.$
\end{lemma}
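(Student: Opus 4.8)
\noindent The plan is to follow the error term through its integral representation \eqref{sse2}. Since $0\le 1-\sigma\le 1$ and $\int_0^1(1-\sigma)\,d\sigma=\tfrac12$,
$$|e(0,t)|\le\int_0^1\big|\Phi''(\sigma t)-\Phi''(0)\big|(1-\sigma)\,d\sigma ,$$
so it is enough to exhibit $\delta>0$, depending only on $\epsilon$, $u$ and $L$, such that
$$\big|\Phi''(\sigma t)-\Phi''(0)\big|\le\epsilon\int_\Lambda\big\|\phi^{(s)}(x)\big\|^2\,dx\qquad\text{for every }\sigma\in[0,1]$$
whenever $\phi\in\mathcal C_0^\infty(\Lambda,U)$ and $|t|\le\epsilon_0$ satisfy (i)--(ii); integrating this against $(1-\sigma)$ over $[0,1]$ then yields $|e(0,t)|\le\tfrac{\epsilon}{2}\int_\Lambda\|\phi^{(s)}(x)\|^2\,dx$, whose right-hand side equals $\tfrac{\epsilon}{2}\int_\Lambda\sum_{j,k,h}|\phi^j_{(k)}[h](x)|^2\,dx$ merely by the definition of the Euclidean norm on $\mathbb R^{q_s}$.

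Next I would insert the quadratic-form expression \eqref{sse3} for $\Phi''$. Since $|t|\le\epsilon_0$ (so in particular (i) holds and, more generally, $u+\sigma t\,\phi$ is admissible for every $\sigma\in[0,1]$, i.e. $u^{(s)}(x)+\sigma t\,\phi^{(s)}(x)\in\Omega$ for all $x\in\Lambda$), we may write
$$\Phi''(\sigma t)-\Phi''(0)=\int_\Lambda \phi^{(s)}(x)\,\Big[A\big(x,u^{(s)}(x)+\sigma t\,\phi^{(s)}(x)\big)-A\big(x,u^{(s)}(x)\big)\Big]\,{}^{T}\phi^{(s)}(x)\,dx .$$
The bracketed matrix is symmetric, being a difference of the symmetric matrices $A$, so Lemma \ref{lem2}(3) applied pointwise in $x$ bounds the integrand in absolute value by $|||A(x,u^{(s)}(x)+\sigma t\,\phi^{(s)}(x))-A(x,u^{(s)}(x))|||\;\|\phi^{(s)}(x)\|^2$. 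Hence everything reduces to making the matrix-norm factor at most $\epsilon$, uniformly over $x\in\Lambda$ and $\sigma\in[0,1]$, as soon as $|\sigma t|\,\|\phi^{(s)}(x)\|<\delta$ --- and this last condition is precisely hypothesis (ii).

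The heart of the argument, and the step I expect to be the real obstacle, is therefore a uniform continuity statement for the second-derivative matrix: since $L\in\mathcal C^2(\Lambda\times\Omega,\mathbb R)$ the entries of $A$ are continuous on $\Lambda\times\Omega$, and I want $\delta>0$ with $|||A(x,y)-A(x,u^{(s)}(x))|||\le\epsilon$ whenever $x\in\Lambda$, $y\in\Omega$ and $\|y-u^{(s)}(x)\|<\delta$, the point being that $\delta$ must be independent of $x$ (hence of $\phi$ and $t$). To obtain such a uniform $\delta$ I would use that $\overline\Lambda=\prod_i[a^i,b^i]$ is compact and that $u\in\mathcal C_b^s(\Lambda,U)$ confines the values $u^{(s)}(x)$, $x\in\Lambda$, to a bounded set; one then needs that the arc $x\mapsto\big(x,u^{(s)}(x)\big)$ stays a fixed positive distance from the boundary of $\Lambda\times\Omega$, so that $A$ is uniformly continuous on a closed tube $\overline\Lambda\times\overline N$ around that arc lying inside the domain (automatic in every concrete application, where $L$ is smooth on a neighbourhood of $\overline\Lambda\times\overline\Omega$), and reads $\delta$ off from the resulting modulus of continuity. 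Making this uniform-neighbourhood step fully rigorous is, to my mind, the only genuinely delicate matter; the rest is bookkeeping.

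With $\delta$ so chosen, hypothesis (ii) places each evaluation point $u^{(s)}(x)+\sigma t\,\phi^{(s)}(x)$ in the $\delta$-section of that tube while (i) (and $|t|\le\epsilon_0$) keeps it in $\Omega$, so $A$ is legitimately evaluated there and the matrix-norm factor is $\le\epsilon$ for all $x\in\Lambda$, $\sigma\in[0,1]$. Combining this with the pointwise bound of the second paragraph gives $|\Phi''(\sigma t)-\Phi''(0)|\le\epsilon\int_\Lambda\|\phi^{(s)}(x)\|^2\,dx$ for every $\sigma\in[0,1]$, and integrating against $(1-\sigma)$ over $[0,1]$ produces $|e(0,t)|\le\tfrac{\epsilon}{2}\int_\Lambda\|\phi^{(s)}(x)\|^2\,dx$, which is the assertion.
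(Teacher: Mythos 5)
Your proof is correct and follows essentially the same route as the paper: both write $e(0,t)$ as $\int_0^1(1-\sigma)\int_\Lambda \phi^{(s)}\bigl[A\bigl(x,u^{(s)}+\sigma t\,\phi^{(s)}\bigr)-A\bigl(x,u^{(s)}\bigr)\bigr]\,{}^{T}\phi^{(s)}\,dx\,d\sigma$, bound the integrand via Lemma \ref{lem2}(3), and choose $\delta$ so that the matrix-norm factor is at most $\epsilon$ under hypothesis (ii). The only substantive difference is that you explicitly isolate the delicate point, namely that $\delta$ must be uniform in $x$ (hence in $\phi$, $t$ and $\sigma$): the paper's proof simply states that ``the matrix $A$ is continuous and there exists $\delta>0$ such that\dots'', silently upgrading continuity of the entries of $A$ to uniform continuity along the graph of $x\mapsto\bigl(x,u^{(s)}(x)\bigr)$; your compactness-and-tube argument (using $\overline\Lambda$ compact, $u\in\mathcal C_b^s$ so that $u^{(s)}(\Lambda)$ is bounded, and the graph staying a positive distance from the boundary of $\Lambda\times\Omega$) is precisely the justification that the paper omits.
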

\begin{proof}
 Using (\ref{sse3}) and (\ref{sse4}), we see that
 \begin{eqnarray}
  e(0,t)  & = & \int_{0}^{1}(1-\sigma)\left[\delta^2 \mathcal{F}(u+\sigma t\,\phi,\phi)-\delta^2 \mathcal{F}(u,\phi)\right]d\sigma\nonumber\\
  &  = & \int_{0}^{1}(1-\sigma) \int_{\Lambda}\phi^{(s)}(x)\,\left[ A\left(x,u^{(s)}(x)+\sigma\,t\, \phi^{(s)}(x)\right)\right.\nonumber\\
 &-&\left.A\left(x,u^{(s)}(x)\right)\right]\, ^{T}\phi^{(s)}(x) dx d\sigma.\nonumber
 \end{eqnarray}
 Each entry of the matrix $A$ is a second derivative of the function $L$ with respect to the coordinates in $\Omega.$

Let $\epsilon > 0.$ Since all the second derivatives of $L$ with respect to the variables in $\Omega$ are continuous, then the matrix
$A$ is continuous and there exists $\delta > 0$ such that, for all $\phi\in \mathcal{C}^{\infty}_{0}(\Lambda,U),$
$$\left\|\left(u^{(s)}(x) +t \,\sigma\,\phi^{(s)}(x)\right)-u^{(s)}(x) \right\|=| t|| \sigma|
\left\|\phi^{(s)}(x) \right\|<\delta \Longrightarrow |||B(x,\sigma, t)||| <\epsilon$$
for all $x\in \Lambda,$ $| t|\leq \epsilon_0$  and $0\leq \sigma \leq 1,$ where
$$  B(x,\sigma, t)= A\left(x,u^{(s)}(x)+\sigma\,t\, \phi^{(s)}(x)\right) -A\left(x,u^{(s)}(x)\right) .  $$
Therefore, using the  continuity of the bilinear form induced by the matrix $A$  (see the third property  of  \textrm{Lemma} \ref{lem2}),
 we obtain the required result:
\begin{eqnarray}
  |e(0,t) | & \leq & \epsilon \int_{0}^{1}(1-\sigma)d\sigma \int_{\Lambda}\sum_{j=1}^{m}\sum_{k=0}^{s}\sum_{h=1}^{p_k}\,\left|\phi^j_{(k)}[h](x) \right|^2 dx\nonumber\\
  &  \leq & \frac{\epsilon}{2} \int_{\Lambda}\sum_{j=1}^{m}\sum_{k=0}^{s}\sum_{h=1}^{p_k}\,\left|\phi^j_{(k)}[h](x) \right|^2 dx\,\,=\,\,\frac{\epsilon}{2} \int_{\Lambda} \left\|\phi^{(s)}(x) \right\|^2 dx. \nonumber
 \end{eqnarray}
Here, $\|  \cdot\|$ denotes the Euclidean norm in $\mathbb{R}^{q_s}$ since $\phi^{(s)}(x)\in \mathbb{R}^{q_s}$ for all $x\in \Lambda.$
\end{proof}
%\newpage
\begin{theorem}\label{thm}
 Suppose that $L\in \mathcal{C}^2(\Lambda\times \Omega,\mathbb{R}),$ and $u\in \mathcal{C}_b^s(\Lambda,U)$ is admissible for $\mathcal{F}.$
 \begin{enumerate}
  \item If $u$ is a weak local minimum for $\mathcal{F}, $ then for all $\phi\in \mathcal{C}^{\infty}_{0}(\Lambda,U),$ we have
  \begin{equation}\label{sse8}
       \delta \mathcal{F}(u,\phi)=0 \quad \emph{\emph{and}} \quad  \delta^2 \mathcal{F}(u,\phi) \geq 0.
  \end{equation}
  \item If $u$ is a weak local extremum for $\mathcal{F}$ and there is a constant $k>0$ such that
  \begin{equation}\label{sse9}
       \delta^2 \mathcal{F}(u,\phi)\geq k \int_{\Lambda} \left\|\phi^{(s)}(x) \right\|^2 dx
  \end{equation}
  for all $\phi\in \mathcal{C}^{\infty}_{0}(\Lambda,U),$
  then $u$ is a strict weak local minimum.
 \end{enumerate}
\end{theorem}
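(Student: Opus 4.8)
The plan is to reduce both assertions to elementary one-variable calculus applied to $\Phi(t)=\mathcal{F}(u+t\,\phi)$, using the Taylor expansion~(\ref{sse7}) together with the uniform control of its remainder supplied by Lemma~\ref{lem4}.

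\textbf{Part 1.} Fix $\phi\in\mathcal{C}^{\infty}_0(\Lambda,U)$. Since $u$ is a weak local minimum, there is $\epsilon>0$ with $\mathcal{F}(u)\le\mathcal{F}(v)$ for every admissible $v$ satisfying $\|v-u\|_s<\epsilon$. For $v=u+t\,\phi$ one has $\|v-u\|_s=|t|\,\|\phi\|_s$, so whenever $|t|<\epsilon_0$ and, in case $\|\phi\|_s\neq0$, also $|t|<\epsilon/\|\phi\|_s$ (the case $\|\phi\|_s=0$ being trivial), the function $v$ is admissible and $\Phi(t)\ge\Phi(0)$. Hence $t=0$ is an interior local minimum of $\Phi$, and $\Phi$ is twice continuously differentiable near $0$ with $\Phi'$ and $\Phi''$ given by~(\ref{se10}) and~(\ref{sse3}) (differentiation under the integral sign is legitimate since $L\in\mathcal{C}^2$ and $\phi$ has compact support). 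The first and second derivative tests then give $\Phi'(0)=0$ and $\Phi''(0)\ge0$, which by Definitions~\ref{df} and~\ref{df2} are precisely $\delta\mathcal{F}(u,\phi)=0$ and $\delta^2\mathcal{F}(u,\phi)\ge0$, i.e.\ (\ref{sse8}).

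\textbf{Part 2.} Let $v$ be admissible with $v\neq u$ and $\|v-u\|_s$ small, and set $\phi:=v-u\in\mathcal{C}_b^s(\Lambda,U)$. Since the components $\phi^1,\dots,\phi^m$ of $\phi$ occur among the entries of $\phi^{(s)}$, and $\phi$ is continuous and not identically zero on the open set $\Lambda$, we have $\int_\Lambda\|\phi^{(s)}(x)\|^2\,dx>0$. As $u$ is a weak local extremum, the first-order necessary condition of Section~4 gives $\delta\mathcal{F}(u,\phi)=0$, so the Taylor expansion~(\ref{sse7}), which is valid for $\phi=v-u$ by the same derivation, evaluated at $t=1$ becomes
$$\mathcal{F}(v)-\mathcal{F}(u)=\frac{1}{2}\,\delta^2\mathcal{F}(u,\phi)+e(0,1),\qquad e(0,1)=\int_0^1\bigl[\delta^2\mathcal{F}(u+\sigma\phi,\phi)-\delta^2\mathcal{F}(u,\phi)\bigr](1-\sigma)\,d\sigma.$$
Estimating the quadratic term from below by hypothesis~(\ref{sse9}) and $|e(0,1)|$ from above via Lemma~\ref{lem4} applied with $\epsilon:=k/2$ --- whose hypotheses (i)--(ii) hold once $\|v-u\|_s$ is small, because $\sup_{x\in\Lambda}\|\phi^{(s)}(x)\|\le\|\phi\|_s$ and $v=u+\phi$ is admissible, and whose proof uses only the uniform continuity of the Hessian block matrix $A$ and so applies verbatim to $\phi=v-u$ --- we obtain
$$\mathcal{F}(v)-\mathcal{F}(u)\ \ge\ \frac{k}{2}\int_\Lambda\|\phi^{(s)}(x)\|^2\,dx-\frac{k}{4}\int_\Lambda\|\phi^{(s)}(x)\|^2\,dx\ =\ \frac{k}{4}\int_\Lambda\|\phi^{(s)}(x)\|^2\,dx\ >\ 0.$$
Thus $\mathcal{F}(v)>\mathcal{F}(u)$ for every admissible $v\neq u$ with $\|v-u\|_s$ sufficiently small, that is, $u$ is a strict weak local minimum.

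\textbf{Main obstacle.} The two displayed estimates are the heart of the matter, and the real difficulty lies not in either of them alone but in the requirement that the remainder $e(0,1)$ be dominated by the \emph{same} functional $\int_\Lambda\|\phi^{(s)}\|^2\,dx$ that bounds the second variation from below, and \emph{uniformly} as $\phi\to0$; this is exactly what Lemma~\ref{lem4} delivers, through the uniform continuity of $A$ on a compact neighbourhood of $\{(x,u^{(s)}(x)):x\in\overline{\Lambda}\}$. The remaining points, both routine, are that $\phi\not\equiv0$ forces $\int_\Lambda\|\phi^{(s)}\|^2\,dx>0$ (which is what upgrades the inequality to a strict one) and that the first-order condition $\delta\mathcal{F}(u,\phi)=0$ is indeed available for the perturbation $\phi=v-u$.
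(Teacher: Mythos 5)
Your proof is correct and follows essentially the same route as the paper: part 1 is the one-variable first- and second-derivative test applied to $\Phi(t)=\mathcal{F}(u+t\,\phi)$, and part 2 combines the Taylor expansion (\ref{sse7}), the lower bound (\ref{sse9}) on the quadratic term, and Lemma \ref{lem4} with $\epsilon=k/2$ to dominate the remainder by $\frac{k}{4}\int_\Lambda\left\|\phi^{(s)}(x)\right\|^2dx$. The only cosmetic difference is that you normalize $t=1$ and take $\phi=v-u$ (flagging the needed extension of the $\mathcal{C}^{\infty}_{0}$ formulas to such $\phi$), whereas the paper keeps the scalar parameter and writes $v=u+t\,\phi$ with $\phi\in\mathcal{C}^{\infty}_{0}(\Lambda,U)$.
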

\begin{proof}
 For the first part  of \textrm{Theorem} \ref{thm}, the assumption that $u$ is a weak local minimum for $\mathcal{F}$
 implies that $t=0$ is a local minimum for the function  $\Phi(t)=\mathcal{F}(u+t\,\phi).$ Consequently,
 $0=\Phi'(0)=\delta \mathcal{F}(u,\phi).$
 The Taylor expansion (\ref{sse1}) of $\Phi$ gives
 $$ \Phi''(0)=2\,\frac{\Phi(t)-\Phi(0)}{t^2} +2\,e(0,t)  $$
 which leads to
 $$ 0\leq  \lim_{t\rightarrow 0} \,2\,\frac{\Phi(t)-\Phi(0)}{t^2} = \Phi''(0)=\delta^2 \mathcal{F}(u,\phi)    $$
since  $ \lim_{t\rightarrow 0}\,e(0,t)=0 $ and $\Phi(t)\geq \Phi(0)$ for all $t\neq 0.$

For the second part, we suppose that $v\in \mathcal{C}_b^s(\Lambda,U)$ is admissible for $\mathcal{F}.$
Let $\phi\in \mathcal{C}^{\infty}_{0}(\Lambda,U)$ so that $v=u+t\,\phi$ for some  $t\in \mathbb{R}$ such that $| t|\leq \epsilon_0.$ Then
$$
\mathcal{F}(v)=\mathcal{F}(u+t\,\phi)=\mathcal{F}(u)+t\delta \mathcal{F}(u,\phi)+\frac{1}{2}t^2\delta^2 \mathcal{F}(u,\phi)+t^2 \,e(0,t),$$
where
$$e(0,t)=\int_{0}^{1}\left[\delta^2 \mathcal{F}(u+\sigma t\,\phi,\phi)-\delta^2 \mathcal{F}(u,\phi)\right](1-\sigma)d\sigma   .$$
By assumption, $u$ is a weak extremum, so $\delta \mathcal{F}(u,\phi)=0.$
By \textrm{Lemma} \ref{lem4}, there is $\epsilon > 0$ such that
$$ | e(0,t)  |\leq  \frac{k}{4} \int_{\Lambda} \left\|\phi^{(s)}(x) \right\|^2 dx $$
provided  $| t|| \sigma| \left\|\phi^{(s)}(x) \right\|<\epsilon$
for all $x\in \Lambda$  and $0\leq \sigma \leq 1.$
Therefore, using (\ref{sse9}), if $\| v-u  \|_s=| t|\,\| \phi\|_s < \epsilon$ we have
\begin{eqnarray}
 \mathcal{F}(v)&\geq& \mathcal{F}(u)+\frac{t^2}{2}\delta^2 \mathcal{F}(u,\phi)-t^2| e(0,t)  |\nonumber\\
           &   \geq &  \mathcal{F}(u)+   \frac{k\,t^2}{2}  \int_{\Lambda} \left\|\phi^{(s)}(x) \right\|^2 dx-\frac{k\,t^2}{4}\int_{\Lambda} \left\|\phi^{(s)}(x) \right\|^2 dx\nonumber\\
           & = & \mathcal{F}(u)+  \frac{k\,t^2}{4}  \int_{\Lambda} \left\|\phi^{(s)}(x) \right\|^2 dx.\nonumber
\end{eqnarray}
If $v\neq u,$ i.e. $\phi \neq 0,$ then the integral on the right hand side is strictly positive, and we  have $\mathcal{F}(v)> \mathcal{F}(u).$ Therefore $u$ is a strict weak
local minimum.
\end{proof}

\begin{theorem}\label{thmd}
 Let $\Lambda$ be a bounded connected  subset of $X.$ Suppose that $L\in \mathcal{C}^2(\Lambda\times \Omega,\mathbb{R}),$ and $u\in \mathcal{C}_b^s(\Lambda,U)$ is admissible for $\mathcal{F}.$
  If $u$ is a weak local extremum for $\mathcal{F}$ and
  \begin{equation}\label{6sse9}
        \delta^2 \mathcal{F}(u,\phi) \geq 0
  \end{equation}
  for all $\phi\in \mathcal{C}^{\infty}_{0}(\Lambda,U),$
  then $u$ is a weak local minimum.
\end{theorem}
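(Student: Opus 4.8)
The plan is to reduce the statement to the case where $u$ is a weak local maximum, to show that in that case the second variation of $\mathcal{F}$ at $u$ is identically zero, and then to use the Taylor expansion (\ref{sse7}) together with Lemma~\ref{lem4} to conclude that $\mathcal{F}$ is locally constant about $u$, so that $u$ is trivially a weak local minimum.

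First I would invoke the dichotomy: since $u$ is a weak local extremum, it is either a weak local minimum --- in which case there is nothing to prove --- or a weak local maximum, and I would assume the latter. Then $u$ is a weak local minimum of the functional $-\mathcal{F}$, whose Lagrangian $-L$ still lies in $\mathcal{C}^2(\Lambda\times\Omega,\mathbb{R})$, so applying the first part of Theorem~\ref{thm} to $-\mathcal{F}$ gives $\delta^2(-\mathcal{F})(u,\phi)\ge 0$, that is $\delta^2\mathcal{F}(u,\phi)\le 0$, for every $\phi\in\mathcal{C}^{\infty}_0(\Lambda,U)$. Combined with the standing hypothesis $\delta^2\mathcal{F}(u,\phi)\ge 0$ this yields
\begin{equation*}
\delta^2\mathcal{F}(u,\phi)=0\qquad\text{for all }\phi\in\mathcal{C}^{\infty}_0(\Lambda,U),
\end{equation*}
while $\delta\mathcal{F}(u,\phi)=0$ holds as well because $u$, being a weak local extremum, satisfies the first-order necessary condition.

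Next, for an admissible $v$ close to $u$ I would write $v=u+t\,\phi$ with $\phi\in\mathcal{C}^{\infty}_0(\Lambda,U)$ and $|t|\le\epsilon_0$, so that the Taylor expansion (\ref{sse7}), after using $\delta\mathcal{F}(u,\phi)=\delta^2\mathcal{F}(u,\phi)=0$, becomes $\mathcal{F}(v)=\mathcal{F}(u)+t^2 e(0,t)$. By Lemma~\ref{lem4}, for any prescribed $\epsilon>0$ one can guarantee $|e(0,t)|\le\frac{\epsilon}{2}\int_\Lambda\|\phi^{(s)}(x)\|^2\,dx$ by taking $v$ near enough to $u$, the integral being finite since $\Lambda$ is bounded. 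Hence $\mathcal{F}(u)-\tfrac{\epsilon}{2}\,t^2\!\int_\Lambda\|\phi^{(s)}\|^2\,dx\le\mathcal{F}(v)\le\mathcal{F}(u)$, the right inequality being the weak local maximum property; letting $\epsilon\downarrow 0$ (and using the density of $\mathcal{C}^{\infty}_0(\Lambda,U)$ to pass to a general admissible $v$ near $u$) forces $\mathcal{F}(v)=\mathcal{F}(u)$, so $\mathcal{F}$ is locally constant around $u$ and in particular $u$ is a weak local minimum.

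The main obstacle will be exactly this last step: converting the family of pointwise identities $\delta^2\mathcal{F}(u,\phi)=0$ together with the local-maximum property into the inequality $\mathcal{F}(v)\ge\mathcal{F}(u)$. The difficulty is that Lemma~\ref{lem4} only bounds the remainder $e(0,t)$ by a quantity carrying an arbitrarily small but nonzero coefficient $\epsilon$, and only on a neighbourhood whose size shrinks with $\epsilon$; making the limit $\epsilon\downarrow 0$ effective for a fixed $v$, and reducing an arbitrary admissible perturbation $v-u$ (which need not be compactly supported) to the compactly supported case, is where the boundedness of $\Lambda$ and the approximation by functions in $\mathcal{C}^{\infty}_0(\Lambda,U)$ must be used carefully; this is the part of the argument that deserves the most attention.
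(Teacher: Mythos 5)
Your route is genuinely different from the paper's: the paper never uses the minimum/maximum dichotomy or the reduction to $\delta^2\mathcal{F}(u,\phi)=0$; it works directly from $\delta^2\mathcal{F}(u,\phi)\ge 0$ to get $\mathcal{F}(v)\ge\mathcal{F}(u)-\frac{\widehat{\epsilon}}{2}\,\epsilon^2\,\mathrm{mes}(\Lambda)$ on the Lemma~\ref{lem4} neighbourhood and then ``lets $\widehat{\epsilon}\to 0$.'' Your preliminary reductions are correct: since $\delta^2(-\mathcal{F})(u,\phi)=-\delta^2\mathcal{F}(u,\phi)$, applying part (1) of Theorem~\ref{thm} to $-\mathcal{F}$ when $u$ is a weak local maximum does force $\delta^2\mathcal{F}(u,\phi)=0$ for all $\phi\in\mathcal{C}^{\infty}_0(\Lambda,U)$, and the Taylor identity $\mathcal{F}(v)=\mathcal{F}(u)+t^2e(0,t)$ follows. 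But the step you yourself flag as the main obstacle is a genuine gap, not a technicality. Lemma~\ref{lem4} has the quantifier order ``for every $\epsilon>0$ there is $\delta(\epsilon)>0$ such that the bound on $e(0,t)$ holds whenever $|t||\sigma|\,\|\phi^{(s)}(x)\|<\delta(\epsilon)$.'' For a fixed $v=u+t\phi$ at positive distance from $u$, this constraint is violated once $\epsilon$ is small enough, so you cannot send $\epsilon\downarrow 0$ while holding $v$ fixed; what you actually obtain is a family of lower bounds, each valid on a neighbourhood that shrinks with $\epsilon$, and this yields no inequality on any fixed neighbourhood. (The paper's own proof of Theorem~\ref{thmd} performs exactly the same illegitimate passage to the limit, so you have reproduced its defect rather than introduced a new one.)

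The gap cannot be repaired, because the statement is false as it stands. Take $\Lambda=]0,1[$, $m=n=1$, $s=1$ and $L\left(x,u^{(1)}\right)=-u^4$. Then $u\equiv 0$ is a strict global maximum of $\mathcal{F}(v)=-\int_0^1v(x)^4\,dx$, hence a weak local extremum with $\delta\mathcal{F}(0,\phi)=0$; the matrix $A\left(x,0\right)$ vanishes identically, so $\delta^2\mathcal{F}(0,\phi)=0\ge 0$ for every $\phi$; yet $\mathcal{F}(v)<\mathcal{F}(0)$ for every admissible $v\ne 0$, so $0$ is not a weak local minimum. In this example $e(0,t)=-t^2\int_0^1\phi^4\,dx$, which is compatible with Lemma~\ref{lem4} but is not zero, and this is precisely where your ``$\mathcal{F}$ is locally constant about $u$'' conclusion collapses. (The same example defeats your proposed reduction of general admissible $v$ to compactly supported perturbations, which is a second unproved step.) A correct sufficient condition must control the second variation away from $u$ itself: either $\delta^2\mathcal{F}(u+\sigma t\phi,\phi)\ge 0$ for all $\sigma\in[0,1]$, which makes $\Phi$ convex and immediately gives $\Phi(t)\ge\Phi(0)$, or the uniform coercivity $\delta^2\mathcal{F}(u,\phi)\ge k\int_\Lambda\left\|\phi^{(s)}(x)\right\|^2dx$ already used in part (2) of Theorem~\ref{thm}.
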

\begin{proof}
Let $\widehat{\epsilon}> 0.$
Suppose that $v\in \mathcal{C}_b^s(\Lambda,U)$ is admissible for $\mathcal{F}.$
Let $\phi\in \mathcal{C}^{\infty}_{0}(\Lambda,U)$ so that $v=u+t\,\phi$ for some  $t\in \mathbb{R}$ such that $| t|\leq \epsilon_0.$ Then
$$
\mathcal{F}(v)=\mathcal{F}(u+t\,\phi)=\mathcal{F}(u)+t\delta \mathcal{F}(u,\phi)+\frac{1}{2}t^2\delta^2 \mathcal{F}(u,\phi)+t^2 \,e(0,t),$$
where
$$e(0,t)=\int_{0}^{1}\left[\delta^2 \mathcal{F}(u+\sigma t\,\phi,\phi)-\delta^2 \mathcal{F}(u,\phi)\right](1-\sigma)d\sigma   .$$
By assumption, $u$ is a weak extremum, so $\delta \mathcal{F}(u,\phi)=0.$
By \textrm{Lemma} \ref{lem4}, there is $\epsilon > 0$ such that
$$ | e(0,t)  |\leq  \frac{\widehat{\epsilon}}{2} \int_{\Lambda} \left\|\phi^{(s)}(x) \right\|^2 dx\leq \frac{\widehat{\epsilon}}{2}\, \emph{\emph{mes}}(\Lambda)\,\| \phi\|_s^2 $$
provided  $| t|| \sigma| \left\|\phi^{(s)}(x) \right\|<\epsilon$
for all $x\in \Lambda$  and $0\leq \sigma \leq 1.$
Therefore, using (\ref{6sse9}), if $\| v-u  \|_s=| t|\,\| \phi\|_s < \epsilon$ with $\phi\neq 0,$ we have
\begin{eqnarray}
 \mathcal{F}(v)&\geq& \mathcal{F}(u)+\frac{t^2}{2}\delta^2 \mathcal{F}(u,\phi)-t^2| e(0,t)  |\nonumber\\
           &   \geq &  \mathcal{F}(u)-\frac{\widehat{\epsilon}}{2} \,t^2\, \emph{\emph{mes}}(\Lambda)\left\|\phi \right\|_s^2\nonumber\\
           & \geq & \mathcal{F}(u)-\frac{\widehat{\epsilon}}{2} \,\epsilon^2 \,\emph{\emph{mes}}(\Lambda).\nonumber
\end{eqnarray}
Thus,
\begin{eqnarray}
 \mathcal{F}(v)  & \geq & \mathcal{F}(u)-\lim_{\widehat{\epsilon}\rightarrow 0}\left[\frac{\widehat{\epsilon}}{2}  \,\epsilon^2\, \emph{\emph{mes}}(\Lambda)\right]=\mathcal{F}(u).\nonumber
\end{eqnarray}
We  have $\mathcal{F}(v)\geq \mathcal{F}(u).$ Therefore $u$ is a  weak
local minimum.
\end{proof}

In part (1) of \textrm{Theorem} \ref{thm}, the fact that the second variations must be nonnegative is a necessary condition for $u$ to be
a local minimum.
%%%%%%%%%%%%%%%%%%%%%%%%%%%%%%%%%%%%%%%%%%%%%%%%%%%%%%%%%%%%%%%%%%%%%%%%%%%%%%%
\subsection{Legendre necessary conditions}

According to \textrm{Theorem} \ref{thm},  if $u$ is a weak local minimum
for the functional $\mathcal{F},$ then       $\delta^2 \mathcal{F}(u,\phi) \geq 0$ for all
$\phi\in \mathcal{C}^{\infty}_0(\Lambda,U).$ Here we find some natural and nontrivial consequences of that condition.

 Construct nonzero functions $\psi_l$ by $\psi_0=1$ and for $l=1,2,  \cdots  ,s$
$$ \psi_l(y)=\left\{\begin{array}{lr}
0 & -\infty<y\leq -1\\
1-y^l \,\emph{\emph{sign}}(y) & -1\leq y\leq +1\\
0 & +1\leq y < +\infty
   \end{array}\right.$$
if $l$ is odd, and
$$ \psi_l(y)=\left\{\begin{array}{lr}
0 & -\infty<y\leq -1\\
1-y^l  & -1\leq y\leq +1\\
0 & +1\leq y < +\infty
   \end{array}\right.$$
if $l$ is even.

It is clear that $\psi_l\in \mathcal{C}^{\infty}\left(\mathbb{R}\setminus \{-1,1  \},\mathbb{R}\right)$
and satisfy $\psi_l(y)=0$ for all $y$ with $|y|>1,$
i.e. $\psi_l\in \mathcal{C}_0^{\infty}\left(\mathbb{R}\setminus \{-1,1  \},\mathbb{R}\right).$ Furthermore,
$\psi_l\in \mathcal{C}(\mathbb{R},\mathbb{R})$ with $\psi_l(-1)=\psi_l(1)=0.$ We also have
\begin{equation}\label{esc1}
 (\psi_l)_{(l)}[1](y)\in\{ -l!,0,+l! \}   \quad \forall\, y\in\mathbb{R}
\end{equation}
that is $\left(\psi_l\right)_{(l)}[1] $ is constant on $\mathbb{R}.$
Thus,
\begin{equation}\label{esc2}
 (\psi_l)_{(l+\nu)}[1](y)=0\quad \forall\,y\in\mathbb{R} ,\,\,  \nu\geq 1.
\end{equation}
Let $x_0=\left( x_0^1, \cdots   ,x_0^n   \right)\in \Lambda.$ Since $\Lambda$
is an open subset of $\mathbb{R}^n,$ there is $r_0>0$ such that
$B(x_0,r_0)=\{  x\in \mathbb{R}^n\,:\, \| x-x_0  \|<r_0   \}\subset \Lambda.$
Let $\xi=\left( \xi^1, \cdots  ,\xi^m    \right)\in \mathbb{R}^m$ and $0<\epsilon < \frac{r_0}{\sqrt{n}}.$
Set $\phi_l(x)=\left( \phi_l^1(x) ,    \cdots ,\phi_l^m(x) \right)\cdot\mathcal{X}_{B(x_0,r_0)}(x)$ where
$\mathcal{X}_{B(x_0,r_0)}$ is the characteristic function of the set $B(x_0,r_0)$ and
\begin{equation}\label{esc3}
 \phi_l^j(x)=\xi^j\epsilon^l\sum_{i=1}^n\psi_l\left(  \frac{x^i-x^i_0}{\epsilon} \right).
 \end{equation}
Clearly,  the support of $\phi_l$ is a compact contained in $\Lambda.$
We have for all $k\in \mathbb{N}$ and $h=1,2, \cdots  ,p_k$
\begin{equation}\label{esc5}
 \left( \phi^j_l\right)_{(k)}[h](x)=\xi^j\epsilon^{l-k}\left( \psi_l \right)_{(k)}[1]\left( \frac{x^{i(h)}-x_0^{i(h)}}{\epsilon} \right)
\end{equation}
for some  $i(h)\in\{ 1,2, \cdots ,n  \}.$
Using (\ref{esc1}) and (\ref{esc2}), we see that
\begin{equation}\label{esc7}
 \left( \phi^j_l\right)_{(l)}[h]\in \{ -\xi^j\,l!,0,+\xi^j\,l! \} \quad \forall\, h=1,2, \cdots  ,p_l
\end{equation}
and for $\nu\geq 1$
\begin{equation}\label{esc6}
 \left( \phi^j_l\right)_{(l+\nu)}[h]=0\quad \forall\, h=1,2, \cdots  ,p_{l+\nu}.
\end{equation}
Therefore, if $u$ is a weak local minimum for the functional $\mathcal{F},$  we have
for all $l=0,1,2,\cdots ,s$
\begin{equation}\label{esc8}
0\leq \delta^2 \mathcal{F}(u,\phi_l) =I_1+I_2+I_3,
\end{equation}
where
\begin{eqnarray}
I_1&=&\int_{\|x-x_0  \|\leq r_0}\sum_{j,j'=1}^{m}\sum_{k,k'=l+1}^{s} \sum_{h=1}^{p_k}\sum_{h'=1}^{p_{k'}} \nonumber\\
& &\left( \phi^j_l\right)_{(k)}[h](x)\left( \phi^{j'}_l\right)_{(k')}[h'](x)  \frac{\partial^2 L\left(x,u^{(s)}(x)\right)}{\partial u^j_{(k)}[h]  \partial u^{j'}_{(k')}[h']}  \,dx ;\label{esc9}
\end{eqnarray}
\begin{equation}\label{esc10}
I_2=\int_{\|x-x_0  \|\leq r_0}\sum_{j,j'=1}^{m}\sum_{h,h'=1}^{p_l}\left( \phi^j_l\right)_{(l)}[h](x)\left( \phi^{j'}_l\right)_{(l)}[h'](x)  \frac{\partial^2 L\left(x,u^{(s)}(x)\right)}{\partial u^j_{(l)}[h]  \partial u^{j'}_{(l)}[h']}  \,dx;
\end{equation}
\begin{eqnarray}
I_3&=&2\int_{\|x-x_0  \|\leq r_0}\sum_{j,j'=1}^{m}\sum_{k=0}^{l-1}\sum_{k'=k+1}^{l} \sum_{h=1}^{p_k}\sum_{h'=1}^{p_{k'}} \nonumber\\
& &\left( \phi^j_l\right)_{(k)}[h](x)\left( \phi^{j'}_l\right)_{(k')}[h'](x)  \frac{\partial^2 L\left(x,u^{(s)}(x)\right)}{\partial u^j_{(k)}[h]  \partial u^{j'}_{(k')}[h']}  \,dx .\label{esc11}
\end{eqnarray}
Of course, if $l=0$ there is not the integral $I_3.$ If $s=0$    $I_1$ and $I_3$ do not exist.
By (\ref{esc6}), we see that $I_1=0.$ Using (\ref{esc3}) and (\ref{esc5}) in (\ref{esc10}), we have
\begin{eqnarray}
I_2&=&\int_{\|x-x_0  \|\leq r_0}\sum_{j,j'=1}^{m}\xi^j\xi^{j'}\sum_{h,h'=1}^{p_l} \frac{\partial^2 L\left(x,u^{(s)}(x)\right)}{\partial u^j_{(l)}[h]  \partial u^{j'}_{(l)}[h']} \nonumber\\
& \times &  \left(\psi_l\right)_{(l)}[1]\left( \frac{x^{i(h)}-x_0^{i(h)}}{\epsilon} \right) \,\left(\psi_l\right)_{(l)}[1]\left( \frac{x^{i(h')}-x_0^{i(h')}}{\epsilon} \right)      dx .\label{esc12}
\end{eqnarray}
If we set $x=x_0+\epsilon\, y$ then $dx=\epsilon^n\,dy$ and using (\ref{esc1}), $I_2$ satisfies
\begin{equation}\label{esc13}
I_2\leq \epsilon^n (l!)^2 \int_{\|y  \|\leq  1} \sum_{j,j'=1}^{m} \xi^j\xi^{j'}\left(\sum_{h,h'=1}^{p_l} \frac{\partial^2 L\left(x_0+\epsilon \,y,u^{(s)}(x_0+\epsilon \,y)\right)}{\partial u^j_{(l)}[h]  \partial u^{j'}_{(l)}[h']}   \right)             dy.
\end{equation}
In a similar way, $I_3$ becomes
\begin{eqnarray}
I_3&=&2 \epsilon^n \,\epsilon^{2l-(k+k')}   \int_{\|y  \|\leq1}\sum_{j,j'=1}^{m}\xi^j\xi^{j'}\left(                 \sum_{k=0}^{l-1}\sum_{k'=k+1}^{l} \sum_{h=1}^{p_k}\sum_{h'=1}^{p_{k'}} \left(\psi_l\right)_{(k)}[1]\left( y^{i(h)} \right)\right.\nonumber\\
& & \times\left.\left(\psi_l\right)_{(k')}[1]\left( y^{i(h')} \right)
\frac{\partial^2 L\left(x_0+\epsilon \,y,u^{(s)}(x_0+\epsilon \,y)\right)}{\partial u^j_{(k)}[h]  \partial u^{j'}_{(k')}[h']}  \right)dy.\label{esc14}
\end{eqnarray}
Substituting (\ref{esc13}) and (\ref{esc14}) into (\ref{esc8}), we have
\begin{eqnarray}
0&\leq &\sum_{j,j'=1}^{m}\xi^j\xi^{j'}\left[ (l!)^2 \int_{\|y  \|\leq  1} \sum_{h,h'=1}^{p_l} \frac{\partial^2 L\left(x_0+\epsilon \,y,u^{(s)}(x_0+\epsilon \,y)\right)}{\partial u^j_{(l)}[h]  \partial u^{j'}_{(l)}[h']}              dy   \right.\nonumber\\
& + & \epsilon^{2l-(k+k')} \int_{\|y  \|\leq  1} \sum_{k=0}^{l-1}\sum_{k'=k+1}^{l} \sum_{h=1}^{p_k}\sum_{h'=1}^{p_{k'}} \left(\psi_l\right)_{(k)}[1]\left( y^{i(h)} \right)\nonumber\\
&\times &\left.\left(\psi_l\right)_{(k')}[1]\left( y^{i(h')} \right)
\frac{\partial^2 L\left(x_0+\epsilon \,y,u^{(s)}(x_0+\epsilon \,y)\right)}{\partial u^j_{(k)}[h]  \partial u^{j'}_{(k')}[h']} \,dy\right].
\label{esc15}
\end{eqnarray}
We have $\epsilon^{2l-(k+k')}\longrightarrow 0$ as $\epsilon \rightarrow 0$ since in $I_3,$ $2l-(k+k')\geq 1.$\\
 Therefore,
as $\epsilon \rightarrow 0,$ the second term in (\ref{esc15}) vanishes and it remains
\begin{equation}\label{esc16}
0 \leq  (l!)^2 \sum_{j,j'=1}^{m} \xi^j\xi^{j'}  \left(\sum_{h,h'=1}^{p_l} \frac{\partial^2 L\left(x_0,u^{(s)}(x_0)\right)}{\partial u^j_{(l)}[h]  \partial u^{j'}_{(l)}[h']}   \right)        \int_{\|y  \|\leq  1}     dy
\end{equation}
from which we deduce
\begin{equation}\label{esc17}
\sum_{j,j'=1}^{m}  \left(\sum_{h,h'=1}^{p_l} \frac{\partial^2 L\left(x_0,u^{(s)}(x_0)\right)}{\partial u^j_{(l)}[h]  \partial u^{j'}_{(l)}[h']}   \right)     \xi^j\xi^{j'}   \geq 0.
\end{equation}
Since $x_0\in \Lambda$ and  $\xi \in \mathbb{R}^m$ are arbitrary, we have proved the following theorem
 \begin{theorem}\label{thmm}
 Suppose that $L\in \mathcal{C}^2(\Lambda\times \Omega,\mathbb{R}),$ and $\overline{u}$ is a weak local minimum for $\mathcal{F}.$
Then for all $x\in \Lambda$ and $\xi=\left(\xi^1,  \cdots , \xi^m   \right)\in \mathbb{R}^m,$
\begin{equation}\label{esc18}
 \sum_{j,j'=1}^{m}  \left(\sum_{h,h'=1}^{p_l} \frac{\partial^2 L\left(x,\overline{u}^{(s)}(x)\right)}{\partial u^j_{(l)}[h]
 \partial u^{j'}_{(l)}[h']}   \right)     \xi^j\xi^{j'}   \geq 0 \qquad l=0,1,2,\cdots,s,
\end{equation}
i.e. for all $x\in \Lambda,$ the square matrices $A^{jj'}_{ll}\left(x,\overline{u}^{(s)}(x)\right),$ $l=0,1,2,\cdots,s,$
are positive semi-definite.
 \end{theorem}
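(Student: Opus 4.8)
The plan is to test the necessary inequality $\delta^2\mathcal{F}(\overline{u},\phi)\geq0$ from part~(1) of Theorem~\ref{thm} against a one-parameter family of compactly supported variations that, after rescaling, isolates the derivative block of order $l$. Fix $x_0\in\Lambda$, a vector $\xi=(\xi^1,\dots,\xi^m)\in\mathbb{R}^m$, and $r_0>0$ with $\overline{B(x_0,r_0)}\subset\Lambda$. For each $l\in\{0,1,\dots,s\}$ I would select a one-variable ``plateau'' profile $\psi_l$ that is continuous on $\mathbb{R}$, vanishes identically for $|y|\geq1$, has $l$-th derivative equal to the nonzero constant $\pm l!$ on $(-1,1)$, and has all derivatives of order $>l$ vanishing there --- for instance $\psi_l(y)=1-y^l$ when $l$ is even and $\psi_l(y)=1-y^l\operatorname{sign}(y)$ when $l$ is odd, on $[-1,1]$, extended by zero. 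Then I set
\begin{equation*}
\phi_l^j(x)=\xi^j\,\epsilon^l\sum_{i=1}^n\psi_l\!\left(\frac{x^i-x_0^i}{\epsilon}\right)\qquad(x\in B(x_0,r_0)),
\end{equation*}
with $0<\epsilon<r_0/\sqrt{n}$ and $\phi_l=0$ outside $B(x_0,r_0)$. The prefactor $\epsilon^l$ is chosen so that each differentiation along a single coordinate extracts a factor $\epsilon^{-1}$: thus $(\phi_l^j)_{(k)}[h]$ equals $\xi^j\,\epsilon^{\,l-k}$ times a derivative of $\psi_l$ when $[h]$ is a pure $k$-th order derivative and vanishes when $[h]$ is genuinely mixed, so it is a bounded nonzero constant times $\xi^j$ for $k=l$, is $O(\epsilon^{\,l-k})$ hence small for $k<l$, and vanishes outright for $k>l$.

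Next I would insert $\phi_l$ into the bilinear expression~(\ref{sse4}) for $\delta^2\mathcal{F}(\overline{u},\phi_l)$ and split the double sum over derivative orders $(k,k')$ into three parts: the part $I_1$ with $k,k'\geq l+1$, the part $I_2$ with $k=k'=l$, and the part $I_3$ with $\min(k,k')\leq l-1$. Since all derivatives of $\phi_l$ of order exceeding $l$ vanish, $I_1\equiv0$. After the change of variable $x=x_0+\epsilon y$ (so $dx=\epsilon^n\,dy$) and collecting the powers of $\epsilon$ produced by the derivatives, the integrand of $I_2$ carries the overall factor $\epsilon^n(l!)^2$, whereas each term of $I_3$ carries $\epsilon^n\,\epsilon^{\,2l-(k+k')}$ with exponent $2l-(k+k')\geq1$. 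Dividing the inequality $0\leq I_2+I_3$ by $\epsilon^n$ and letting $\epsilon\to0$, the $I_3$ contribution tends to $0$, while continuity of the second-order partials of $L$ on the compact set $\overline{B(x_0,r_0)}$ lets me pass to the limit inside the integral in $I_2$; what survives is
\begin{equation*}
0\leq(l!)^2\left(\int_{\|y\|\leq1}dy\right)\sum_{j,j'=1}^m\left(\sum_{h,h'=1}^{p_l}\frac{\partial^2L\big(x_0,\overline{u}^{(s)}(x_0)\big)}{\partial u^j_{(l)}[h]\,\partial u^{j'}_{(l)}[h']}\right)\xi^j\xi^{j'}.
\end{equation*}
Cancelling the strictly positive constant and using the arbitrariness of $x_0\in\Lambda$ and $\xi\in\mathbb{R}^m$ gives the asserted inequality~(\ref{esc18}) for every $l=0,1,\dots,s$, i.e.\ the positive semi-definiteness of the matrices $A^{jj'}_{ll}$.

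I expect the genuine obstacle to lie in the test-function bookkeeping rather than in any new idea. The decisive requirement is a legitimate variation $\phi_l$ whose order-$l$ derivatives form a nonzero constant vector while all its higher-order derivatives vanish --- this is exactly what the plateau profiles $\psi_l$ are designed to deliver --- but one must handle their limited smoothness (they are only $\mathcal{C}^{l-1}$ across the origin and across $\pm1$), so that $\delta^2\mathcal{F}(\overline{u},\phi_l)\geq0$ is either justified directly or recovered by approximating $\phi_l$ in an appropriate norm by functions in $\mathcal{C}^{\infty}_0(\Lambda,U)$; one must also check that only the pure $l$-th derivatives feed $I_2$ and confirm the exponent bound $2l-(k+k')\geq1$ that makes $I_3$ negligible in the limit. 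Once the scaling is arranged, discarding $I_1$, tracking the powers of $\epsilon$, and passing to the limit are routine, and in the case $l=s$ with $n=m=1$ the conclusion reduces to the classical Legendre necessary condition that $F_{u'u'}$ be positive semi-definite.
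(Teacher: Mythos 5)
Your proposal is correct and follows essentially the same route as the paper: the same plateau profiles $\psi_l$, the same localized test functions $\phi_l^j(x)=\xi^j\epsilon^l\sum_i\psi_l((x^i-x_0^i)/\epsilon)$, the same three-way split of $\delta^2\mathcal{F}(\overline{u},\phi_l)$ into $I_1,I_2,I_3$, and the same rescaling $x=x_0+\epsilon y$ with the exponent bound $2l-(k+k')\geq 1$ killing $I_3$ in the limit. Your remark about the limited smoothness of $\psi_l$ (and the need to approximate by genuinely $\mathcal{C}^\infty_0$ variations) is a point the paper itself glosses over, so flagging it is a welcome refinement rather than a deviation.
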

The inequalities in (\ref{esc18}) are called the general forms of Legendre conditions. They define by  \textrm{Theorem} \ref{thmm}
 new necessary conditions for $\overline{u}$ to be a weak local minimum of $\mathcal{F}.$ We  say that the function $\overline{u}$
satisfies the strict Legendre conditions if the matrices $A^{jj'}_{ll}\left(x,\overline{u}^{(s)}(x)\right),$ $l=0,1,2,\cdots,s,$
 are positive definite, uniformly for all $x\in \Lambda.$
 %%%%%%%%%%%%%%%%%%%%%%%%%%%%%%%%%%%%%%%%%%%%%%%%%%%%%%%%%%%%%%%%%%%%%%%%%%%%%%%%%%%%%%%%%%%%%%%%
\subsection{Relevant sufficient  conditions}

Part (2) of \textrm{Theorem} \ref{thm} gives us a sufficient condition for a function to be a minimum. However,
the conditions involving the second variations are not easy to satisfy. So, the  results of this subsection are useful as they
 imply the condition (\ref{sse9}).
 \begin{theorem}
 Suppose that $L\in \mathcal{C}^2(\Lambda\times \Omega,\mathbb{R}),$ and $u$ is a weak local extremum for $\mathcal{F}.$
 If the matrix     $A\left(x,u^{(s)}(x)\right)$ defined by (\ref{sse0}) is positive definite for all $x\in \Lambda,$
  then $u$ is a strict weak local minimum.
 \end{theorem}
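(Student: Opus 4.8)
The plan is to reduce the claim to part (2) of Theorem \ref{thm}: it suffices to exhibit a constant $k>0$ for which the second variation satisfies the coercivity estimate (\ref{sse9}), since $u$ is already assumed to be a weak local extremum. By (\ref{sse4}) together with Definition \ref{df2} (equation (\ref{sse6})), the second variation has the explicit quadratic form
\[
\delta^2\mathcal{F}(u,\phi)=\int_{\Lambda}\phi^{(s)}(x)\,A\bigl(x,u^{(s)}(x)\bigr)\,{}^{T}\phi^{(s)}(x)\,dx
\qquad\text{for all }\phi\in\mathcal{C}^{\infty}_{0}(\Lambda,U),
\]
where $A$ is the symmetric $q_s\times q_s$ matrix defined in (\ref{sse0}). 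So the whole matter is to pass from the pointwise positive definiteness of $A\bigl(x,u^{(s)}(x)\bigr)$ to a quantitative, uniform lower bound for this integrand.

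For each fixed $x\in\Lambda$, the matrix $A\bigl(x,u^{(s)}(x)\bigr)$ is symmetric by construction and positive definite by hypothesis, so Lemma \ref{lem3} yields a constant $k(x)>0$ with $v\cdot A\bigl(x,u^{(s)}(x)\bigr)v\ge k(x)\|v\|^2$ for every $v\in\mathbb{R}^{q_s}$; the sharpest such constant is the least eigenvalue $\lambda_{\min}(x)$ of $A\bigl(x,u^{(s)}(x)\bigr)$, which via the Rayleigh quotient $\lambda_{\min}(x)=\min_{\|v\|=1}v\cdot A\bigl(x,u^{(s)}(x)\bigr)v$ is a continuous, strictly positive function of $x$ on $\Lambda$ (the entries of $A$ depend continuously on $x$ because $L\in\mathcal{C}^2$ and $u\in\mathcal{C}_b^s(\Lambda,U)$). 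The hard part is then to replace $\lambda_{\min}(x)$ by a single constant $k:=\inf_{x\in\Lambda}\lambda_{\min}(x)>0$. I would argue exactly as in the proof of Theorem \ref{thmd}: on a bounded $\Lambda$, the continuous function $x\mapsto A\bigl(x,u^{(s)}(x)\bigr)$ extends continuously to the compact set $\overline{\Lambda}$, so $\lambda_{\min}$ attains its infimum, which is therefore strictly positive; this compactness step is the one place where more than the pointwise hypothesis is genuinely needed.

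Once such a $k>0$ is in hand, applying the pointwise inequality $\phi^{(s)}(x)\,A\bigl(x,u^{(s)}(x)\bigr)\,{}^{T}\phi^{(s)}(x)\ge k(x)\bigl\|\phi^{(s)}(x)\bigr\|^2\ge k\bigl\|\phi^{(s)}(x)\bigr\|^2$ under the integral sign gives
\[
\delta^2\mathcal{F}(u,\phi)\ \ge\ k\int_{\Lambda}\bigl\|\phi^{(s)}(x)\bigr\|^2\,dx
\qquad\text{for all }\phi\in\mathcal{C}^{\infty}_{0}(\Lambda,U),
\]
which is precisely hypothesis (\ref{sse9}) of Theorem \ref{thm}. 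Since $u$ is a weak local extremum, part (2) of that theorem applies and forces $u$ to be a strict weak local minimum, as claimed. I expect the coercivity/compactness step — extracting the uniform $k$ — to be the only substantive obstacle; the remaining manipulations are routine.
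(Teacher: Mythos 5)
Your proof is correct and follows essentially the same route as the paper: both reduce the claim to part (2) of Theorem \ref{thm} by establishing the coercivity estimate (\ref{sse9}) from the quadratic-form expression (\ref{sse4}) for $\delta^2\mathcal{F}(u,\phi)$ together with Lemma \ref{lem3}. If anything you are more careful than the paper, whose one-line proof invokes Lemma \ref{lem3} pointwise and silently asserts a single uniform constant $k>0$ valid for all $x\in\Lambda$; your compactness argument for $\inf_{x\in\Lambda}\lambda_{\min}(x)>0$ is precisely the missing justification, though, as you rightly flag, it genuinely requires $\Lambda$ to be bounded (with $A$ extending continuously to $\overline{\Lambda}$), a hypothesis the theorem as stated does not include.
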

 \begin{proof}
 By (\ref{sse4}), (\ref{sse6}) and \textrm{Lemma} \ref{lem3}, for all $\phi\in \mathcal{C}^{\infty}_{0}(\Lambda,U)$ we have
 $$\delta^2 \mathcal{F}(u,\phi)=\int_{\Lambda}\phi^{(s)}(x)\, A\left(x,u^{(s)}(x)\right) \, ^{T}\phi^{(s)}(x)\,dx  \geq k \int_{\Lambda} \left\|\phi^{(s)}(x) \right\|^2 dx $$
 for some $k>0.$ By part (2) of \textrm{Theorem} \ref{thm},  $u$ is a strict weak  minimum for $\mathcal{F}.$
 \end{proof}

 \begin{theorem}
Let $\Lambda$ be a bounded connected  subset of $X.$ Suppose that $L\in \mathcal{C}^2(\Lambda\times \Omega,\mathbb{R}),$ and $u$ is a weak local extremum for $\mathcal{F}.$
 If the matrix     $A\left(x,u^{(s)}(x)\right)$ defined by (\ref{sse0}) is semi-positive definite for all $x\in \Lambda,$
  then $u$ is a  weak local minimum.
 \end{theorem}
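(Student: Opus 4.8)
The plan is to reduce the claim to \textrm{Theorem} \ref{thmd}, which already asserts that a weak local extremum $u$ of $\mathcal{F}$ over a bounded connected $\Lambda$ is a weak local minimum as soon as $\delta^2\mathcal{F}(u,\phi)\geq 0$ for every $\phi\in\mathcal{C}^{\infty}_0(\Lambda,U)$. Thus it will suffice to verify that the pointwise positive semi-definiteness of the block matrix $A$ defined by (\ref{sse0}), evaluated along $u$, forces the second variation to be nonnegative in every admissible direction.

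First I would invoke the representation of the second variation obtained in (\ref{sse4})--(\ref{sse6}), namely
$$\delta^2\mathcal{F}(u,\phi)=\Phi''(0)=\int_{\Lambda}\phi^{(s)}(x)\,A\!\left(x,u^{(s)}(x)\right)\,{}^{T}\phi^{(s)}(x)\,dx,$$
which is valid for all $\phi\in\mathcal{C}^{\infty}_0(\Lambda,U)$ since $L\in\mathcal{C}^2(\Lambda\times\Omega,\mathbb{R})$. For each fixed $x\in\Lambda$ the vector $\phi^{(s)}(x)$ belongs to $\mathbb{R}^{q_s}$, and by hypothesis $A\!\left(x,u^{(s)}(x)\right)$ is a symmetric positive semi-definite $q_s\times q_s$ matrix; hence the integrand $\phi^{(s)}(x)\,A\!\left(x,u^{(s)}(x)\right)\,{}^{T}\phi^{(s)}(x)$ is $\geq 0$ at every point of $\Lambda$. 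Integrating this pointwise inequality over $\Lambda$ yields $\delta^2\mathcal{F}(u,\phi)\geq 0$ for every $\phi\in\mathcal{C}^{\infty}_0(\Lambda,U)$, which is exactly hypothesis (\ref{6sse9}) of \textrm{Theorem} \ref{thmd}.

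It then remains only to apply \textrm{Theorem} \ref{thmd} with this data: $\Lambda$ is bounded and connected, $L\in\mathcal{C}^2(\Lambda\times\Omega,\mathbb{R})$, $u\in\mathcal{C}_b^s(\Lambda,U)$ is a weak local extremum, and $\delta^2\mathcal{F}(u,\phi)\geq 0$ for all test directions $\phi$; the conclusion is precisely that $u$ is a weak local minimum for $\mathcal{F}$.

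I do not expect a genuine obstacle here: the argument is a one-step specialization of \textrm{Theorem} \ref{thmd}. The only subtlety worth flagging is the role of the boundedness of $\Lambda$. Unlike the strictly positive-definite case treated in the preceding theorem, the semi-definite estimate produces no coercive lower bound of the form $k\int_{\Lambda}\|\phi^{(s)}(x)\|^2dx$, so one cannot reuse part (2) of \textrm{Theorem} \ref{thm} through \textrm{Lemma} \ref{lem3}; one is forced instead to go through \textrm{Theorem} \ref{thmd}, whose proof relies on $\mathrm{mes}(\Lambda)<\infty$ to make the error term $t^2 e(0,t)$ negligible in the limit. Hence the hypothesis ``$\Lambda$ bounded'' in the statement is exactly the hypothesis consumed by that invocation, and there is nothing more to prove.
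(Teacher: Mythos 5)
Your argument is correct and coincides with the paper's own proof: both establish that the pointwise positive semi-definiteness of $A\left(x,u^{(s)}(x)\right)$ makes the integrand $\phi^{(s)}(x)\,A\left(x,u^{(s)}(x)\right)\,{}^{T}\phi^{(s)}(x)$ nonnegative, conclude $\delta^2\mathcal{F}(u,\phi)\geq 0$ from the representation (\ref{sse4})--(\ref{sse6}), and then invoke \textrm{Theorem} \ref{thmd}. Your closing remark correctly identifies why the boundedness of $\Lambda$ is needed, namely that it is consumed inside the proof of \textrm{Theorem} \ref{thmd} rather than in this reduction itself.
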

 \begin{proof}
 By   hypothesis,
 the function $ V(x;\phi)=\phi^{(s)}(x)\, A\left(x,u^{(s)}(x)\right) \, ^{T}\phi^{(s)}(x) $ is continuous and positive on $\Lambda$
 for all $\phi\in \mathcal{C}^{\infty}_{0}(\Lambda,U).$
 Therefore, by (\ref{sse4}) and (\ref{sse6}), for all $\phi\in \mathcal{C}^{\infty}_{0}(\Lambda,U),$ we have
 $$\delta^2 \mathcal{F}(u,\phi)=\int_{\Lambda}\phi^{(s)}(x)\, A\left(x,u^{(s)}(x)\right) \, ^{T}\phi^{(s)}(x)\,dx  \geq 0. $$
  Thus, by  \textrm{Theorem} \ref{thmd},  $u$ is a  weak  minimum for $\mathcal{F}.$
 \end{proof}
$ $\\

The second variation of $\mathcal{F}$ is given by
\begin{eqnarray}
\delta^2 \mathcal{F}(u,\phi)&=&\int_{\Lambda} \sum_{j,j'=1}^{m} \sum_{k,k'=0}^{s} \phi^j_{(k)}(x)\, A^{jj'}_{kk'}\left(x,u^{(s)}(x)\right) \, {}^{T}\phi^{j'}_{(k')}(x)      \,dx\nonumber\\
&=& \int_{\Lambda} \sum_{j,j'=1}^{m} \sum_{k=0}^{s} \left[ \phi^j_{(k)} \,A^{jj'}_{kk}\, {}^{T}\phi^{j'}_{(k)}  +2\sum_{{}^{k'=0}_{k'\neq k}}^{s} \phi^j_{(k)} \,A^{jj'}_{kk'} \,{}^{T}\phi^{j'}_{(k')}       \right] dx               \nonumber\\
 &=&I_1+2\,I_2, \label{esd1}
\end{eqnarray}
where  the  matrices $A^{jj'}_{kk'}\left(x,u^{(s)}(x)\right)$ are defined by (\ref{sse00}) and
\begin{equation}\label{esd2}
I_1=\int_{\Lambda} \sum_{j,j'=1}^{m} \sum_{k=0}^{s}  \phi^j_{(k)} \,A^{jj'}_{kk}\, {}^{T}\phi^{j'}_{(k)}\,dx;
\end{equation}
\begin{equation}\label{esd3}
I_2=\int_{\Lambda} \sum_{j,j'=1}^{m} \sum_{k=0}^{s} \sum_{{}^{k'=0}_{k'\neq k}}^{s} \phi^j_{(k)} \,A^{jj'}_{kk'} \,{}^{T}\phi^{j'}_{(k')} \,dx.
\end{equation}
Integral $I_1$ can be rewritten as
\begin{eqnarray}
I_1&=&\int_{\Lambda} \sum_{j=1}^{m}\left[ \sum_{k=0}^{s}  \phi^j_{(k)} \,A^{jj}_{kk}\, {}^{T}\phi^{j}_{(k)}+2 \sum_{{}^{j'=1}_{j'\neq j}}^{m}    \sum_{k=0}^{s}  \phi^j_{(k)} \,A^{jj'}_{kk}\, {}^{T}\phi^{j'}_{(k)}\right]dx\nonumber\\
& =&J_1+2\,J_2,
\end{eqnarray}
\begin{equation}\label{esd4}
J_1=\int_{\Lambda} \sum_{j=1}^{m} \sum_{k=0}^{s}  \phi^j_{(k)} \,A^{jj}_{kk}\, {}^{T}\phi^{j}_{(k)}\,dx;
\end{equation}
\begin{equation}\label{esd5}
J_2=\int_{\Lambda} \sum_{j=1}^{m} \sum_{{}^{j'=1}_{j'\neq j}}^{m}    \sum_{k=0}^{s}  \phi^j_{(k)} \,A^{jj'}_{kk}\, {}^{T}\phi^{j'}_{(k)}  \,dx.
\end{equation}
Thus, the second variation can be written as
\begin{equation}
\delta^2 \mathcal{F}(u,\phi)=I_1+2\,I_2=J_1+2\,J_2+2\,I_2=J_1+2\,(J_2+I_2).
\end{equation}
We can now  prove the following new sufficient condition.
\begin{theorem}\label{thmmm}
 Suppose that $L\in \mathcal{C}^2(\Lambda\times \Omega,\mathbb{R}),$ and $u$ is a weak local extrema for $\mathcal{F}.$
 If
 \begin{itemize}
 \item[(i)] $J_2+I_2\geq 0,$ and
  \item[(ii)] the square matrices $A^{jj'}_{kk}\left(x,u^{(s)}(x)\right)$  are positive definite for all $x\in \Lambda,$ i.e., satisfy the strict Legendre conditions,
 \end{itemize}
  then $u$ is a strict weak local minimum for $\mathcal{F}.$
 \end{theorem}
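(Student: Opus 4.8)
The plan is to combine hypotheses (i) and (ii) with the decomposition
\[
\delta^2 \mathcal{F}(u,\phi)=J_1+2\,(J_2+I_2)
\]
established just above, so as to verify the hypothesis (\ref{sse9}) of part (2) of \textrm{Theorem} \ref{thm}; the conclusion is then immediate. First I would invoke (i): since $J_2+I_2\geq 0$ for every $\phi\in \mathcal{C}^{\infty}_{0}(\Lambda,U)$, the decomposition gives at once
\[
\delta^2 \mathcal{F}(u,\phi)\;\geq\; J_1\;=\;\int_{\Lambda}\sum_{j=1}^m\sum_{k=0}^s \phi^j_{(k)}(x)\,A^{jj}_{kk}\!\left(x,u^{(s)}(x)\right)\,{}^{T}\phi^j_{(k)}(x)\,dx .
\]

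Next I would convert (ii) into an integral lower bound. For each fixed pair $(j,k)$ with $1\leq j\leq m$, $0\leq k\leq s$, the block $A^{jj}_{kk}(x,u^{(s)}(x))$ is a symmetric $p_k\times p_k$ matrix which, by the strict Legendre hypothesis, is positive definite uniformly in $x\in\Lambda$; this furnishes (compare \textrm{Lemma} \ref{lem3}) a constant $\kappa_{j,k}>0$, independent of $x$, with
\[
\phi^j_{(k)}(x)\,A^{jj}_{kk}\!\left(x,u^{(s)}(x)\right)\,{}^{T}\phi^j_{(k)}(x)\;\geq\;\kappa_{j,k}\,\bigl\|\phi^j_{(k)}(x)\bigr\|^2
\]
for all $x\in\Lambda$. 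Setting $\kappa=\min_{j,k}\kappa_{j,k}>0$ (a minimum over finitely many positive numbers), summing over $j$ and $k$, and using the identity $\bigl\|\phi^{(s)}(x)\bigr\|^2=\sum_{j=1}^m\sum_{k=0}^s\bigl\|\phi^j_{(k)}(x)\bigr\|^2$ coming from the ordering (\ref{sa3}) of the $q_s$-tuple $\phi^{(s)}$, I obtain $J_1\geq \kappa\int_{\Lambda}\|\phi^{(s)}(x)\|^2\,dx$, hence $\delta^2 \mathcal{F}(u,\phi)\geq \kappa\int_{\Lambda}\|\phi^{(s)}(x)\|^2\,dx$ for all $\phi\in \mathcal{C}^{\infty}_{0}(\Lambda,U)$. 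This is precisely (\ref{sse9}); since $u$ is assumed to be a weak local extremum, part (2) of \textrm{Theorem} \ref{thm} concludes that $u$ is a strict weak local minimum for $\mathcal{F}$.

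The one step requiring genuine care — not depth — is the passage from pointwise positive definiteness of the diagonal blocks $A^{jj}_{kk}$ to a single positivity constant $\kappa$ independent of both $x$ and $\phi$. This is exactly why (ii) must be read as the \emph{strict} Legendre conditions (uniform positive definiteness of those blocks) rather than mere pointwise positivity; with the uniform version in hand, $\kappa$ is produced as above by intersecting the finitely many block estimates. The remaining ingredients — the decomposition $\delta^2\mathcal{F}=J_1+2(J_2+I_2)$ and the appeal to \textrm{Theorem} \ref{thm}(2) — involve no further work, and in particular no boundedness of $\Lambda$ is needed here.
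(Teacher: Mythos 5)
Your proof is correct and follows essentially the same route as the paper's: both use the decomposition $\delta^2\mathcal{F}(u,\phi)=J_1+2(J_2+I_2)$, drop the $J_2+I_2$ term by (i), bound $J_1$ from below via Lemma \ref{lem3} applied blockwise with a minimum over the finitely many constants, and then invoke part (2) of Theorem \ref{thm}. Your explicit remark that the positive-definiteness in (ii) must be uniform in $x$ to extract a single constant is a point the paper's proof passes over silently, but it is consistent with the paper's definition of the strict Legendre conditions.
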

 \begin{proof}
 We have shown that
 \begin{equation}\label{esd6}
\delta^2 \mathcal{F}(u,\phi)=J_1+2\,(J_2+I_2),
\end{equation}
 where $J_1,$ $J_2$ and $I_2$ are defined by (\ref{esd4}), (\ref{esd5}) and (\ref{esd3}), respectively.
 By condition (ii), using the \textrm{Lemma} \ref{lem3}, there exist constants $\alpha^j_k>0$ such that
 \begin{eqnarray}
J_1&\geq&   \int_{\Lambda} \sum_{j=1}^{m} \sum_{k=0}^{s} \alpha^j_k \,\,\phi^j_{(k)} \, {}^{T}\phi^{j}_{(k)}\,dx       \nonumber\\
& \geq&    \alpha  \int_{\Lambda} \sum_{j=1}^{m} \sum_{k=0}^{s}  \sum_{h=1}^{p_k}  \left(\phi^j_{(k)}[h](x)\right)^2 dx       \, = \,\alpha \int_{\Lambda} \left\|\phi^{(s)}(x) \right\|^2 dx ,  \label{esd7}
\end{eqnarray}
 where $0<\alpha=\min\left\{ \alpha^j_k, \,1\leq j\leq m,\,0\leq k\leq s   \right\}.$
 By condition (i) and the inequality (\ref{esd7}), the second variation (\ref{esd6}) satisfies for all $\phi\in \mathcal{C}^{\infty}_{0}(\Lambda,U)$ the inequality
 \begin{equation}\label{esd8}
\delta^2 \mathcal{F}(u,\phi)\geq \alpha \int_{\Lambda} \left\|\phi^{(s)}(x) \right\|^2 dx.
\end{equation}
 Consequently, by the second part of \textrm{Theorem} \ref{thm}, $u$ is a weak  minimum for $\mathcal{F}.$
 \end{proof}
\begin{corollary}
Suppose that $L\in \mathcal{C}^2(\Lambda\times \Omega,\mathbb{R}),$  $u$ is a weak  extremum for $\mathcal{F}.$
 If
 \begin{itemize}
 \item[(a)] for all $k\neq k',$ the bilinear forms defined on $\mathbb{R}^{p_k}\times \mathbb{R}^{p_{k'}}$ by the matrices
 $A^{jj'}_{kk'}\left(x,u^{(s)}(x)\right)$ are positive for all $x\in \Lambda,$ and
  \item[(b)] the square matrices $A^{jj'}_{kk}\left(x,u^{(s)}(x)\right)$  are positive definite for all $x\in \Lambda,$
 \end{itemize}
  then $u$ is a strict weak local minimum for $\mathcal{F}.$
\end{corollary}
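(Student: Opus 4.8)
The plan is to deduce this corollary directly from Theorem \ref{thmmm} by checking that hypothesis (a) implies hypothesis (i) and hypothesis (b) implies hypothesis (ii) of that theorem. Hypothesis (b) is nothing but hypothesis (ii) --- the strict Legendre conditions on the blocks $A^{jj'}_{kk}$ --- so the only thing that really needs an argument is that (a) forces the inequality $J_2+I_2\geq 0$, with $J_2$ and $I_2$ as in (\ref{esd5}) and (\ref{esd3}). Once that is established, Theorem \ref{thmmm} applies verbatim and yields that $u$ is a strict weak local minimum for $\mathcal{F}$.

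To prove $J_2+I_2\geq 0$, I would start from the identity
\begin{eqnarray}
J_2+I_2&=&\int_{\Lambda}\left[\sum_{j=1}^{m}\sum_{{}^{j'=1}_{j'\neq j}}^{m}\sum_{k=0}^{s}\phi^j_{(k)}\,A^{jj'}_{kk}\,{}^{T}\phi^{j'}_{(k)}\right.\nonumber\\
& &\left.+\sum_{j,j'=1}^{m}\sum_{k=0}^{s}\sum_{{}^{k'=0}_{k'\neq k}}^{s}\phi^j_{(k)}\,A^{jj'}_{kk'}\,{}^{T}\phi^{j'}_{(k')}\right]dx ,\nonumber
\end{eqnarray}
which is just (\ref{esd3}) added to (\ref{esd5}); it then suffices to show that the bracketed integrand is nonnegative at every $x\in\Lambda$. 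The second multiple sum collects exactly the cross blocks $A^{jj'}_{kk'}$ with $k\neq k'$ that occur in hypothesis (a), so by (a) each product $\phi^j_{(k)}(x)\,A^{jj'}_{kk'}(x)\,{}^{T}\phi^{j'}_{(k')}(x)$ is $\geq 0$ and this part contributes nonnegatively. For the first multiple sum --- the $k=k'$, $j\neq j'$ cross terms --- I would pair the $(j,j')$ and $(j',j)$ summands and use the symmetry $A^{jj'}_{kk}={}^{T}(A^{j'j}_{kk})$ of the global matrix $A$ to rewrite each pair through the symmetric part of $A^{jj'}_{kk}$, whose sign is then governed by the positive (semi-)definiteness coming from (b). Adding the two contributions gives $J_2+I_2\geq 0$, hence hypothesis (i); hypothesis (ii) is (b); so Theorem \ref{thmmm} closes the argument.

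The step I expect to be the main obstacle is precisely this first multiple sum, the $J_2$ part with $j\neq j'$ and $k=k'$. There the blocks $A^{jj'}_{kk}$ need not be symmetric and the two derivative tuples $\phi^j_{(k)}$, $\phi^{j'}_{(k)}$ are distinct, so positive definiteness of $A^{jj'}_{kk}$ by itself does not pin down the sign of $\phi^j_{(k)}\,A^{jj'}_{kk}\,{}^{T}\phi^{j'}_{(k)}$; the argument must therefore be organized so that (b) is invoked only after the conjugate $(j,j')$-pairing has been carried out, or, equivalently, so that the whole $k$-th block of $A$ formed by the $A^{jj'}_{kk}$ is treated as a single positive (semi-)definite matrix and hypothesis (a) is read as also covering these $k=k'$ cross blocks. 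All the remaining ingredients --- the decomposition $\delta^2\mathcal{F}(u,\phi)=J_1+2(J_2+I_2)$ recorded in (\ref{esd6}) and the final appeal to Theorem \ref{thmmm} --- are routine bookkeeping.
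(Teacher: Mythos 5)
Your proposal takes the same route as the paper: reduce everything to Theorem \ref{thmmm} by checking that (b) is its hypothesis (ii) and that $J_2+I_2\geq 0$, with $I_2\geq 0$ obtained termwise from (a). The one step you flag as the main obstacle --- the sign of $J_2$ --- is exactly the step the paper does \emph{not} actually resolve: its proof disposes of $J_2$ in one line, asserting $J_2\geq 0$ ``since by condition (b) the integrand is always positive.'' That is precisely the non sequitur you describe. For $j\neq j'$ the summand $\phi^j_{(k)}\,A^{jj'}_{kk}\,{}^{T}\phi^{j'}_{(k)}$ is a bilinear form evaluated at two \emph{different} vectors, and positive definiteness of $A^{jj'}_{kk}$ constrains only the quadratic values $v\,A^{jj'}_{kk}\,{}^{T}v$; it says nothing about $v\,A^{jj'}_{kk}\,{}^{T}w$ for $v\neq w$ (take $A^{jj'}_{kk}$ to be the identity and $\phi^{j'}_{(k)}=-\phi^{j}_{(k)}$ to get a strictly negative term). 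So your diagnosis is correct, and the gap is present in the published argument as well.

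Neither of your two suggested repairs closes it, though. The $(j,j')$/$(j',j)$ pairing together with the symmetry $A^{j'j}_{kk}={}^{T}\!\left(A^{jj'}_{kk}\right)$ only doubles the same indefinite quantity, and reading the whole $k$-th diagonal block of $A$ as a single positive semi-definite matrix controls $\sum_{j,j'}\phi^j_{(k)}A^{jj'}_{kk}\,{}^{T}\phi^{j'}_{(k)}$ \emph{including} the $j=j'$ terms --- but the decomposition $\delta^2\mathcal{F}(u,\phi)=J_1+2(J_2+I_2)$ of (\ref{esd6}) needs the off-diagonal part $J_2$ of (\ref{esd5}) to be nonnegative on its own, since the diagonal part $J_1$ has already been spent on producing the coercivity constant $\alpha$. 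The honest fix is to strengthen hypothesis (a) so that the positivity of the bilinear forms is assumed for \emph{all} cross blocks with $(j,k)\neq(j',k')$, not merely those with $k\neq k'$; then $J_2\geq 0$ follows termwise exactly as $I_2$ does, and Theorem \ref{thmmm} finishes the proof. As both your proposal and the corollary are currently stated, the $J_2$ step remains a genuine gap.
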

\begin{proof}
It suffices to show that condition (i) in \textrm{Theorem} \ref{thmmm} is satisfied.
We have
\begin{equation}
J_2=\int_{\Lambda} \sum_{j=1}^{m} \sum_{{}^{j'=1}_{j'\neq j}}^{m}    \sum_{k=0}^{s} \sum_{h,h'=1}^{p_k}  \phi^j_{(k)}[h] \,A^{jj'}_{kk}[h,h']\, {}^{T}\phi^{j'}_{(k)}[h']  \,dx\geq 0
\end{equation}
since by condition (b) the integrand is always positive;
\begin{equation}
I_2=\int_{\Lambda} \sum_{j,j'=1}^{m} \sum_{k=0}^{s} \sum_{{}^{k'=0}_{k'\neq k}}^{s}\sum_{h=1}^{p_k} \sum_{h'=1}^{p_{k'}} \phi^j_{(k)}[h] \,A^{jj'}_{kk'}[h,h'] \,{}^{T}\phi^{j'}_{(k')}[h'] \,dx,
\end{equation}
since by condition (a) the integrand is always positive.
Therefore $J_2+I_2\geq 0$.
\end{proof}

%%%%%%%%%%%%%%%%%%%%%%%%%%%%%%%%%%%%%%%%%%%%%%%%%%%%%%%%%%%%%%%%%%%%%%%%%%%%%%%%%%%%%%%%%%%%%%%%%%%%%%%%%%%%%%%%%%%%%%%

\section{Applications}
To conclude this work, let us analyze some applications.
\begin{example}
Consider the problem of finding extremum point $u=u(x)$ with $x\in [a,b],$ of the functional $\mathcal{F}$ defined by
$$  \mathcal{F}(u)= \int_{a}^{b}\sqrt{1+u_x(x)^2}d x. $$
The Lagrangian of this functional is
$$ L\left(x,u^{(1)}\right) =\sqrt{1+u_x^2}. $$
The extremum must satisfy the Euler-Lagrange equation
$$  \frac{\partial L}{\partial u}-\frac{d}{d x} \left(\frac{\partial L}{\partial u_x} \right)=0  $$
which gives
$$  \frac{u_{x,x}}{\left( 1+u_x^2\right)^{\frac{3}{2}}} =0 .$$
The general solution of this equation is $u(x)=c_1 x+c_2 ,$ where $c_1$ and $c_2$ are constants determined
by the given  end point constraints. \\
Determine the matrix $A$ associated to the second variation of this problem.
$$  A=\left[\begin{array}{cc} A_{00}&  A_{01}\\
A_{10}&A_{11} \end{array}\right],$$
where
$$ A_{00}=\frac{\partial^2 L}{\partial u \partial u} =0,\quad A_{01}=\frac{\partial^2 L}{\partial u \partial u_x} =0,\quad A_{10}=\frac{\partial^2 L}{\partial u_x \partial u} =0,$$
$$A_{11}=\frac{\partial^2 L}{\partial u_x \partial u_x} =\frac{1}{\left( 1+u_x^2\right)^{\frac{3}{2}}}.   $$
Thus,
$$  A=\left[\begin{array}{cc} 0&  0\\
0&\frac{1}{\left( 1+u_x^2\right)^{\frac{3}{2}}} \end{array}\right].$$
It is clear that the matrix $A$ is positive semi-definite. Therefore, the found function $u,$ solution to the
Euler-Lagrange equation, is a minimum point to the functional $\mathcal{F}.$\\
Note here that the Legendre necessary conditions are well satisfied. Indeed, $A_{00}\geq0$ and $A_{11}\geq 0.$
\end{example}
\begin{example}
Consider the problem of finding extremum point $u=u(x)$ with $x\in [a,b],$ of the functional $\mathcal{F}$ defined by
$$  \mathcal{F}(u)= \int_{a}^{b}u(x)\sqrt{1+u_x(x)^2}d x. $$
The Lagrangian of this functional is
$$ L\left(x,u^{(1)}\right) =u\sqrt{1+u_x^2}. $$
The extremum must satisfy the Euler-Lagrange equation
$$  \frac{\partial L}{\partial u}-\frac{d}{d x} \left(\frac{\partial L}{\partial u_x} \right)=0  $$
which gives
$$  \frac{1+u_x^2-u u_{x,x}}{\left( 1+u_x^2\right)^{\frac{3}{2}}} =0 .$$
The general solution of this equation is $u(x)=c_1 \cosh\left(\frac{ x +c_2}{c_1}\right) ,$ where $c_1$ and $c_2$ are constants determined
by the given  end conditions. \\
Determine the matrix $A$ associated with the second variation of this problem.
$$  A=\left[\begin{array}{cc} A_{00}&  A_{01}\\
A_{10}&A_{11} \end{array}\right],$$
where
$$ A_{00}=\frac{\partial^2 L}{\partial u \partial u} =0,\quad A_{01}=\frac{\partial^2 L}{\partial u \partial u_x} =\frac{u_x}{\sqrt{1+u_x^2}},\quad A_{10}=\frac{\partial^2 L}{\partial u_x \partial u} =\frac{u_x}{\sqrt{1+u_x^2}},$$
$$A_{11}=\frac{\partial^2 L}{\partial u_x \partial u_x} =\frac{u}{\left( 1+u_x^2\right)^{\frac{3}{2}}}.   $$
Thus,
$$  A=\left[\begin{array}{cc} 0&  \frac{u_x}{\sqrt{1+u_x^2}}\\
\frac{u_x}{\sqrt{1+u_x^2}}&\frac{u}{\left( 1+u_x^2\right)^{\frac{3}{2}}} \end{array}\right].$$
It is clear that the matrix $A$ is neither positive semi-definite nor negative semi-definite (i.e., $-A$ is not positive semi-definite).
Hence, the found function $u,$ solution to the
Euler-Lagrange equation, is neither   a minimum point nor a maximum point to the functional $\mathcal{F}.$
Therefore, we can conclude that this function $u$ is an instable equilibrium point.
\end{example}
\begin{example}
Let $\Lambda$ be a connected subset of $\mathbb{R}^2.$ Consider the problem of finding
the function $\left(u^1,u^2\right),$ where $u^1=u^1\left(x^1,x^2\right),u^2=u^2\left(x^1,x^2\right)$
with $\left(x^1,x^2\right)\in \Lambda,$ which is an extremum of the functional $\mathcal{F}$ defined by
$$ \mathcal{F}\left(u^1,u^2\right)=\int_{\Lambda}  L\left(x^1,x^2,u^{1(1)}\left(x^1,x^2\right),u^{2(1)}\left(x^1,x^2\right) \right) dx^1 dx^2,$$
where the Lagrangian $L$ is
$$  L=\left(u^1 \right)^2+\left( u^2\right)^2 +\left(u^1_{x^1} \right)^2+\left( u^1_{x^2}\right)^2+\left( u^2_{x^1}\right)^2+
\left(u^2_{x^2} \right)^2+\frac{1}{2}\left(u^1u^2-u^1_{x^1}u^1_{x^2}-u^2_{x^1}u^2_{x^2}\right). $$
Extremum of the functional $\mathcal{F}$ must satisfy the Euler-Lagrange equations
\begin{eqnarray}
 \frac{\partial L}{\partial u^1}-\frac{\partial}{\partial x^1}\left( \frac{\partial L}{\partial u^1_{x^1}} \right)
 -\frac{\partial}{\partial x^2}\left( \frac{\partial L}{\partial u^1_{x^2}} \right)=0& &\nonumber\\
 \frac{\partial L}{\partial u^2}-\frac{\partial}{\partial x^1}\left( \frac{\partial L}{\partial u^2_{x^1}} \right)
 -\frac{\partial}{\partial x^2}\left( \frac{\partial L}{\partial u^2_{x^2}} \right)=0& &\nonumber
\end{eqnarray}
which give the system
\begin{eqnarray}
 2u^1+\frac{1}{2}u^2-2u^1_{2x^1}+u^1_{x^1x^2}-2u^1_{2x^2}=0& &\nonumber\\
 2u^2+\frac{1}{2}u^1-2u^2_{2x^1}+u^2_{x^1x^2}-2u^2_{2x^2}=0.& &\nonumber
\end{eqnarray}
The general solution to this system is
\begin{eqnarray}
 u^1\left(x^1,x^2\right)& =&   c_5e^{-\frac{\sqrt{5}}{2}x^1}+c_6e^{-\frac{\sqrt{5}}{2}x^2}   +  c_7e^{\frac{\sqrt{5}}{2}x^1}+c_8e^{\frac{\sqrt{5}}{2}x^2}   \nonumber\\
 &-&\left(c_1e^{-\frac{\sqrt{3}}{2}x^1}+c_2e^{-\frac{\sqrt{3}}{2}x^2}+c_3e^{\frac{\sqrt{3}}{2}x^1}+c_4e^{\frac{\sqrt{3}}{2}x^2} \right) \nonumber\\
 u^2\left(x^1,x^2\right)& =& c_1e^{-\frac{\sqrt{3}}{2}x^1}+c_2e^{-\frac{\sqrt{3}}{2}x^2}+c_3e^{\frac{\sqrt{3}}{2}x^1}+c_4e^{\frac{\sqrt{3}}{2}x^2}\nonumber\\
 &+&  c_5e^{-\frac{\sqrt{5}}{2}x^1}+c_6e^{-\frac{\sqrt{5}}{2}x^2}  + c_7e^{\frac{\sqrt{5}}{2}x^1}+c_8e^{\frac{\sqrt{5}}{2}x^2}   ,  \nonumber
\end{eqnarray}
where the constants $c_i$ are determined by the given boundary conditions.\\
Determine the matrix $A$ associated with the second variation of the functional $\mathcal{F}.$
$$  A=\left[\begin{array}{cc} A^{11}&  A^{12}\\
A^{21}&A^{22} \end{array}\right]$$
with $A^{jj'}$  defined by
$$  A^{jj'}=\left[\begin{array}{cc} A^{jj'}_{00}&  A^{jj'}_{01}\\
A^{jj'}_{10}&A^{jj'}_{11} \end{array}\right].$$
We have:
$$A^{11}_{00}=\frac{\partial^2 L}{\partial u^1 \partial u^1}=2,\quad A^{12}_{00}=\frac{\partial^2 L}{\partial u^1 \partial u^2}=\frac{1}{2},$$
$$A^{21}_{00}=\frac{\partial^2 L}{\partial u^2 \partial u^1}=\frac{1}{2},\quad A^{22}_{00}=\frac{\partial^2 L}{\partial u^2 \partial u^2}=2,     $$
$$A^{11}_{10}=\left[\begin{array}{c} \frac{\partial^2 L}{\partial u^1_{x^1} \partial u^1} \\ \frac{\partial^2 L}{\partial u^1_{x^2} \partial u^1}  \end{array}\right]  = \left[\begin{array}{c} 0 \\ 0  \end{array}\right], \quad
A^{12}_{10}=\left[\begin{array}{c} \frac{\partial^2 L}{\partial u^1_{x^1} \partial u^2} \\ \frac{\partial^2 L}{\partial u^1_{x^2} \partial u^2}  \end{array}\right]  = \left[\begin{array}{c} 0 \\ 0  \end{array}\right],$$
$$A^{21}_{10}=\left[\begin{array}{c} \frac{\partial^2 L}{\partial u^2_{x^1} \partial u^1} \\ \frac{\partial^2 L}{\partial u^2_{x^2} \partial u^1}  \end{array}\right]  = \left[\begin{array}{c} 0 \\ 0  \end{array}\right], \quad
A^{22}_{10}=\left[\begin{array}{c} \frac{\partial^2 L}{\partial u^2_{x^1} \partial u^2} \\ \frac{\partial^2 L}{\partial u^2_{x^2} \partial u^2}  \end{array}\right]  = \left[\begin{array}{c} 0 \\ 0  \end{array}\right],$$
$$A^{11}_{01}=\left[\begin{array}{cc} \frac{\partial^2 L}{\partial u^1 \partial u^1_{x^1}} & \frac{\partial^2 L}{\partial u^1 \partial u^1_{x^2}}  \end{array}\right]  = \left[\begin{array}{cc} 0 & 0  \end{array}\right], \quad
A^{12}_{01}=\left[\begin{array}{cc} \frac{\partial^2 L}{\partial u^1 \partial u^2_{x^1}} & \frac{\partial^2 L}{\partial u^1 \partial u^2_{x^2}}  \end{array}\right]  = \left[\begin{array}{cc} 0 & 0  \end{array}\right], $$
$$A^{21}_{01}=\left[\begin{array}{cc} \frac{\partial^2 L}{\partial u^2 \partial u^1_{x^1}} & \frac{\partial^2 L}{\partial u^2 \partial u^1_{x^2}}  \end{array}\right]  = \left[\begin{array}{cc} 0 & 0  \end{array}\right], \quad
A^{22}_{01}=\left[\begin{array}{cc} \frac{\partial^2 L}{\partial u^2 \partial u^2_{x^1}} & \frac{\partial^2 L}{\partial u^2 \partial u^2_{x^2}}  \end{array}\right]  = \left[\begin{array}{cc} 0 & 0  \end{array}\right], $$
$$A^{12}_{11}=\left[\begin{array}{cc} \frac{\partial^2 L}{\partial u^1_{x^1} \partial u^2_{x^1}} & \frac{\partial^2 L}{\partial u^1_{x^1} \partial u^2_{x^2}} \\
\frac{\partial^2 L}{\partial u^1_{x^2} \partial u^2_{x^1}} & \frac{\partial^2 L}{\partial u^1_{x^2} \partial u^2_{x^2}} \end{array}\right]  = \left[\begin{array}{cc} 0 & 0\\
0&0  \end{array}\right],$$
$$A^{21}_{11}=\left[\begin{array}{cc} \frac{\partial^2 L}{\partial u^2_{x^1} \partial u^1_{x^1}} & \frac{\partial^2 L}{\partial u^2_{x^1} \partial u^1_{x^2}} \\
\frac{\partial^2 L}{\partial u^2_{x^2} \partial u^1_{x^1}} & \frac{\partial^2 L}{\partial u^2_{x^2} \partial u^1_{x^2}} \end{array}\right]  = \left[\begin{array}{cc}0 & 0\\
0&0   \end{array}\right],  $$
$$A^{11}_{11}=\left[\begin{array}{cc} \frac{\partial^2 L}{\partial u^1_{x^1} \partial u^1_{x^1}} & \frac{\partial^2 L}{\partial u^1_{x^1} \partial u^1_{x^2}} \\
\frac{\partial^2 L}{\partial u^1_{x^2} \partial u^1_{x^1}} & \frac{\partial^2 L}{\partial u^1_{x^2} \partial u^1_{x^2}} \end{array}\right]  = \left[\begin{array}{cc} 2 & -\frac{1}{2}\\
-\frac{1}{2}&2  \end{array}\right], $$
$$A^{22}_{11}=\left[\begin{array}{cc} \frac{\partial^2 L}{\partial u^2_{x^1} \partial u^2_{x^1}} & \frac{\partial^2 L}{\partial u^2_{x^1} \partial u^2_{x^2}} \\
\frac{\partial^2 L}{\partial u^2_{x^2} \partial u^2_{x^1}} & \frac{\partial^2 L}{\partial u^2_{x^2} \partial u^2_{x^2}} \end{array}\right]  = \left[\begin{array}{cc} 2 & -\frac{1}{2}\\
-\frac{1}{2}&2 \end{array}\right],$$
which give
$$  A^{12}= \left[\begin{array}{ccc}\frac{1}{2}&0&0\\
  0&0&0\\
  0&0&0\end{array}\right]   =A^{21} , \quad
  A^{11}=\left[\begin{array}{ccc}2&0&0\\
  0&2&-\frac{1}{2}\\
  0&-\frac{1}{2}&2  \end{array}\right]    =A^{22} $$
and hence
$$ A= \left[\begin{array}{cccccc}2&0&0&\frac{1}{2}&0&0\\
0&2&-\frac{1}{2}&2&0&0\\
0&-\frac{1}{2}&2&0&0&0\\
\frac{1}{2}&0&0&2&0&0\\
0&0&0&0&2&-\frac{1}{2}\\
0&0&0&0&-\frac{1}{2}&2  \end{array}\right]   .$$
It is easy to see that the matrices $A^{jj'}_{kk},$ $k=0,1,$ are all semi-positive definite. This implies that
the Legendre necessary conditions for minimum point are satisfied. Furthermore, the matrix $A$ is positive definite.
Thus, we can well conclude that  the found solution $\left(u^1,u^2 \right)$ to the Euler-Lagrange equations is
effectively a minimum point for the functional $\mathcal{F}.$
\end{example}
\begin{example}
Let $\Lambda$ be a connected subset of $\mathbb{R}^2.$ Consider the problem of finding a function
$u=u\left(x^1,x^2\right)$ with $\left(x^1,x^2\right)\in \Lambda,$ which is an extremum  of the functional $\mathcal{F}$
defined by
$$  \mathcal{F}(u)=\int_{\Lambda}  L\left( x^1,x^2,u^{(2)}\left(x^1,x^2\right) \right)  dx^1dx^2   $$
whose Lagrangian $L$ is
\begin{eqnarray}
L&=&u^2+u^2_{x^1}+u^2_{x^2}+u_{2x^1}^2+u_{x^1x^2}^2+u_{2x^2}^2\nonumber\\
&-&\frac{1}{2}\left(u_{x^1}u_{x^2}+u_{2x^1}u_{x^1x^2}+u_{2x^1}u_{2x^2}+u_{x^1x^2}u_{2x^2}  \right)  .
\end{eqnarray}
The extremum must satisfy the Euler-Lagrange equation
\begin{eqnarray}
0&=&\frac{\partial L}{\partial u}-\frac{\partial}{\partial x^1}\left(\frac{\partial L}{\partial u_{x^1}} \right)
-\frac{\partial}{\partial x^2}\left(\frac{\partial L}{\partial u_{x^2}} \right)
+\frac{\partial}{\partial x^1}\frac{\partial}{\partial x^1}\left(\frac{\partial L}{\partial u_{2x^1}} \right)\nonumber\\
&+&\frac{\partial}{\partial x^1}\frac{\partial}{\partial x^2}\left(\frac{\partial L}{\partial u_{x^1x^2}} \right)
+\frac{\partial}{\partial x^2}\frac{\partial}{\partial x^2}\left(\frac{\partial L}{\partial u_{2x^2}} \right)
\end{eqnarray}
which gives the equation
$$  2u-2u_{2x^1}+u_{x^1x^2}-2u_{2x^2}+2u_{4x^1}-u_{3x^1x^2}+u_{2x^12x^2}-u_{x^13x^2}+2u_{4x^2}=0.    $$
The general solution to this equation is
\begin{eqnarray}
u\left(x^1,x^2\right)&=&e^{-\frac{\sqrt{3}}{2}x^1}\left[ c_1\cos\left(\frac{1}{2}x^1\right)+c_2\sin\left(\frac{1}{2}x^1\right) \right]\nonumber\\
&+&e^{\frac{\sqrt{3}}{2}x^1}\left[ c_3\cos\left(\frac{1}{2}x^1\right)+c_4\sin\left(\frac{1}{2}x^1\right) \right]\nonumber\\
&+&e^{-\frac{\sqrt{3}}{2}x^2}\left[ c_5\cos\left(\frac{1}{2}x^2\right)+c_6\sin\left(\frac{1}{2}x^2\right) \right]\nonumber\\
&+&e^{\frac{\sqrt{3}}{2}x^2}\left[ c_7\cos\left(\frac{1}{2}x^2\right)+c_8\sin\left(\frac{1}{2}x^2\right) \right],\nonumber
\end{eqnarray}
where the constants $c_i$ are determined by the boundary conditions.\\
Determine the matrix $A$ associated with the second variation of $\mathcal{F}:$
$$ A=\left[ \begin{array}{ccc}A_{00}&A_{01}&A_{02}\\
A_{10}&A_{11}&A_{12}\\
A_{20}&A_{21}&A_{22}     \end{array}\right], $$
where
$$   A_{00}=\frac{\partial^2 L}{\partial u \partial u}=2,\quad A_{02}=\left[\begin{array}{ccc}\frac{\partial^2 L}{\partial u \partial u_{2x^1}}&\frac{\partial^2 L}{\partial u \partial u_{x^1x^2}}&\frac{\partial^2 L}{\partial u \partial u_{2x^2}}  \end{array} \right]= \left[\begin{array}{ccc}  0&0&0\end{array} \right] ,    $$
$$ A_{10}=\left[\begin{array}{c}\frac{\partial^2 L}{\partial u_{x^1} \partial u}\\ \frac{\partial^2 L}{\partial u_{x^2} \partial u}  \end{array} \right]= \left[\begin{array}{c}  0\\0\end{array} \right] , \quad A_{01}=\left[\begin{array}{cc}\frac{\partial^2 L}{\partial u \partial u_{x^1}}&\frac{\partial^2 L}{\partial u \partial u_{x^2}}  \end{array} \right]= \left[\begin{array}{cc}  0&0\end{array} \right] ,     $$
$$ A_{11}=\left[\begin{array}{cc} \frac{\partial^2 L}{\partial u_{x^1} \partial u_{x^1}}&\frac{\partial^2 L}{\partial u_{x^1} \partial u_{x^2}}\\
\frac{\partial^2 L}{\partial u_{x^2} \partial u_{x^1}} &\frac{\partial^2 L}{\partial u_{x^2} \partial u_{x^2}}     \end{array} \right]=\left[\begin{array}{cc}  2&-\frac{1}{2}\\
-\frac{1}{2}&2     \end{array} \right],\quad
A_{20}= \left[\begin{array}{c} \frac{\partial^2 L}{\partial u_{2x^1} \partial u} \\
  \frac{\partial^2 L}{\partial u_{x^1x^2} \partial u}\\
  \frac{\partial^2 L}{\partial u_{2x^2} \partial u}\end{array} \right]  =\left[\begin{array}{c} 0\\0\\0   \end{array} \right]   ,$$
$$A_{12}=  \left[\begin{array}{ccc} \frac{\partial^2 L}{\partial u_{x^1} \partial u_{2x^1}}&\frac{\partial^2 L}{\partial u_{x^1} \partial u_{x^1x^2}}&\frac{\partial^2 L}{\partial u_{x^1} \partial u_{2x^2}} \\
  \frac{\partial^2 L}{\partial u_{x^2} \partial u_{2x^1}}&\frac{\partial^2 L}{\partial u_{x^2} \partial u_{x^1x^2}}&\frac{\partial^2 L}{\partial u_{x^2} \partial u_{2x^2}} \end{array} \right]=  \left[\begin{array}{ccc}  0&0&0\\0&0&0  \end{array} \right], $$
$$A_{21}=\left[\begin{array}{cc} \frac{\partial^2 L}{\partial u_{2x^1} \partial u_{x^1}}&\frac{\partial^2 L}{\partial u_{2x^1} \partial u_{x^2}}\\
 \frac{\partial^2 L}{\partial u_{x^1x^2} \partial u_{x^1}}&\frac{\partial^2 L}{\partial u_{x^1x^2} \partial u_{x^2}}\\
   \frac{\partial^2 L}{\partial u_{2x^2} \partial u_{x^1}}&\frac{\partial^2 L}{\partial u_{2x^2} \partial u_{x^2}}\end{array} \right]=\left[\begin{array}{cc}  0&0\\0&0\\0&0  \end{array} \right]   ,$$
$$       A_{22}=\left[\begin{array}{ccc}\frac{\partial^2 L}{\partial u_{2x^1} \partial u_{2x^1}}&\frac{\partial^2 L}{\partial u_{2x^1} \partial u_{x^1x^2}}&\frac{\partial^2 L}{\partial u_{2x^1} \partial u_{2x^2}} \\
 \frac{\partial^2 L}{\partial u_{x^1x^2} \partial u_{2x^1}}&\frac{\partial^2 L}{\partial u_{x^1x^2} \partial u_{x^1x^2}}&\frac{\partial^2 L}{\partial u_{x^1x^2} \partial u_{2x^2}}\\
 \frac{\partial^2 L}{\partial u_{2x^2} \partial u_{2x^1}}&\frac{\partial^2 L}{\partial u_{2x^2} \partial u_{x^1x^2}}&\frac{\partial^2 L}{\partial u_{2x^2} \partial u_{2x^2}}\end{array} \right]= \left[\begin{array}{ccc}  2&-\frac{1}{2}&-\frac{1}{2}\\
-\frac{1}{2}&2&-\frac{1}{2}\\
-\frac{1}{2}&-\frac{1}{2}&2\end{array} \right].   $$
Thus,
$$ A= \left[\begin{array}{cccccc} 2&0&0&0&0&0\\
0&2&-\frac{1}{2}&0&0&0\\
0&-\frac{1}{2}&2&0&0&0\\
0&0&0&2&-\frac{1}{2}&-\frac{1}{2}\\
0&0&0&-\frac{1}{2}&2&-\frac{1}{2}\\
0&0&0&-\frac{1}{2}&-\frac{1}{2}&2  \end{array} \right].$$
It is easy to see that the matrices $A_{kk},$ $k=0,1,2$ are all semi-positive definite. This implies that
the Legendre necessary conditions for minimum point are satisfied. Furthermore, the matrix $A$ is positive definite.
Thus, we can well conclude that  the found solution $u$ to the Euler-Lagrange equations is
effectively a minimum point for the functional $\mathcal{F}.$
\end{example}

\subsection*{Acknowledgments}
This work is partially supported by the ICTP through the
OEA-ICMPA-Prj-15. The ICMPA is in partnership with
the Daniel Iagolnitzer Foundation (DIF), France.

\label{lastpage-01}
\end{document}